\def\NL{\text{\rm NL}}
\def\PTIME{\text{\rm PTIME}}
\def\NP{\text{\rm NP}}
\def\ACZ{\rm AC_0}
\def\assign{ \text{:--} }
\newcommand\sem[1]{{[\![ #1 ]\!]}}
\newcommand{\db}[1]{\mathsf{db}(#1)}
\newcommand{\depth}[1]{\mathsf{depth}(#1)}
\newcommand{\scount}[1]{\mathsf{scount}(#1)}
\newcommand{\rank}[2]{\mathsf{rank}_{#2}(#1)}
\newcommand{\mi}[1]{\mathit{#1}}
\newcommand{\ins}[1]{\mathbf{#1}}
\newcommand{\adom}[1]{\mathsf{dom}(#1)}
\newcommand{\ra}{\rightarrow}
\newcommand{\dep}{\Sigma}
\newcommand{\sch}[1]{\mathsf{sch}(#1)}
\newcommand{\esch}[1]{\mathsf{edb}(#1)}
\newcommand{\isch}[1]{\mathsf{idb}(#1)}
\newcommand{\body}[1]{\mathsf{body}(#1)}
\newcommand{\head}[1]{\mathsf{head}(#1)}
\newcommand{\class}[1]{\mathsf{#1}}
\newcommand{\tup}[1]{\langle #1 \rangle}
\newcommand{\base}[1]{\mathsf{base}(#1)}
\newcommand{\var}[1]{\mathsf{var}(#1)}
\newcommand{\vr}[1]{\langle #1 \rangle}
\newcommand{\lsign}[1]{\mathsf{b}(#1)}
\newcommand{\lvar}[1]{\mathsf{v}(#1)}
\newcommand{\support}[1]{\mathsf{support}(#1)}
\newcommand{\arity}[1]{\mathsf{ar}(#1)}
\newcommand{\lfacts}[1]{\mathsf{LFacts}(#1)}
\newcommand{\why}[3]{\mathsf{why}(#1,#2,#3)}
\newcommand{\nrwhy}[3]{\mathsf{why}_{\mathsf{NR}}(#1,#2,#3)}
\newcommand{\mdwhy}[3]{\mathsf{why}_{\mathsf{MD}}(#1,#2,#3)}
\newcommand{\mtd}[3]{\mathsf{min\text{-}tree\text{-}depth}(#1,#2,#3)}
\newcommand{\mgd}[3]{\mathsf{min\text{-}dag\text{-}depth}(#1,#2,#3)}
\newcommand{\unwhy}[3]{\mathsf{why}_{\mathsf{UN}}(#1,#2,#3)}
\newcommand{\OMIT}[1]{}
\newcommand{\can}[1]{\mathsf{can}(#1)}
\newcommand{\LDAT}{\class{LDat}}
\newcommand{\NRDAT}{\class{NRDat}}
\newcommand{\DAT}{\class{Dat}}
\def\eqtree{\approx}
\newcommand{\downof}[2]{#1_{\downarrow#2}}
\newcommand{\gri}[2]{\mathsf{gri}(#1,#2)}
\newcommand{\downc}[3]{\mathsf{down}(#1,#2,#3)}
\newcommand{\quot}[1]{#1_{/\eqtree}}
\newcommand{\cq}[1]{\mathsf{cq}(#1)}
\newcommand{\cqeq}[1]{\mathsf{cq}^{\approx}(#1)}
\def\qed{\hfill{\qedboxempty}      
  \ifdim\lastskip<\medskipamount \removelastskip\penalty55\medskip\fi}
\def\qedboxempty{\vbox{\hrule\hbox{\vrule\kern3pt
                 \vbox{\kern3pt\kern3pt}\kern3pt\vrule}\hrule}}
\def\qedfull{\hfill{\qedboxfull}   
  \ifdim\lastskip<\medskipamount \removelastskip\penalty55\medskip\fi}
\def\qedboxfull{\vrule height 4pt width 4pt depth 0pt}
\newcommand{\markfull}{\qedboxfull}
\newenvironment{manualtheorem}[1]{%
  \manualtheoreminner
}{\endmanualtheoreminner}
\newenvironment{manualproposition}[1]{%
  \manualpropositioninner
}{\endmanualpropositioninner}
\newenvironment{manuallemma}[1]{%
  \manuallemmainner
}{\endmanuallemmainner}
\newtheorem{example}{Example}
\newtheorem{theorem}{Theorem}
\newtheorem{proposition}[theorem]{Proposition}
\newtheorem{lemma}[theorem]{Lemma}
\newtheorem{definitionAux}[theorem]{Definition}
\newenvironment{definition}{\begin{definitionAux}
	}{\end{definitionAux}}
\title{The Complexity of Why-Provenance for Datalog Queries}
\author{
	Marco Calautti$^1$\and
	Ester Livshits$^2$\and
	Andreas Pieris$^{2,3}$\and
	Markus Schneider$^2$\\
	\affiliations
	$^1$Department of Computer Science, University of Milan\\
	$^2$School of Informatics, University of Edinburgh\\
	$^3$Department of Computer Science, University of Cyprus\\[1mm]
	\emails 
	marco.calautti{@}unimi.it, ester.livshits{@}ed.ac.uk,
	apieris{@}inf.ed.ac.uk, m.schneider{@}ed.ac.uk
}
\begin{document}

\maketitle

\begin{abstract}
Explaining why a database query result is obtained is an essential task towards the goal of Explainable AI, especially nowadays where expressive database query languages such as Datalog play a critical role in the development of ontology-based applications. A standard way of explaining a query result is the so-called why-provenance, which essentially provides information about the witnesses to a query result in the form of subsets of the input database that are sufficient to derive that result. To our surprise, despite the fact that the notion of why-provenance for Datalog queries has been around for decades and intensively studied, its computational complexity remains unexplored. The goal of this work is to fill this apparent gap in the why-provenance literature. Towards this end, we pinpoint the data complexity of why-provenance for Datalog queries and key subclasses thereof. The takeaway of our work is that why-provenance for recursive queries, even if the recursion is limited to be linear, is an intractable problem, whereas for non-recursive queries is highly tractable. Having said that, we experimentally confirm, by exploiting SAT solvers, that making why-provenance for (recursive) Datalog queries work in practice is not an unrealistic goal.
\end{abstract}

\section{Introduction}\label{sec:introduction}

Datalog has emerged in the 1980s as a logic-based query language from Logic Programming and has been extensively studied since then~\cite{AbHV95}. The name Datalog reflects the intention of devising a counterpart of Prolog for data processing. It essentially extends the language of unions of conjunctive queries, which corresponds to the select-project-join-union fragment of relational algebra, with the important feature of recursion, much needed to express some natural queries.
Among numerous applications, Datalog has been heavily used in the context of ontological query answering. In particular, for several important ontology languages based on description logics and existential rules, ontological query answering can be reduced to the problem of evaluating a Datalog query (see, e.g.,~\cite{EOSTX12,BBGKM22}), which in turn enables the exploitation of efficient Datalog engines such as DLV~\cite{LPFEGPS06} and Clingo~\cite{GKKOSW16}.

As for any other query language, explaining why a result to a Datalog query is obtained is crucial towards explainable and transparent data-intensive applications. A standard way for providing such explanations to query answers is the so-called {\em why-provenance}~\cite{BuKT01}.
Its essence is to collect all the subsets of the input database that are sufficient to derive a certain answer. More precisely, in the case of Datalog queries, the why-provenance of an answer tuple $\bar t$ is obtained by considering all the possible proof trees $T$ of the fact ${\rm Ans}(\bar t)$, with ${\rm Ans}$ being the answer predicate of the Datalog query in question, and then collecting all the database facts that label the leaves of $T$. Recall that a proof tree of a fact $\alpha$ w.r.t.~a database $D$ and a set $\dep$ of Datalog rules forms a tree-like representation of a way for deriving $\alpha$ by starting from $D$ and executing the rules occurring in $\dep$~\cite{AbHV95}.

There are recent works that studied the concept of why-provenance for Datalog queries. In particular, there are theoretical studies on computing the why-provenance~\cite{DaAA13,DMRT14}, attempts to under-approximate the why-provenance towards an efficient computation~\cite{ZhSS20}, studies on the restricted setting of non-recursive Datalog queries~\cite{LeLG19}, attempts to compute the why-provenace by transforming the grounded Datalog rules to a system of equations~\cite{EsLS14}, and attempts to compute the why-provenance on demand via transformations to existential rules~\cite{ElKM22}.

Despite the above research activity on the concept of why-provenance for Datalog queries, to our surprise, there is still a fundamental question that remains unexplored: 

\medskip

\noindent {\em \textbf{Main Research Question:} What is the exact computational complexity of why-provenance for Datalog queries?}

\medskip

\noindent The goal of this work is to provide an answer to the above question. To this end, for a Datalog query $Q$, we study the complexity of the following algorithmic problem, dubbed $\mathsf{Why\text{-}Provenance}[Q]$: given a database $D$, an answer $\bar t$ to $Q$ over $D$, and a subset $D'$ of $D$, is it the case that $D'$ belongs to the why-provenance of $\bar t$ w.r.t.~$D$ and $Q$? Pinpointing the complexity of the above decision problem will let us understand the inherent complexity of why-provenance for Datalog queries w.r.t.~the size of the database, which is precisely what matters when using why-provenance in practice.

\medskip
\noindent \textbf{Our Contribution.} The takeaway of our complexity analysis is that explaining Datalog queries via why-povenance is, in general, an intractable problem. In particular, for a Datalog query $Q$, we show that $\mathsf{Why\text{-}Provenance}[Q]$ is in \NP, and there are queries for which it is \NP-hard.
We further analyze the complexity of the problem when $Q$ is linear (i.e., the recursion is restricted to be linear) or non-recursive, with the aim of clarifying whether the feature of recursion affects the inherent complexity of why-provenance.
We show that restricting the recursion to be linear does not affect the complexity, namely the problem is in \NP~and for some queries it is even \NP-hard. However, completely removing the recursion significantly reduces the complexity; in particular, we prove that the problem is in $\ACZ$.

It is clear that the notion of why-provenance for Datalog queries, and hence the problem $\mathsf{Why\text{-}Provenance}[Q]$, heavily rely on the notion of proof tree. However, as already discussed in the literature (see, e.g., the recent work~\cite{BBPT22}), there are proof trees that are counterintuitive since they represent unnatural derivations (e.g., a fact is used to derive itself, or a fact is derived in several different ways), and this also affects the why-provenance.
With the aim of overcoming this conceptual limitation of proof trees, we propose the class of unambiguous proof trees. All occurrences of a fact in such a proof tree must be proved via the same derivation. We then study the problem $\mathsf{Why\text{-}Provenance}[Q]$ focusing on unambiguous proof trees, and show that its complexity remains the same.
This should be perceived as a positive outcome as we can overcome the limitation of arbitrary proof trees without increasing the complexity.

We finally verify that unambiguous proof trees, apart from their conceptual advantage, also help to exploit off-the-shelf SAT solvers towards an efficient computation of the why-provenance for Datalog queries. In particular, we discuss a proof-of-concept implementation that exploits the state-of-the-art SAT solver Glucose (see, e.g.,~\cite{Audemard18}), and present encouraging results based on queries and databases that are coming from the Datalog literature.

\medskip

\noindent {\em An extended version with further details, as well as the experimental scenarios and the source code, can be found at https://gitlab.com/mcalautti/datalog-why-provenance.}

\section{Preliminaries}\label{sec:preliminaries}

We consider the disjoint countably infinite sets $\ins{C}$ and $\ins{V}$ of {\em constants} and {\em variables}, respectively. We may refer to constants and variables as {\em terms}. For brevity, given an integer $n > 0$, we may write $[n]$ for the set of integers $\{1,\ldots,n\}$.

\medskip

\noindent 
\textbf{Relational Databases.} A {\em schema} $\ins{S}$ is a finite set of relation names (or predicates) with associated arity. We write $R/n$ to say that $R$ has arity $n \ge 0$; we may also write $\arity{R}$ for $n$.
A {\em (relational) atom} $\alpha$ over $\ins{S}$ is an expression of the form $R(\bar t)$, where $R/n \in \ins{S}$ and $\bar t$ is an $n$-tuple of terms. By abuse of notation, we may treat tuples as the \emph{set} of their elements. A {\em fact} is an atom that mentions only constants.
%
%
A {\em database} over $\ins{S}$ is a finite set of facts over $\ins{S}$. The {\em active domain} of a database $D$, denoted $\adom{D}$, is the set of constants in $D$.

\medskip
\noindent
\textbf{Syntax and Semantics of Datalog Programs.} A {\em (Datalog) rule} $\sigma$ over a schema $\ins{S}$ is an expression of the form
\[
R_0(\bar x_0)\ \assign\ R_1(\bar x_1),\ldots,R_n(\bar x_n)
\]
for $n \geq 1$, where $R_i(\bar x_i)$ is a (constant-free) relational atom over $\ins{S}$ for $i \in \{0,\ldots,n\}$, and each variable in $\bar x_0$ occurs in $\bar x_k$ for some $k \in [n]$.
We refer to $R_0(\bar x_0)$ as the {\em head} of $\sigma$, denoted $\head{\sigma}$, and to the expression that appears on the right
of the\assign symbol as the {\em body} of $\sigma$, denoted $\body{\sigma}$, which we may treat as the set of its atoms.

A {\em Datalog program} over a schema $\ins{S}$ is defined as a finite set $\dep$ of Datalog rules over $\ins{S}$. 
%
A predicate $R$ occurring in $\dep$ is called {\em extensional} if there is no rule in $\dep$ having $R$ in its head, and {\em intentional} if there exists at least one rule in $\dep$ with $R$ in its head.
The {\em extensional (database) schema} of $\dep$, denoted $\esch{\dep}$, is the set of all extensional predicates in $\dep$, while the {\em intentional
schema} of $\dep$, denoted $\isch{\dep}$, is the set of all intensional predicates in $\dep$. Note that, by definition, $\esch{\dep} \cap \isch{\dep} = \emptyset$. The {\em schema} of $\dep$, denoted $\sch{\dep}$, is the set $\esch{\dep} \cup \isch{\dep}$, which is in general a subset of $\ins{S}$ since some predicates of $\ins{S}$ may not appear in $\dep$.

There are interesting fragments of Datalog programs that somehow limit the recursion and have been extensively studied in the literature.
A Datalog program $\dep$ is called {\em linear} if, for each rule $\sigma \in \dep$, there exists at most one atom in $\body{\sigma}$ over $\isch{\dep}$, namely $\body{\sigma}$ mentions at most one intensional predicate. Roughly, linear Datalog programs can have only linear recursion.
Another key fragment is the one that completely forbids recursion. A Datalog program $\dep$ is called {\em non-recursive} if its predicate graph, which encodes how the predicates of $\sch{\dep}$ depend on each other, is acyclic. Recall that the nodes of the predicate graph of  $\dep$ are the predicates of $\sch{\dep}$, and there is an edge from $R$ to $P$ if there is a rule of the form $P(\bar x)\ \assign\ \ldots, R(\bar y),\ldots$ in $\dep$.

An elegant property of Datalog programs is that they have three equivalent semantics: model-theoretic, fixpoint, and proof-theoretic~\cite{AbHV95}.
We proceed to recall the proof-theoretic semantics of Datalog programs since it is closer to the notion of why-provenance. To this end, we need the key notion of proof tree of a fact, which will anyway play a crucial role in our work.
For a database $D$ and a Datalog program $\dep$, let $\base{D,\dep} = \{R(\bar t) \mid R \in \sch{\dep} \text{ and } \bar t \in \adom{D}^{\arity{R}}\}$, the set of all facts that can be formed using predicates of $\sch{\dep}$ and terms of $\adom{D}$.

\begin{definition}[\textbf{Proof Tree}]\label{def:proof-tree}
	Consider a Datalog program $\dep$, a database $D$ over $\esch{\dep}$, and a fact $\alpha$ over $\sch{\dep}$. 
	A {\em proof tree of $\alpha$ w.r.t.~$D$ and $\dep$} is a finite labeled rooted tree $T=(V,E,\lambda)$, with $\lambda : V \ra \base{D,\dep}$, such that:
	\begin{enumerate}
		\item If $v \in V$ is the root, then $\lambda(v) = \alpha$.
		
		\item If $v \in V$ is a leaf, then $\lambda(v) \in D$.
		
		\item If $v \in V$ is a node with $n\ge1$ children $u_1,\ldots,u_n$, then there is a rule $R_0(\bar x_0)\ \assign\ R_1(\bar x_1),\ldots,R_n(\bar x_n) \in \dep$ and a function $h : \bigcup_{i \in [n]} \bar x_i \ra \ins{C}$ such that $\lambda(v) = R_0(h(\bar x_0))$, and $\lambda(u_i) = R_i(h(\bar x_i))$ for each $i \in [n]$. \hfill\markfull
	\end{enumerate} 
\end{definition}

Essentially, a proof tree of a fact $\alpha$ w.r.t.~$D$  and $\dep$ indicates that we can prove $\alpha$ using $D$ and $\dep$, that is, we can derive $\alpha$ starting from $D$ end executing the rules of $\dep$. An example, which will also serve as a running example throughout the paper, that illustrates the notion of proof tree follows.

\begin{example}\label{exa:proof-tree}
	Consider the Datalog program $\dep$ consisting of
	\begin{eqnarray*}
		A(x) &\emph{\assign}& S(x)\\
		A(x) &\emph{\assign}& A(y), A(z),T(y,z,x)
	\end{eqnarray*}
	that encodes the {\em path accessibility problem}~\cite{Cook74}. The predicate $S$ represents source nodes, $A$ represents nodes that are accessible from the source nodes, and $T$ represents accessibility conditions, that is, $T(y,z,x)$ means that if both $y$ and $z$ are accessible from the source nodes, then so is $x$.
	We further consider the database
	\[
	D\ =\ \{S(a), T(a,a,b),T(a,a,c), T(a,a,d), T(b,c,a)\}.
	\]
	A simple proof tree of the fact $A(d)$ w.r.t.~$D$ and $\dep$ follows:
	
	\centerline{\includegraphics[width=.27\textwidth]{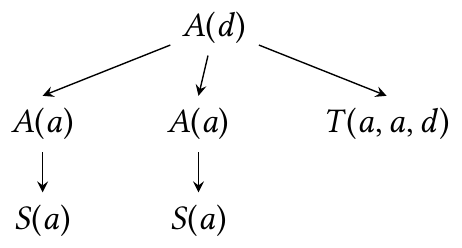}}
	
	\noindent The following is another, slightly more complex, proof tree of the fact $A(d)$ w.r.t.~$D$ and $\dep$:
	
	\centerline{
	\includegraphics[width=.48\textwidth]{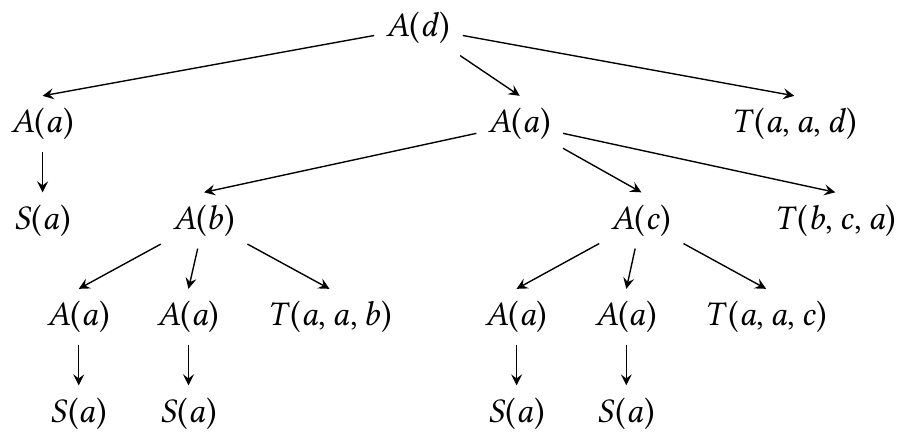}}

	\noindent Note that the above are only two out of the many proof trees of $A(d)$ w.r.t.~$D$ and $\dep$. In fact, there exist infinitely many as one can build larger and larger such proof trees: whenever we encounter a node labeled by $A(a)$, we can choose to apply the recursive rule instead of the rule $A(x)\ \emph{\assign}\ S(x)$. \hfill\markfull
\end{example}

Now, given a Datalog program $\dep$ and a database $D$ over $\sch{\dep}$, the {\em semantics of $\dep$ on $D$}, denoted $\dep(D)$, is the set
\[
\dep(D)\ =\ \{\alpha \mid \text{ there is a proof tree of } \alpha \text{ w.r.t. } D \text{ and } \dep\},
\]
that is, the set of facts that can be proven using $D$ and $\dep$.

\medskip
\noindent
\textbf{Datalog Queries.} Having the syntax and the semantics of Datalog programs in place, it is now straightforward to recall the syntax and the semantics of Datalog queries.
A {\em Datalog query} is a pair $Q = (\dep,R)$, where $\dep$ is a Datalog program and $R$ a predicate of $\isch{\dep}$.
We further call $Q$ {\em linear} (resp., {\em non-recursive}) if the program $\dep$ is linear (resp., non-recursive).
Now, for a database $D$ over $\esch{\dep}$, the {\em answer} to $Q$ over $D$ is defined as the set of tuples
\[
Q(D)\ =\ \{\bar t \in \adom{D}^{\arity{R}} \mid R(\bar t) \in \dep(D)\},
\]
i.e., the tuples $\bar t$ such that the fact $R(\bar t)$ can be proven using $D$ and $\dep$. The class that collects all the Datalog queries is denoted $\DAT$. We also write $\LDAT$ and $\NRDAT$ for the classes of linear and non-recursive Datalog queries, respectively.

\section{Why-Provenance for Datalog Queries}\label{sec:why-provenance}

As already discussed in the Introduction, why-provenance is a standard way of explaining why a query result is obtained. It essentially collects all the subsets of the database (without unnecessary atoms) that allow us to prove (or derive) a query result. We proceed to formalize this simple idea, and then introduce the main problem of interest.

Given a proof tree $T = (V,E,\lambda)$ (of some fact w.r.t.~some database and Datalog program), the {\em support} of $T$ is the set
\[
\support{T}\ =\ \left\{\lambda(v) \mid v \in V \text{ is a leaf of } T\right\},
\]
which is essentially the set of facts that label the leaves of the proof tree $T$. Note that $\support{T}$ is a subset of the underlying database since, by definition, the leaves of a proof tree are labeled with database atoms.
The formal definition of why-provenance for Datalog queries follows.

\begin{definition}[\textbf{Why-Provenance for Datalog}]\label{def:why-provenance}
	Consider a Datalog query  $Q = (\dep,R)$, a database $D$ over $\esch{\dep}$, and a tuple $\bar t \in \adom{D}^{\arity{R}}$. The {\em why-provenance of $\bar t$ w.r.t.~$D$ and $Q$} is defined as the family of sets of facts 
	\[
	\{\support{T} \mid T \text{ is a proof tree of } R(\bar t) \text{ w.r.t. } D \text{ and } \dep\}
	\]
	which we denote by $\why{\bar t}{D}{Q}$. \hfill\markfull
\end{definition}

Intuitively speaking, a set of facts $D' \subseteq D$ that belongs to $\why{\bar t}{D}{Q}$ should be understood as a ``real'' reason why the tuple $\bar t$ is an answer to the query $Q$ over the database $D$, i.e., $D'$ explains why $\bar t \in Q(D)$. By ``real'' we mean that all the facts of $D'$ are really used in order to derive the tuple $\bar t$ as an answer. Here is a simple example of why-provenance.

\begin{example}\label{exa:why-provenance}
	Let $Q = (\dep,A)$, where $\dep$ is the program that encodes the path accessibility problem as in Example~\ref{exa:proof-tree}, and let $D$ be the database from Example~\ref{exa:proof-tree}. 
	It can be verified that the why-provenance of the unary tuple $(d)$ w.r.t.~$D$ and $Q$ consists of $\{S(a), T(a,a,d)\}$ and the database $D$ itself. The former set is actually the support of the first proof tree given in Example~\ref{exa:proof-tree}, while $D$ is the support of the second proof tree. 
	Recall that $A(d)$ has infinitely many proof trees w.r.t.~$D$ and $\dep$, whereas $\why{(d)}{D}{Q}$ contains only two sets. Thus, in general, there is no 1-1 correspondence between proof trees of a fact $R(\bar t)$ and members of the why-provenance of $\bar t$. \hfill\markfull
\end{example}

We would like to pinpoint the inherent complexity of the problem of computing the why-provenance of a tuple w.r.t.~a database and a Datalog query.
To this end, we need to study the complexity of  recognizing whether a certain subset of the database belongs to the why-provenance, that is, whether a candidate explanation is indeed an explanation.
This leads to the following algorithmic problem parameterized by a class $\class{C}$ of Datalog queries; $\class{C}$ can be, e.g., $\DAT$, $\LDAT$, or $\NRDAT$:

\smallskip

\begin{center}
	\fbox{\begin{tabular}{ll}
			{\small PROBLEM} : & $\mathsf{Why\text {-}Provenance[C]}$
			\\[2mm]
			{\small INPUT} : & A Datalog query $Q = (\dep,R)$ from $\class{C}$,\\
			& a database $D$ over $\esch{\dep}$,\\
			& a tuple ${\bar t} \in \adom{D}^{\arity{R}}$, and $D' \subseteq D$.\\[2mm]
			{\small QUESTION} : &  Does $D' \in \why{\bar t}{D}{Q}$?
	\end{tabular}}
\end{center}

\smallskip

Our goal is to study the above problem and pinpoint its complexity. We are actually interested in the {\em data complexity} of $\mathsf{Why\text {-}Provenance[C]}$, where the query $Q$ is fixed, and only the database $D$, the tuple $\bar t$, and $D'$ are part of the input, i.e., for each $Q = (\dep,R)$ from $\class{C}$, we consider the problem:

\smallskip

\begin{center}
	\fbox{\begin{tabular}{ll}
			{\small PROBLEM} : & $\mathsf{Why\text {-}Provenance}[Q]$
			\\[2mm]
			{\small INPUT} : & A database $D$ over $\esch{\dep}$,\\
			& a tuple ${\bar t} \in \adom{D}^{\arity{R}}$, and $D' \subseteq D$.\\[2mm]
			{\small QUESTION} : &  Does $D' \in \why{\bar t}{D}{Q}$?
	\end{tabular}}
\end{center}

\smallskip

%
\noindent By the typical convention, the problem $\mathsf{Why\text {-}Provenance[C]}$ is in a certain complexity class $C$ in data complexity if, for every query $Q$ from $\class{C}$, $\mathsf{Why\text {-}Provenance}[Q]$ is in $C$.
On the other hand, $\mathsf{Why\text {-}Provenance[C]}$ is hard for a certain complexity class $C$ in data complexity if there exists a query $Q$ from $\class{C}$ such that $\mathsf{Why\text {-}Provenance}[Q]$ is hard for $C$.


\section{Data Complexity of Why-Provenance}\label{sec:complexity}

The goal of this section is to pinpoint the data complexity of $\mathsf{Why\text {-}Provenance[C]}$, for each $\class{C} \in \{\DAT,\LDAT,\NRDAT\}$. As we shall see, the main outcome of our analysis is that  for recursive queries, even if the recursion is linear, the problem is in general intractable, whereas for non-recursive queries it is highly tractable.
We first focus on recursive queries.

\subsection{Recursive Queries}\label{sec:recursive-complexity}

We show the following complexity result:

\def\therecursivecomplexity{
$\mathsf{Why\text {-}Provenance[C]}$ is \NP-complete in data complexity, for each class $\class{C} \in \{\DAT,\LDAT\}$.
}

\begin{theorem}\label{the:recursive-complexity}
\therecursivecomplexity
\end{theorem}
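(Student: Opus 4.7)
The claim requires $\NP$-containment for $\DAT$ (which, via $\LDAT \subseteq \DAT$, yields the same for $\LDAT$) and $\NP$-hardness for $\LDAT$ (which, via the same inclusion, propagates to $\DAT$).

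For \emph{membership}, fix $Q = (\dep,R)$. The pivotal observation is that $\base{D,\dep}$ has polynomial size in $|D|$, since predicates and arities in $\sch{\dep}$ are determined by the fixed $Q$. A nondeterministic algorithm thus guesses a polynomial-size ``witness structure'' consisting of a set $\Delta$ of rule instantiations---triples $(\alpha,\sigma,h)$ with $\head{\sigma}h = \alpha$---and verifies in polynomial time: (a) $\Delta$ contains an instantiation with head $R(\bar t)$; (b) by selecting, for each intensional body atom, some instantiation of $\Delta$ with that atom as head, one obtains an acyclic DAG rooted at $R(\bar t)$; and (c) the set of extensional body atoms appearing in that DAG is exactly $D'$. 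The existence of such a $\Delta$ whenever $D' \in \why{\bar t}{D}{Q}$ is shown by extracting it from any proof tree $T$ with $\support{T} = D'$: for each intensional fact $\alpha$ appearing in $T$ we keep the instantiation at its deepest occurrence (this enforces strictly increasing depth along every DAG edge, hence acyclicity), and, in addition, for each $\beta \in D'$ we include the instantiation at the parent of some occurrence of $\beta$ in $T$ (to guarantee coverage). Since $|\base{D,\dep}|$ and $|D'|$ are both polynomial in $|D|$, so is $|\Delta|$.

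For \emph{hardness}, the plan is to exhibit a fixed linear Datalog query $Q_0 = (\dep_0,R)$ for which $\mathsf{Why\text{-}Provenance}[Q_0]$ is $\NP$-hard, by reduction from $\mathsf{3\text{-}SAT}$. The program $\dep_0$ uses a single linear recursive rule whose body contains one intensional atom plus a fixed pattern of extensional ``choice gadget'' atoms. Given a formula $\varphi$ over variables $x_1,\dots,x_n$ and clauses $C_1,\dots,C_m$, the database $D$ encodes: navigation facts laying out a chain of positions; two assignment-gadget facts $T_i,F_i$ per variable; and literal-witness facts $W_{j,\ell}$ for each literal $\ell$ of each clause $C_j$. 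Linearity forces every proof tree $T$ of the designated target atom $R(\bar t)$ to be a chain whose support $\support{T}$ is the union of the side atoms consumed along the chain. A fixed $D' \subseteq D$, built independently of any satisfying assignment, is designed so that the condition $\support{T} = D'$ for some chain $T$ forces (i) the chain to represent a well-defined assignment (exactly one of $T_i,F_i$ active per variable) and (ii) every clause to be witnessed by at least one literal consistent with that assignment. One then shows that $D' \in \why{\bar t}{D}{Q_0}$ iff $\varphi$ is satisfiable.

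The \emph{main obstacle} is the hardness reduction. Linearity forbids intensional branching, so the entire propositional bookkeeping must be embedded in how the extensional side atoms of a \emph{single} linear rule are shared across the chain. Designing the gadgets so that the constraint ``$\support{T} = D'$'' simultaneously rules out inconsistent assignments and yet requires every clause to be witnessed is the main technical content. On the membership side, the subtler point is the combinatorial argument that polynomially many instantiations always suffice, balancing cycle-avoidance (via deepest occurrences) against full coverage of $D'$ (via the additional parent-of-leaf instantiations).
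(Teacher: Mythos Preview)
Your decomposition matches the paper's, and your high-level membership strategy---guess a polynomial witness and verify---is the same. However, the specific construction of the witness set $\Delta$ has a real gap. Taking, for each intensional fact, the instantiation at its deepest occurrence in $T$, and adding one parent-instantiation per fact of $D'$, does \emph{not} always yield a $\Delta$ from which an acyclic DAG with support exactly $D'$ can be selected (with one DAG-node per element of $\Delta$). Consider the propositional program $R\assign A$;\ $A\assign B,E_2$;\ $B\assign A,E_1$;\ $B\assign A,E_3$;\ $A\assign G$ over $D=\{E_1,E_2,E_3,G\}$. The proof tree that applies $A\assign B,E_2$, then $B\assign A,E_1$, then $A\assign B,E_2$ again, then $B\assign A,E_3$, and closes with $A\assign G$ has support $D$, and your $\Delta$ consists of exactly these five distinct instantiations. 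But the single $\Delta$-node for $A\assign B,E_2$ has exactly one outgoing $B$-edge, so any rooted acyclic DAG built on $\Delta$ reaches only one of the two $B$-instantiations and hence misses either $E_1$ or $E_3$ from its support. A correct witness here needs \emph{two} DAG-nodes carrying the same instantiation $A\assign B,E_2$. The paper addresses exactly this phenomenon via proof DAGs whose nodes may repeat labels, bounding their size through three lemmas: reduce the depth of $T$ to polynomial (your deepest-occurrence idea is morally this step), then reduce its \emph{subtree count}---the number of non-isomorphic subtrees rooted at nodes with a given label---to polynomial, and finally fold into a DAG of size at most (number of facts)\,$\times$\,(subtree count)\,$\times$\,(max body size). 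The subtree-count step is precisely what your argument is missing; your parenthetical ``balancing cycle-avoidance against full coverage'' names the tension but does not resolve it.

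On hardness, what you have is a plan rather than a proof: neither the rules, nor the database, nor the correctness argument is actually given. The paper's reduction from $\mathsf{3SAT}$ uses eight explicit linear rules and a concrete database $D_\varphi$; the point you yourself flag as the ``main obstacle''---arranging that $\support{T}=D'$ simultaneously forces a consistent assignment and satisfaction of every clause---is where all the proof content lives, and it cannot be waved through.
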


Note that there is a striking difference between the problem of why-provenance and the problem of query evaluation, which is known to be in \PTIME~in data complexity; in fact, for linear Datalog queries it is in \NL~\cite{DEGV01}.
%
To prove Theorem~\ref{the:recursive-complexity}, it suffices to show that:
\begin{itemize}
	\item $\mathsf{Why\text {-}Provenance[\DAT]}$ is in \NP~in data complexity.
	\item $\mathsf{Why\text {-}Provenance[\LDAT]}$ is \NP-hard in data complexity.
\end{itemize}
The lower bound is established via a reduction from $\mathsf{3SAT}$. We actually devise a linear Datalog query $Q$, and provide a reduction from $\mathsf{3SAT}$ to $\mathsf{Why\text {-}Provenance}[Q]$.
%
Let us now discuss the key ingredients underlying the upper bound.
%
%
%
%
The central property is that whenever there is a proof tree $T$ that witnesses the fact that the given subset of the input database belongs to the why-provenance, then there is always a way to compactly represent $T$ as a polynomially-sized directed acyclic graph. This in turn leads to an easy guess-and-check algorithm that runs in polynomial time. We proceed to give further details for the above crucial property.

\medskip
\noindent
\textbf{Proof DAG.} 
We first introduce the notion of proof directed acyclic graph (DAG) of a fact, which is essentially a generalization of the notion of proof tree. Recall that a DAG $G$ is {\em rooted} if it has exactly one node, the {\em root}, with no incoming edges. A node of $G$ is a {\em leaf} if it has no outgoing edges.

\begin{definition}[\textbf{Proof DAG}]\label{def:proof-dag}
	Consider a Datalog program $\dep$, a database $D$ over $\esch{\dep}$, and a fact $\alpha$ over $\sch{\dep}$. 
	A {\em proof DAG of $\alpha$ w.r.t.~$D$ and $\dep$} is a finite labeled rooted DAG $G=(V,E,\lambda)$, with $\lambda : V \ra \base{D,\dep}$, such that:
	\begin{enumerate}
		\item If $v \in V$ is the root, then $\lambda(v) = \alpha$.
		
		\item If $v \in V$ is a leaf, then $\lambda(v) \in D$.
		
		\item If $v \in V$ has $n \geq 1$ outgoing edges $(v,u_1),\ldots,(v,u_n)$, then there is a rule $R_0(\bar x_0)\ \assign\ R_1(\bar x_1),\ldots,R_n(\bar x_n) \in \dep$ and a function $h : \bigcup_{i \in [n]} \bar x_i \ra \ins{C}$ such that $\lambda(v) = R_0(h(\bar x_0))$, and $\lambda(u_i) = R_i(h(\bar x_i))$ for $i \in [n]$. \hfill\markfull
	\end{enumerate} 
\end{definition}

The key difference between a proof tree and a proof DAG is that a proof DAG might reuse nodes to compactly represent a proof tree. This is shown by the following example.


\begin{example}\label{exa:proof-dag}
	Let $Q = (\dep,A)$, where $\dep$ is the program given in Example~\ref{exa:proof-tree}, and let $D$ be the database from Example~\ref{exa:proof-tree}. 
	A simple proof DAG of the fact $A(d)$ w.r.t.~$D$ and $\dep$ is
	
	\centerline{\includegraphics[width=.25\textwidth]{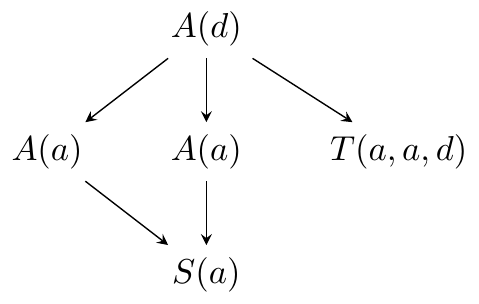}}
	
	\noindent which compactly represents the first proof tree given in Example~\ref{exa:proof-tree}.
	The following is another, slightly more complex, proof DAG of the fact $A(d)$ w.r.t.~$D$ and $\dep$:
	
	\centerline{
		\includegraphics[width=.25\textwidth]{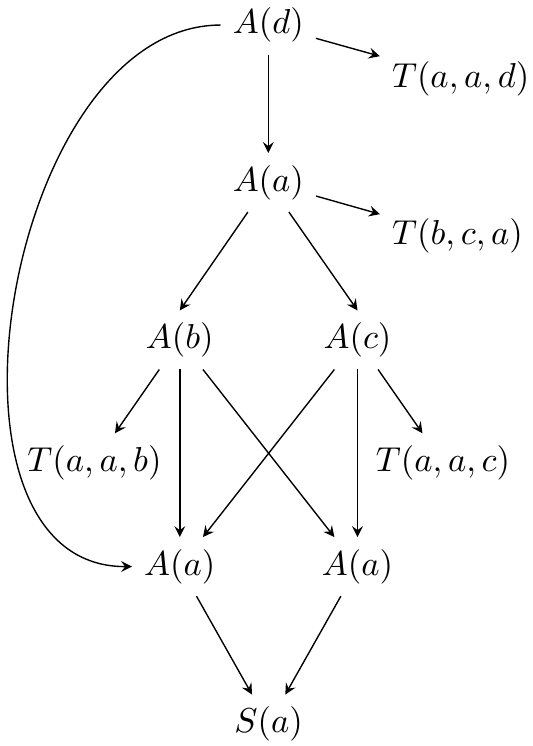}}
	
	\noindent It clearly represents the second proof tree from Example~\ref{exa:proof-tree}. \hfill\markfull
\end{example}

\medskip
\noindent
\textbf{Compact Representation of Proof Trees.} 
Given a proof DAG $G$ (of some fact w.r.t.~some database and Datalog program), we define its {\em support}, denoted $\support{G}$, as the set of facts that label the leaves of $G$. The key result follows:

\def\procharacterizationalltrees{
For a Datalog program $\dep$, there is a polynomial $f$ such that, for every database $D$ over $\esch{\dep}$, fact $\alpha$ over $\sch{\dep}$, and $D' \subseteq D$, the following are equivalent:
	\begin{enumerate}
		\item There exists a proof tree $T$ of $\alpha$ w.r.t.~$D$ and $\dep$ such that $\support{T} = D'$.
		\item There exists a proof DAG $G = (V,E,\lambda)$ of $\alpha$ w.r.t.~$D$ and $\dep$ such that $\support{G} = D'$ and $|V| \leq f(|D|)$.
	\end{enumerate}
}

\begin{proposition}\label{pro:characterization-all-trees}
	\procharacterizationalltrees
\end{proposition}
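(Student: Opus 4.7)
Direction $(2) \Rightarrow (1)$ is immediate by unfolding: given a proof DAG $G$ of $\alpha$ with $\support{G} = D'$, replicate each DAG node once per distinct path from the root to obtain a proof tree $T$; since every $\beta \in D'$ is reachable from the root of $G$, it appears as a leaf of $T$, so $\support{T} = D'$.

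For $(1) \Rightarrow (2)$ the plan exploits the fact that, for fixed $\dep$, both $|\base{D,\dep}|$ and the number of distinct rule applications (a rule together with a variable-to-constant assignment over $\adom{D}$) are polynomial in $|D|$. I would build $G$ in two layers. First, a canonical acyclic DAG $G^{\star}$ covering the entire fixpoint $\dep(D')$: for each non-base $\gamma \in \dep(D')$ fix a rule application whose body atoms have strictly smaller fixpoint rank than $\gamma$. This gives polynomial size, edges directed by decreasing rank, and leaves exactly $D'$; but the set of $D'$-leaves reachable from $\alpha$ in $G^{\star}$ may be strictly smaller than $D'$, which is precisely why a second layer is needed. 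Second, for each $\beta \in D'$ I would extract from $T$ a chain of (label, rule application) pairs $(\alpha,\rho_0), (\gamma_1,\rho_1), \ldots, (\gamma_{m_\beta-1},\rho_{m_\beta-1})$ ending at the leaf $\beta$, where each $\gamma_{j+1}$ is a body atom of $\rho_j$; a standard fixpoint argument on the facts that are derivable from $D'$ while using $\beta$ as a leaf shows that the chain can be chosen of polynomial length, and all off-chain body atoms are delegated to $G^{\star}$.

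To assemble $G$, I would take as root $(\alpha,\rho_\alpha)$ where $\rho_\alpha$ is the rule application at the root of $T$; because $\support{T} = D'$, every $\beta$ is reached in $T$ through one of the body atoms of $\rho_\alpha$, so the chains naturally fan out through the children of the root according to the partition induced by $T$. When two chains would require inconsistent rule applications at the same intermediate label, they are kept as separate DAG nodes tagged by the $\beta \in D'$ they serve; each fact thus carries at most $|D'|+1$ tags (one per chain, plus the fixpoint copy in $G^{\star}$), so $|V(G)| \leq |D'|\cdot\mathrm{poly}(|D|) + |G^{\star}|$, which is polynomial in $|D|$. Acyclicity is preserved because chain edges are oriented from $\alpha$ toward $\beta$, fixpoint edges are oriented by decreasing rank, and cross-edges leave only from chain nodes into $G^{\star}$. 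The main obstacle is verifying $\support{G} = D'$ exactly: the inclusion $D' \subseteq \support{G}$ comes from the per-$\beta$ chains, and the reverse inclusion from the fact that every leaf of $G$ either lies in $D'$ as a leaf of $G^{\star}$ or is a chain endpoint. The delicate step is to route the chains through the children of $\rho_\alpha$ so that they jointly cover $D'$, and this is exactly where the hypothesis that $T$ exists is used, since $T$ supplies the required partition of $D'$ among the body atoms of $\rho_\alpha$.
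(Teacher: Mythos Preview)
Your direction $(2)\Rightarrow(1)$ is fine; unravelling works exactly as in the paper.

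For $(1)\Rightarrow(2)$, there is a genuine gap in the assembly step. In a proof DAG, a node $v$ with rule application $\rho$ has \emph{exactly} as many outgoing edges as $\rho$ has body atoms: one child per body position. So once you fix $\rho_\alpha$ at the root, the root has children $c_1,\ldots,c_n$, each a single node. Now suppose two facts $\beta_1,\beta_2\in D'$ both lie below the same child $u_i$ in $T$ (so both chains must pass through $c_i$). Your chains were shortened \emph{independently} by a fixpoint argument, so the chain for $\beta_1$ may prescribe one rule application at $c_i$ while the chain for $\beta_2$ prescribes a different one. Your remedy is to tag the two copies of $c_i$ differently, but then the root would need two distinct children with label $c_i$ for that one body position, which the definition of proof DAG does not allow. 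The same problem recurs at every node where two chains have not yet separated. The ``partition induced by $T$'' tells you which $\beta$'s go through which $c_i$, but it does not reconcile the rule choices at $c_i$ when several $\beta$'s share it.

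The paper avoids this by \emph{first} reducing the depth of $T$ to a polynomial (Lemma~\ref{lem:depth-reduction}), using the observation that along any long root-to-leaf path one finds two nodes with the same label and the same support, so one can splice without changing $\support{T}$. After that, one freezes one root-to-leaf path per $\beta\in D'$ \emph{as it lies in the (now shallow) $T$}, and replaces each off-path sibling subtree by a canonical minimal-depth proof (Lemma~\ref{lem:scount-reduction}); this is morally your $G^{\star}$ idea, but the crucial difference is that the frozen paths are taken verbatim from a single shallow tree, so they automatically agree wherever they overlap. Only then is the tree compressed to a DAG by quotienting isomorphic subtrees (Lemma~\ref{lem:from-trees-to-dags}). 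Your construction can be repaired along these lines, but the repair \emph{is} the depth-reduction lemma: you cannot shorten the chains independently and still glue them under a single root.
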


It is easy to show that $(2)$ implies $(1)$ by ``unravelling'' the proof DAG $G$ of $\alpha$ w.r.t.~$D$ and $\dep$ into a proof tree $T$ of $\alpha$ w.r.t.~$D$ and $\dep$ with $\support{T} = \support{G}$. 
%
Now, the direction $(1)$ implies $(2)$ is rather non-trivial and requires a careful construction that converts a proof tree $T$ of $\alpha$ w.r.t.~$D$ and $\dep$ into a compact proof DAG $G$ of $\alpha$ w.r.t.~$D$ and $\dep$ such that $\support{T} = \support{G}$. This construction proceeds in three main steps captured by Lemmas~\ref{lem:depth-reduction},~\ref{lem:scount-reduction}, and~\ref{lem:from-trees-to-dags}.


\medskip 

$\bullet$ The \textbf{\textit{first step}} is to show that a proof tree $T$ of $\alpha$ w.r.t.~$D$ and $\dep$ with $\support{T} = D'$ can be converted into a proof tree $T'$ of $\alpha$ w.r.t.~$D$ and $\dep$ with $\support{T'} = D'$ that has ``small'' depth. Let us recall that the {\em depth} of a rooted tree $T$, denoted $\depth{T}$, is the length of the longest path from its root to a leaf node. The corresponding lemma follows:

\def\lemmadepthreduction{
	For each Datalog program $\dep$, there is a polynomial $f$ such that, for every database $D$ over $\esch{\dep}$, fact $\alpha$ over $\sch{\dep}$, and $D' \subseteq D$, if there exists a proof tree $T$ of $\alpha$ w.r.t.~$D$ and $\dep$ with $\support{T} = D'$, then there exists also such a proof tree $T'$ with $\depth{T'} \leq f(|D|)$.
}

\begin{lemma}\label{lem:depth-reduction}
\lemmadepthreduction
\end{lemma}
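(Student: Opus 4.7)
My plan is to start from the given proof tree $T$ of $\alpha$ with $\support{T} = D'$ and iteratively apply support-preserving local transformations that strictly shrink the tree, until none remain. The key local move is a \emph{shortcut}: whenever a node $v$ has a proper descendant $u$ with the same label $\beta \in \base{D,\dep}$, I replace the subtree at $v$ by the subtree at $u$. This yields a valid proof tree of $\alpha$ w.r.t.\ $D$ and $\dep$ that is strictly smaller, since the non-empty shell of nodes strictly between $v$ and $u$ disappears. I call such a shortcut \emph{useful} when the new support is still $D'$. As useful shortcuts strictly reduce size, exhaustively applying them must terminate in a proof tree $T'$ with $\support{T'} = D'$ in which no useful shortcut is available.

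The heart of the argument is then to bound how often any single label can repeat along a root-to-leaf path of such a $T'$. Fix a path and a label $\beta$ with occurrences $v_1,\ldots,v_k$ listed from root to leaf, and let $S_i$ be the support of the subtree rooted at $v_i$, so that $S_1 \supseteq S_2 \supseteq \cdots \supseteq S_k$. Let $R_i$ be the set of fact labels occurring at leaves of $T'$ outside the subtree at $v_i$, so that $R_i \cup S_i = D'$. The consecutive shortcut from $v_i$ to $v_{i+1}$ fails to be useful exactly when some $\gamma_i \in (S_i \setminus S_{i+1}) \setminus R_i$ witnesses a loss of support. Crucially, for any $j > i$ we have $\gamma_j \in S_j \subseteq S_{i+1}$ whereas $\gamma_i \notin S_{i+1}$, so the witnesses $\gamma_1,\ldots,\gamma_{k-1}$ are pairwise distinct elements of $D'$, forcing $k - 1 \leq |D'|$, i.e.\ $k \leq |D'|+1$.

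Putting these two observations together, each root-to-leaf path of $T'$ uses each label from $\base{D,\dep}$ at most $|D'|+1$ times, so its length is at most $|\base{D,\dep}| \cdot (|D'|+1) - 1$. Since $\dep$ is fixed, $|\base{D,\dep}|$ is at most $|\sch{\dep}| \cdot |\adom{D}|^a$ for $a$ the maximum arity in $\dep$, and $|D'| \leq |D|$, so this quantity is bounded by a polynomial $f(|D|)$, yielding $\depth{T'} \leq f(|D|)$. The main obstacle, resolved by the usefulness restriction, is to ensure that enough shortcuts remain available on overly long paths \emph{without} discarding facts needed in $D'$: the distinctness of the witnesses $\gamma_i$ shows precisely that whenever some label repeats more than $|D'|+1$ times on a single path, at least one consecutive pair must admit a support-preserving shortcut.
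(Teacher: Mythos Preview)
Your proof is correct and rests on the same core idea as the paper's: both shorten an overly long root-to-leaf path by replacing the subtree at one occurrence of a repeated label $\beta$ with the subtree at a deeper occurrence, and both use a pigeonhole argument on the decreasing chain $S_1 \supseteq S_2 \supseteq \cdots$ of subtree supports to guarantee that some such replacement preserves the global support $D'$. The paper organizes this as an induction on $\depth{T}$, arguing that among $k>|D|$ occurrences of $\beta$ two must have \emph{equal} subtree support (else $|S_1|\geq k>|D|$) and replacing one by the other; you instead first pass to a normal form admitting no support-preserving shortcut and then show that each failed consecutive shortcut $v_i\to v_{i+1}$ produces a witness $\gamma_i\in (S_i\setminus S_{i+1})\setminus R_i$, the $\gamma_i$ being pairwise distinct elements of $D'$, whence $k\leq |D'|+1$. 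Your ``useful shortcut'' condition (global support preserved) is slightly more permissive than the paper's (subtree supports equal), and your termination via total-size decrease avoids the paper's path-by-path elimination, but the underlying combinatorics and the resulting bound of order $|\base{D,\dep}|\cdot|D|$ are the same.
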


\smallskip 

$\bullet$ The \textbf{\textit{second step}} consists of proving that a proof tree $T$ of $\alpha$ w.r.t.~$D$ and $\dep$ with $\support{T} = D'$ of ``small'' depth can be converted into a proof tree $T'$ of $\alpha$ w.r.t.~$D$ and $\dep$ with $\support{T'} = D'$ of ``small'' subtree count.
Roughly speaking, the subtree count of a proof tree $T$ is the maximum number of different (w.r.t.~node-labels) subtrees of $T$ rooted at nodes with the same label. Let us formalize this notion.

Two rooted trees $T=(V,E,\lambda)$ and $T' = (V',E',\lambda')$ are {\em isomorphic}, denoted $T \eqtree T'$, if there is a bijection $h : V \ra V'$ such that, for each node $v \in V$, $\lambda(v) = \lambda'(h(v))$, and for each two nodes $u,v \in V$, $(u,v) \in E$ iff $(h(u),h(v)) \in E'$. It is clear that $\eqtree$ is an equivalence relation over the set of all rooted trees.
We further write $T[\alpha]$, for a fact $\alpha$, to denote the set of all subtrees of $T$ whose root is labeled with $\alpha$, i.e., $T[\alpha] = \{T[v] \mid v \in V \text{ and } \lambda(v) = \alpha\}$ with $T[v]$ being the subtree of $T$ rooted at $v$.
Let $\quot{T[\alpha]}$ be the quotient set of $T[\alpha]$ w.r.t.~$\eqtree$, i.e., the set of all equivalence classes of $T[\alpha]$ w.r.t.~$\eqtree$. In other words, each member of $\quot{T[\alpha]}$ is a maximal set of trees of $T[\alpha]$ that are labeled in exactly the same way.
Then, the {\em subtree count} of $T$, denoted $\scount{T}$, is $\max_{\alpha \in \{\lambda(v) \mid v \in V\}} \{|\quot{T[\alpha]}|\}$.
The key lemma follows:

\def\lemmascountreduction{
    For each Datalog program $\dep$ and a polynomial $f$, there is a polynomial $g$ such that, for every database $D$ over $\esch{\dep}$, fact $\alpha$ over $\sch{\dep}$, and $D' \subseteq D$, if there exists a proof tree $T$ of $\alpha$ w.r.t.~$D$ and $\dep$ such that $\support{T} = D'$ and $\depth{T} \leq f(|D|)$, then there exists also such a proof tree $T'$ with $\scount{T'} \leq g(|D|)$.
}

\begin{lemma}\label{lem:scount-reduction}
	\lemmascountreduction
\end{lemma}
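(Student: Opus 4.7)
My plan is to replace every ``inessential'' part of $T$ by a canonical proof tree, where the canonical trees are chosen uniformly across all facts so that nodes with the same label always have isomorphic subtrees. The remaining ``essential'' skeleton, which witnesses $\support{T} = D'$, will be of polynomial size, so the final subtree count will be polynomial. Concretely, for every fact $\gamma$ derivable from $D'$ via $\dep$, I define a canonical proof tree $C_\gamma$ of $\gamma$ with $\support{C_\gamma} \subseteq D'$ satisfying the invariant that for every node $v$ of $C_\gamma$ with $\lambda(v) = \gamma'$, the subtree $C_\gamma[v]$ is isomorphic to $C_{\gamma'}$. The construction uses the standard bottom-up fixpoint: starting from $I_0 = D'$, iteratively set $I_{k+1}$ by adding all one-step derivations from tuples of facts in $I_k$; assign each $\gamma \in I_\infty$ its derivation rank $k_\gamma$ and, for $k_\gamma \geq 1$, fix once and for all a single rule instance of $\gamma$ whose premises lie in $I_{k_\gamma - 1}$. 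This yields a directed acyclic ``master'' graph $M$ on at most $|\base{D,\dep}|$ nodes, which is polynomial in $|D|$ since $\dep$ is fixed. Let $C_\gamma$ be the (finite) unraveling of $M$ rooted at $\gamma$: by construction, every occurrence of a fact $\gamma'$ inside $C_\gamma$ is the root of a subtree that unravels to exactly $C_{\gamma'}$, so the invariant holds.

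\textbf{Construction of $T'$.} For each $\beta \in D'$, pick one leaf $\ell_\beta$ of $T$ labeled by $\beta$, which exists because $\support{T} = D'$, and let $E$ be the union of the root-to-$\ell_\beta$ paths in $T$ over all $\beta \in D'$. Since each such path has at most $f(|D|) + 1$ nodes and $|D'| \leq |D|$, we have $|E| \leq |D|(f(|D|)+1)$. Now form $T'$ from $T$ by keeping every node of $E$ together with the rule instance used at it in $T$ (hence its full list of children), and replacing, for each non-essential child $u$ of an essential node, the subtree of $T$ rooted at $u$ by a fresh copy of $C_{\lambda(u)}$. The fact $\lambda(u)$ is derivable from $D'$ because the leaves of the original subtree of $T$ at $u$ lie in $\support{T} = D'$, so $C_{\lambda(u)}$ is well-defined. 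The resulting $T'$ is a valid proof tree of $\alpha$ w.r.t.\ $D$ and $\dep$: the rule structure at each essential node is inherited from $T$, and that inside each canonical copy from $M$. Moreover $\support{T'} = D'$: the $\ell_\beta$ remain leaves of $T'$, giving $\support{T'} \supseteq D'$, and every other leaf of $T'$ is either a non-essential leaf of $T$ (already in $D'$) or a leaf of some $C_\gamma$ (in $D'$ by construction of $M$).

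\textbf{Bounding $\scount{T'}$.} Fix any label $\gamma'$ occurring in $T'$. Its occurrences partition into essential nodes in $E$ (of which there are at most $|E|$, contributing at most $|E|$ distinct $\eqtree$-classes) and nodes strictly inside some attached copy $C_{\lambda(u)}$. For the latter, the $T'$-subtree rooted at such a node is entirely contained in that copy and, by the canonical invariant, is isomorphic to $C_{\gamma'}$, so all such nodes together contribute a single additional class. Hence $|\quot{T'[\gamma']}| \leq |E| + 1$ for every $\gamma'$, giving $\scount{T'} \leq |D|(f(|D|)+1) + 1$, the desired polynomial $g(|D|)$. The main technical hurdle is maintaining the ``same label yields the same subtree'' invariant inside canonical trees \emph{while} keeping their leaves inside $D'$: this is exactly what the acyclic master graph $M$ seeded with $D'$ buys us, and it is what makes the bound on $\scount{T'}$ independent of how large or deep the individual canonical trees may be.
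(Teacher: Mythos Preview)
Your proof is correct and follows the same two-step strategy as the paper: (i) freeze a polynomial-size skeleton $E$ consisting of one root-to-leaf path per fact of $D'$, and (ii) replace every subtree hanging off that skeleton by a canonical proof tree with subtree count one.

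The one genuine difference lies in how the canonical replacements are built. The paper first isolates a standalone auxiliary lemma stating that any proof tree of a fact can be replaced by an \emph{unambiguous} one (every label has a unique subtree shape), constructed from smallest-depth subtrees of the given tree; it then applies this lemma \emph{independently} to each non-essential sibling, without requiring that replacements at different siblings be mutually consistent. You instead build a single master DAG $M$ over $\dep(D')$ via fixpoint ranks and unravel it to obtain \emph{globally consistent} canonical trees $C_\gamma$. This global consistency is precisely what lets you collapse all non-essential occurrences of any label $\gamma'$ into a single $\eqtree$-class, yielding $\scount{T'}\le |E|+1$, whereas the paper's per-sibling argument only gives $\scount{T'}\le |E|\cdot b$ (one class per skeleton node plus one per attached sibling subtree, where $b$ is the maximum body size). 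Both bounds are polynomial, so the difference is immaterial for the lemma's purpose, but your route is slightly tighter and more self-contained, avoiding the need for the separate auxiliary lemma.
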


\smallskip

$\bullet$ The \textbf{\textit{third step}} shows that a proof tree $T$ of $\alpha$ w.r.t.~$D$ and $\dep$ with $\support{T} = D'$ of ``small'' subtree count can be converted into a compact proof DAG $G$ of $\alpha$ w.r.t.~$D$ and $\dep$ with $\support{G} = D'$. Here is the corresponding lemma:

\def\lemmafromtreestodags{
	For each Datalog program $\dep$ and a polynomial $f$, there is a polynomial $g$ such that, for every database $D$ over $\esch{\dep}$, fact $\alpha$, and $D' \subseteq D$, if there is a proof tree $T$ of $\alpha$ w.r.t.~$D$ and $\dep$ with $\support{T} = D'$ and $\scount{T} \leq f(|D|)$, then there exists a proof DAG $G = (V,E,\lambda)$ of $\alpha$ w.r.t.~$D$ and $\dep$ with $\support{G} = D'$ and $|V| \leq g(|D|)$.
}

\begin{lemma}\label{lem:from-trees-to-dags}
\lemmafromtreestodags
\end{lemma}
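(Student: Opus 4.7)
The plan is to obtain $G$ as the quotient of $T$ under the subtree-isomorphism relation $\eqtree$. For each fact $\beta$ over $\sch{\dep}$ that labels some node of $T$, the hypothesis $\scount{T} \leq f(|D|)$ says that $T[\beta]$ splits into at most $f(|D|)$ equivalence classes under $\eqtree$. I introduce one DAG node $v_C$ labeled $\beta$ for each such class $C \in \quot{T[\beta]}$, collapsing all occurrences of mutually isomorphic $\beta$-rooted subtrees into a single vertex.

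To define the edges, fix any representative $T' \in C$. If $T'$ is a single node (so its label lies in $D' \subseteq D$), then $v_C$ is made a leaf. Otherwise, the root of $T'$ has children $u_1,\ldots,u_n$ witnessing the application of some rule $\sigma \in \dep$; I add outgoing edges from $v_C$ to $v_{C_1},\ldots,v_{C_n}$, where $C_i$ is the $\eqtree$-class of $T'[u_i]$. This choice is independent of the representative: any other $T'' \in C$ is isomorphic to $T'$, and the isomorphism bijects the children of the two roots while preserving subtree labellings, so the corresponding children fall into the same classes $C_1,\ldots,C_n$ in the same order. The local rule condition from Definition~\ref{def:proof-dag} is therefore inherited from $T$.

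Acyclicity follows from a simple rank argument: assign to $v_C$ the number of nodes in any subtree in $C$ (well-defined, since isomorphic trees have equal size), and observe that this quantity strictly decreases along every edge. Letting $C_0$ be the $\eqtree$-class of $T$ itself, the node $v_{C_0}$ has maximum rank $|T|$ and hence no incoming edges; restricting $G$ to the nodes reachable from $v_{C_0}$ yields a rooted DAG whose root is labeled $\alpha$. The leaves of $G$ are exactly the $v_C$ whose class consists of single-node subtrees of $T$, i.e., of leaves of $T$, and so $\support{G} = \{\lambda(v) \mid v \text{ is a leaf of } T\} = \support{T} = D'$.

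For the size bound, all labels appearing in $T$ lie in $\base{D,\dep}$, whose cardinality is polynomial in $|D|$ because $\dep$ (and therefore the maximum arity of any predicate) is fixed. Multiplying by the at-most-$f(|D|)$ classes per label yields $|V| \leq |\base{D,\dep}| \cdot f(|D|)$, from which a suitable polynomial $g$ depending on $\dep$ and $f$ can be read off. The one subtle point to verify carefully is the well-definedness of the edge construction, which ultimately boils down to the observation that $\eqtree$-isomorphic subtrees decompose into $\eqtree$-isomorphic subtrees at the children of their roots; the acyclicity and support claims are then immediate.
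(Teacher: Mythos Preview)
Your approach is the same as the paper's---collapse $T$ by the subtree-isomorphism equivalence $\eqtree$---but there is a genuine gap that the paper handles and you do not.

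Consider a node $v$ in $T$ with children $u_1,\ldots,u_n$ such that two of the subtrees $T[u_i]$ and $T[u_j]$ are $\eqtree$-isomorphic (for instance, apply the rule $A(x)\ \assign\ A(y),A(z),T(y,z,x)$ from Example~\ref{exa:proof-tree} with $y=z=a$: the two $A(a)$-children have identical subtrees). In your construction the edges out of $v_C$ go to $v_{C_1},\ldots,v_{C_n}$, but here $C_i = C_j$, so $v_{C_i} = v_{C_j}$ and you are attempting to add two parallel edges between the same pair of vertices. A proof DAG as defined in Definition~\ref{def:proof-dag} is a simple graph ($E$ is a set of pairs), so these collapse to a single edge; $v_C$ then has strictly fewer than $n$ outgoing edges and condition~(3) fails, because there is no rule in $\dep$ with the right (smaller) number of body atoms witnessing the step.

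The paper's proof addresses exactly this: for each equivalence class $C_\beta \in \quot{T[\beta]}$ it introduces not one node but $k$ copies $v^{C_\beta}_1,\ldots,v^{C_\beta}_k$, where $k$ is the maximum number of children of any single node of $T$ whose subtrees lie in $C_\beta$. The outgoing edges from (each copy of) a parent class then hit distinct copies, restoring condition~(3). This costs only an extra factor of $b$ (the maximum body size of a rule in $\dep$) in the node count, so $g(|D|) = |\base{D,\dep}| \cdot f(|D|) \cdot b$ suffices. Your acyclicity, rootedness, and support arguments all survive this fix unchanged.
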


\medskip

It is now clear that the direction (1) implies (2) of Proposition~\ref{pro:characterization-all-trees} is an immediate consequence of Lemmas~\ref{lem:depth-reduction},~\ref{lem:scount-reduction} and~\ref{lem:from-trees-to-dags}.

\subsection{Non-Recursive Queries}

We now focus on non-recursive Datalog queries, and show the following about the data complexity of why-provenance:

\def\thenonrecursivecomplexity{
	$\mathsf{Why\text {-}Provenance[\NRDAT]}$ is in $\ACZ$ in data complexity.
}

\begin{theorem}\label{the:non-recursive-complexity}
\thenonrecursivecomplexity
\end{theorem}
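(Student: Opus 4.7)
The plan is to exploit the fact that, for a fixed non-recursive Datalog query $Q = (\dep, R)$, every proof tree has size bounded by a constant depending only on $Q$, and then to encode membership in the why-provenance as a first-order sentence over the input structure, which places the problem in $\ACZ$ in data complexity.

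First I would bound the size of proof trees. Since $\dep$ is non-recursive, its predicate graph is acyclic, and so its longest directed path has some constant length $d$. Every leaf of a proof tree is labeled by an extensional atom of $D$, every internal node is labeled by an intensional atom, and along any root-to-leaf path the labels strictly descend the predicate graph; hence no proof tree w.r.t.~$D$ and $\dep$ has depth exceeding $d$. Combined with the constant bound $b$ on the maximum body size of a rule in $\dep$, every proof tree has at most $N = \sum_{i=0}^{d} b^i$ nodes, and in particular at most $b^d$ leaves. Thus the number of distinct tree ``shapes'' (i.e., a tree topology of depth $\leq d$ and branching $\leq b$, together with a choice of rule of $\dep$ at each internal node) is also bounded by a constant depending only on $Q$.

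Next I would use this to reduce the question to a first-order expressible one. Since $\support{T}$ is a subset of the leaf labels of $T$, if the input subset $D'$ satisfies $|D'| > b^d$ we can immediately reject, and checking $|D'| \leq b^d$ for a constant $b^d$ is an FO property. Otherwise I would express the existence of a suitable proof tree as a finite disjunction, over the constantly many shapes, of first-order formulas. Each disjunct existentially quantifies a bounded number of constants (drawn from $\adom{D}$) for the labels of the $\leq N$ nodes of the chosen shape, and then checks: (i) the root is labeled by $R(\bar t)$; (ii) for each internal node, the labels of the node and of its children form an instance of the rule selected at that node; (iii) each leaf's label is a fact of $D'$; and (iv) every fact of $D'$ is the label of some leaf of the guessed tree.

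The one step that I expect to require a little extra care is condition (iv), which on the face of it quantifies universally over all facts of $D'$. I would express it, for each predicate $R \in \esch{\dep}$, as $\forall \bar x \, (R^{D'}(\bar x) \to \bigvee_{\ell} \bar x = \bar y_\ell)$, where the disjunction ranges over the constantly many leaf positions $\ell$ of the guessed shape and $\bar y_\ell$ denotes the tuple of variables quantified for the label at $\ell$; alternatively, using the bound $|D'| \leq b^d$, one can existentially guess at most $b^d$ tuples from $D'$ and match each to some leaf label. Either way the overall sentence is first-order over the input vocabulary consisting of the relations of $\esch{\dep}$ (encoding $D$), the auxiliary relations $R^{D'}$ marking the facts of $D'$, and the constants of $\bar t$. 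Since every FO-definable problem over finite structures lies in $\ACZ$, the theorem follows.
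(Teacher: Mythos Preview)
Your proposal is correct and follows essentially the same route as the paper: both establish first-order rewritability by exploiting that proof trees for a fixed non-recursive query have constant size, so only a fixed finite collection of ``shapes'' (in the paper, equivalence classes of induced CQs $\cqeq{Q}$) must be enumerated, and for each one an FO formula checks that $D'$ is exactly the support. The paper's construction is slightly more economical---it works with the induced CQ of the support rather than the full tree labeling and evaluates the rewriting over $D'$ alone---whereas you existentially quantify all node labels and include $D$ in the input vocabulary (harmlessly, since your formula never actually consults $D$).
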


The above result is shown via {\em first-order rewritability}, i.e., given a non-recursive Datalog query $Q = (\dep,R)$, we construct a first-order query $Q_{\mi{FO}}$ such that, for every input instance of $\mathsf{Why\text {-}Provenance}[Q]$, namely a database $D$ over $\esch{\dep}$, a tuple $\bar t \in \adom{D}^{\arity{R}}$, and a subset $D'$ of $D$, the fact that $D'$ belongs to $\why{\bar t}{D}{Q}$ is equivalent to the fact that $\bar t$ is an answer to the query $Q_{\mi{FO}}$ over $D'$. Since first-order query evaluation is in $\ACZ$ in data complexity~\cite{Vardi95}, Theorem~\ref{the:non-recursive-complexity} follows.
Before delving into the details, let us first recall the basics about first-order queries.

\medskip
\noindent
\textbf{First-Order Queries.}
A {\em first-order (FO)} query $Q$ is an expression of the form $\varphi(\bar x)$, where $\varphi$ is an FO formula, $\bar x$ is a tuple of (not necessarily distinct) variables, and the set of variables occurring in $\bar x$ is precisely the set of free variables of $\varphi$.
The {\em answer} to $Q$ over a database $D$ is the set of tuples
$
Q(D) = \{\bar t \in \adom{D}^{|\bar x|} \mid D \models \varphi[\bar x/\bar t]\}, 
$
where $|\bar x|$ denotes the length of $\bar x$, $\varphi[\bar x/\bar t]$ is the sentence obtained after replacing the variables of $\bar x$ with the corresponding constants of $\bar t$, and $\models$ denotes the standard FO entailment. Let $\var{\varphi}$ be the set of variables occurring in $\varphi$.
A \emph{conjunctive query (CQ)} is an FO query $\varphi(\bar x)$, where $\varphi$ is of the form $\exists \bar y \, (R_1(\bar x_1) \wedge \cdots \wedge R_n(\bar x_n))$ with $\bar x \cap \bar y = \emptyset$ and $\bar x_i \subseteq \bar x \cup \bar y$.

\medskip
\noindent
\textbf{Some Preparation.} Towards the construction of the desired first-order query, we need some auxiliary notions. The {\em canonical form} of a fact $\alpha$, denoted $\can{\alpha}$, is the atom obtained by replacing each constant $c$ in $\alpha$ with a variable $\vr{c}$, i.e., the name of the variable is uniquely determined by the constant $c$.
Given a Datalog query $Q = (\dep,R)$, we say that a labeled rooted tree $T = (V,E,\lambda)$ is a {\em $Q$-tree} if it is the proof tree of some fact $R(\bar t)$ w.r.t.~some database $D$ over $\esch{\dep}$ and $\dep$. The notion of the induced CQ by a $Q$-tree follows:

\begin{definition}[\textbf{Induced CQ}]\label{def:induced-cq}
	Consider a Datalog query $Q = (\dep,R)$ and a $Q$-tree $T  = (V,E,\lambda)$, where $v \in V$ is the root node and $\lambda(v) = R(c_1,\ldots,c_n)$. The {\em CQ induced by $T$}, denoted $\cq{T}$, is the CQ $\varphi_{T}(\vr{c_1},\ldots,\vr{c_n})$ with 
	\[
	\varphi_T\ =\ \exists \bar x \left( \bigwedge_{\alpha \in \support{T}} \can{\alpha}\right),
	\]
	where $\bar x$ consists of all $\vr{c}$ for $c \in \adom{\support{T}} \setminus \{c_1,\ldots,c_n\}$. We let $\cq{Q} = \{\cq{T} \mid T \text{ is a $Q$-tree}\}$. \hfill\markfull
\end{definition}

In simple words, $\cq{T}$ is the CQ obtained by taking the conjunction of the facts that label the leaves of $T$ in canonical form, and then existentially quantify all the variables apart from those occurring in the canonical form of the fact that labels the root node of $T$.
Now, given two CQs $\varphi(\bar x)$ and $\psi(\bar y)$, we write $\varphi(\bar x) \eqtree \psi(\bar y)$ if they are isomorphic.
Clearly, $\approx$ is an equivalence relation over the set of CQs.
For a Datalog query $Q$, $\cq{Q}_{/\approx}$ is the quotient set of $\cq{Q}$ w.r.t.~$\approx$, i.e., the set of all equivalence classes of $\cq{Q}$ w.r.t.~$\approx$. Let $\cqeq{Q}$ be the set of CQs that keeps one arbitrary representative from each member of $\cq{Q}_{/\approx}$ .
Then:

\begin{lemma}\label{lem:finitely-many-cqs}
	For every non-recursive Datalog query $Q$, it holds that $\cqeq{Q}$ is finite.
\end{lemma}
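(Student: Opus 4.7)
The plan is to exploit non-recursiveness to bound the size of $Q$-trees, which in turn bounds the size of induced CQs, and then to invoke the standard combinatorial fact that there are only finitely many isomorphism classes of CQs over a fixed schema with a bounded number of atoms and variables.

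First, I would establish a uniform size bound on $Q$-trees. Since the predicate graph of $Q = (\dep,R)$ is acyclic, there exists a constant $d_Q$ bounding the length of any path in it. For a $Q$-tree $T$, consider any root-to-leaf path $v_0, v_1, \ldots, v_k$: each transition from $v_i$ to $v_{i+1}$ corresponds to an application of some rule in $\dep$ whose head predicate is that of $\lambda(v_i)$ and whose body contains the predicate of $\lambda(v_{i+1})$. Hence the sequence of predicates follows a path (reversed) in the predicate graph, and since that graph is acyclic, all predicates along the path are pairwise distinct, giving $\depth{T} \leq d_Q + 1$ (the extra $+1$ accommodates the switch from intensional to extensional at the leaf). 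Combined with a branching bound $b_Q = \max\{|\body{\sigma}| \mid \sigma \in \dep\}$, every $Q$-tree has at most $N_Q = \sum_{i=0}^{d_Q+1} b_Q^{\,i}$ nodes, a constant depending only on $Q$.

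Second, from this node bound I would derive bounds on the induced CQs. For any $Q$-tree $T$, the set $\support{T}$ contains at most $N_Q$ facts, each over a predicate of the finite schema $\esch{\dep}$ of arity at most $a_Q = \max\{\arity{P} \mid P \in \sch{\dep}\}$. Therefore $|\adom{\support{T}}| \leq a_Q \cdot N_Q$, so $\cq{T}$ has at most $N_Q$ atoms and at most $a_Q \cdot N_Q$ distinct variables (since the canonical form $\can{\cdot}$ replaces each constant $c$ bijectively with $\vr{c}$).

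Third, I would close with the combinatorial observation: up to isomorphism, the set of CQs using predicates from a finite schema, with at most $N_Q$ atoms and at most $a_Q \cdot N_Q$ variables (with a distinguished tuple of free variables), is finite, because an isomorphism class is determined by a choice of at most $N_Q$ predicate symbols and a pattern of variable occurrences on a set of bounded cardinality. Since every representative of $\cqeq{Q}$ comes from some $\cq{T}$ and hence satisfies these bounds, $\cqeq{Q}$ is finite. I do not anticipate a substantive obstacle; the only point deserving care is the justification that predicate sequences along root-to-leaf paths of proof trees are paths in the predicate graph, which is immediate from the definitions.
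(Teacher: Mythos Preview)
Your proposal is correct. The paper does not actually supply a proof of this lemma---it is stated without justification---and your argument (bounding proof-tree depth via acyclicity of the predicate graph, then branching, then counting isomorphism classes of bounded-size CQs over a fixed schema) is precisely the standard one that fills the gap. The only minor remark is that the ``$+1$'' in your depth bound is harmless but unnecessary: the leaf's extensional predicate is already a node of the predicate graph, so the predicate sequence along a root-to-leaf path of length $k$ yields a length-$k$ path in the predicate graph directly.
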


\medskip
\noindent
\textbf{First-Order Rewriting.} Having $\cqeq{Q}$ in place for a non-recursive Datalog query $Q = (\dep,R)$, we can now proceed with the construction of the desired FO query $Q_{\mi{FO}}$.

We start by constructing, for a CQ $\varphi(\bar y) \in \cqeq{Q}$, an FO query $Q_{\varphi(\bar y)} = \psi_{\varphi(\bar y)}(x_1,\ldots,x_{\arity{R}})$, where $x_1,\ldots,x_{\arity{R}}$ are distinct variables that do not occur in any of the CQs of $\cqeq{Q}$, with the following property: for every database $D$ and tuple $\bar t \in \adom{D}^{\arity{R}}$, $\bar t \in Q_{\varphi(\bar y)}(D)$ iff $\bar t$ is an answer to $\varphi(\bar y)$ over $D$, and, in addition, {\em all} the atoms of $D$ are used in order to entail the sentence $\varphi[\bar y/\bar t]$, i.e., there are no other facts in $D$ besides the ones that have been used as witnesses for the atoms occurring in $\varphi[\bar y/\bar t]$.
Assume that $\varphi$ is of the form $\exists \bar z \, (R_1(\bar w_1) \wedge \cdots \wedge R_n(\bar w_n))$. The formula $\psi_{\varphi(\bar y)}$, with free variables $x_1,\ldots,x_{\arity{R}}$, is of the form
\[
\exists \bar y \exists \bar z \left(\varphi_1\ \wedge\ \varphi_2\ \wedge\ \varphi_3\right),
\]
where each conjunct is defined as follows. We write $\bar x$ for the tuple $(x_1,\ldots,x_{\arity{R}})$ and $\bar u_P$, where $P$ is a predicate, for the tuple of  variables $(u_1,\ldots,u_{\arity{P}})$. Furthermore, for two tuples of variables $\bar u = (u_1,\ldots,u_k)$ and $\bar v = (v_1,\ldots,v_k)$, $(\bar u = \bar v)$ is a shortcut for $\bigwedge_{i = 1}^{k} (u_i = v_i)$.
The formula $\varphi_1$ is
\[
\bigwedge\limits_{i \in [n]}\, R_i(\bar w_i)\ \wedge\ (\bar x = \bar y)\ \wedge\ \bigwedge\limits_{\substack{u,v \in \var{\varphi}, \\ u \neq v}} \neg (u = v)
\]
which states that each atom in $\varphi$ should be satisfied by assigning different values to different variables of $\varphi$.
The formula $\varphi_2$ is defined as
\[
\bigwedge\limits_{P \in \{R_1,\ldots,R_n\}} \neg \left(\exists \bar u_{P} \left(P(\bar u_P)\ \wedge\ \bigwedge\limits_{\substack{i \in [n], \\ R_i = P}}\, \neg(\bar w_i = \bar u_P)\right)\right)
\]
which essentially states that, for each predicate $P$ occurring in $\varphi$, the only atoms in the underlying database with predicate $P$ are those used as witnesses for the atoms of $\varphi$.
Finally, the formula $\varphi_3$ is defined as
\[
\bigwedge\limits_{P \in \esch{\dep} \setminus \{R_1,\ldots,R_n\}} \neg \left(\exists \bar u_P \, P(\bar u_P)\right)
\]
which expresses that there are no atoms in the underlying database with a predicate that does not appear in $\varphi$.
%

With the FO query $Q_{\varphi(\bar y)}$ for each CQ $\varphi(\bar y) \in \cqeq{Q}$ in place, it should be clear that the desired FO query $Q_{\mi{FO}}$ is defined as $\Phi(x_1\ldots,x_{\arity{R}})$, where $\Phi = \bigvee_{\varphi(\bar y) \in \cqeq{Q}} \psi_{\varphi(\bar y)}$ and the next technical result follows:

\def\lemmaforewriting{
		Given a non-recursive Datalog query $Q=(\dep,R)$, a database $D$ over $\esch{\dep}$, $\bar t \in \adom{D}^{\arity{R}}$, and $D' \subseteq D$, it holds that $D' \in \why{\bar t}{D}{Q}$ iff $\bar t \in Q_{\mi{FO}}(D')$.
}
\begin{lemma}\label{lem:fo-tree-equiv}
	\lemmaforewriting
\end{lemma}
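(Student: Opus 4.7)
The plan is to prove the two directions of the biconditional separately, exploiting the fact that $Q_{\mi{FO}} = \bigvee_{\varphi(\bar y) \in \cqeq{Q}} \psi_{\varphi(\bar y)}$, so it suffices to work with one disjunct at a time, and the natural correspondence between proof trees and their induced CQs.

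For the forward direction, I would start from a proof tree $T$ of $R(\bar t)$ w.r.t.\ $D$ and $\dep$ with $\support{T} = D'$, and observe that $\cq{T}$ is isomorphic to some representative $\varphi(\bar y) \in \cqeq{Q}$. I would then show that the map sending each canonical variable $\vr{c}$ to $c$, together with $\bar t$ for $\bar y$, witnesses $\bar t \in Q_{\varphi(\bar y)}(D')$. Each of the three conjuncts of $\psi_{\varphi(\bar y)}$ must be verified: $\varphi_1$ holds because every atom of $\varphi$ instantiates to a leaf label of $T$ (hence lies in $\support{T} = D'$), and because the canonical-form construction assigns distinct variables to distinct constants, so the disequalities are honoured; $\varphi_2$ and $\varphi_3$ hold because $D' = \support{T}$ is exactly the set of leaf labels of $T$, so no additional facts—over predicates appearing in $\varphi$ or otherwise—can live in $D'$.

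For the backward direction, I would start from a witnessing assignment showing $\bar t \in Q_{\varphi(\bar y)}(D')$ for some $\varphi(\bar y) \in \cqeq{Q}$. By the definition of $\cqeq{Q}$, there exists a $Q$-tree $T_0$ with $\cq{T_0} \approx \varphi(\bar y)$. I would then build the desired proof tree $T$ by composing the witness assignment with the isomorphism to $\cq{T_0}$ and the canonical-form inverse, and applying the resulting constant substitution to every node label of $T_0$. Because each rule application in $T_0$ stays valid under any substitution (the applied rule and its internal variable-to-term binding simply compose with the new substitution), $T$ is a genuine proof tree of $R(\bar t)$ w.r.t.\ $D$ and $\dep$. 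The equality $\support{T} = D'$ then follows: $\varphi_1$ guarantees that every leaf label of $T$ lies in $D'$, while $\varphi_2$ and $\varphi_3$ together rule out any atom of $D'$ outside the image of the leaves of $T_0$.

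The delicate point, and the one I would treat most carefully, is the interaction between the distinctness clause $\bigwedge_{u \neq v} \neg(u = v)$ in $\varphi_1$ and the $\eqtree$-equivalence used to define $\cqeq{Q}$. The canonical form collapses repeated occurrences of a constant into a single variable, so two proof trees whose leaf-label multisets differ only by constant renaming map to isomorphic induced CQs and thus share the same representative in $\cqeq{Q}$; the disequalities in $\varphi_1$ then ensure that the FO-witnessing assignments track precisely these constant-identifications, preventing spurious FO matches on the forward side and, symmetrically, guaranteeing that every representative picked up on the backward side arises from an honest proof tree.
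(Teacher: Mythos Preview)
Your proposal is correct and follows essentially the same approach as the paper's proof: in both directions you move between a proof tree and the corresponding disjunct of $Q_{\mi{FO}}$ via the induced CQ and its $\eqtree$-representative in $\cqeq{Q}$, using a substitution (composed with the isomorphism) to transfer between them. The paper's own proof is in fact a brief sketch that defers the verification of the three conjuncts $\varphi_1,\varphi_2,\varphi_3$ to ``an easy exercise,'' so your more careful treatment of these—and in particular your discussion of why the distinctness clause interacts correctly with the canonical-form collapsing and the $\eqtree$-quotient—simply fills in details the paper omits.
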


\subsection{Refined Proof Trees}\label{sec:refined-trees}

The standard notion of why-provenance
relies on arbitrary proof trees without any restriction. 
%
However, as already discussed in the literature (see, e.g., the recent work~\cite{BBPT22}), there are proof trees that are counterintuitive. Such a proof tree, for instance, is the second one in Example~\ref{exa:proof-tree} as the fact $A(a)$ is derived from itself.
Now, a member $D'$ of $\why{\bar t}{D}{Q}$, witnessed via such an unnatural proof tree, might be classified as a counterintuitive explanation of $\bar t$ as it does not correspond to an intuitive derivation process, which can be extracted from the proof tree, that leads from $D'$ to the fact $R(\bar t)$.
%
This leads to the need of considering refined classes of proof trees that overcome the conceptual limitations of arbitrary proof trees.
%
Two well-justified notions considered in the literature are {\em non-recursive proof trees} and {\em minimal-depth proof trees}~\cite{BBPT22}. Roughly, a non-recursive proof tree is a proof tree that does not contain two nodes labeled with the same fact and such that one is the descendant of the other, whereas a minimal-depth proof tree is a proof tree that has the minimum depth among all the proof trees of a certain tuple.
We analyzed the data complexity of why-provenance focusing only on proof trees from those refined classes, and proved that it remains unchanged. Due to space constraints, we omit the details that can be found in the extended version of the paper.

\section{Unambiguous Proof Trees}\label{sec:unambiguous-trees}

Although non-recursive and minimal-depth proof trees form central classes that deserve our attention, there are still proof trees from those classes that can be classified as counterintuitive. More precisely, we can devise proof trees that are both non-recursive and minimal-depth, but they are ambiguous concerning the way some facts are derived.

\begin{example}\label{exa:unambiguous-trees}
	Let $Q = (\dep,A)$, where $\dep$ is the Datalog program that encodes the path accessibility problem as in Example~\ref{exa:proof-tree}. Consider also the database
	\[
	D\ =\ \{S(a),S(b),T(a,a,c),T(b,b,c),T(c,c,d)\}.
	\]
	The following is a proof tree of the fact $A(d)$ w.r.t.~$D$ and $\dep$ that is both non-recursive and minimal-depth, but suffers from the ambiguity issue mentioned above:
	
	\centerline{
	\includegraphics[width=.48\textwidth]{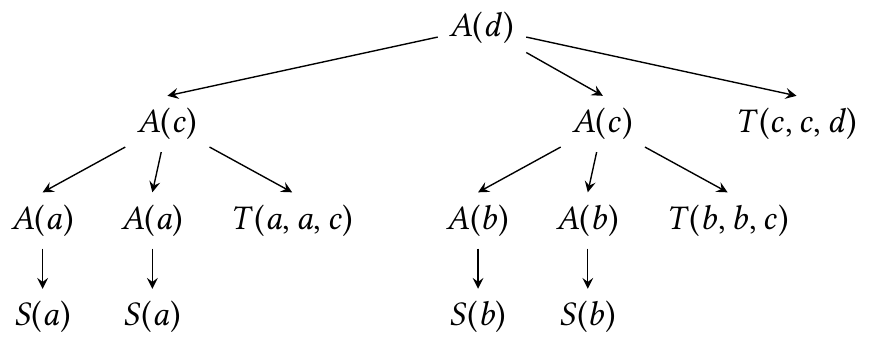}}

	\noindent Indeed, there are two nodes labeled with the fact $A(c)$, but their subtrees differ, and thus, it is ambiguous how $A(c)$ is derived. Hence, the database $D$, which belongs to the why-provenance of $(d)$ w.r.t.~$D$ and $Q$ relative to non-recursive and minimal-depth proof trees
	due to the above proof tree, might be classified as a counterintuitive explanation since it does not correspond to an intuitive derivation process where each fact is derived once due to an unambiguous  reason. \hfill\markfull
\end{example}

The above discussion leads to the novel class of unambiguous proof trees, where all occurrences of a fact in such a tree must be proved via the same derivation.


\begin{definition}[\textbf{Unambiguous Proof Tree}]\label{def:unambiguous-proof-tree}
	Consider a Datalog program $\dep$, a database $D$ over $\esch{\dep}$, and a fact $\alpha$ over $\sch{\dep}$. An {\em unambiguous proof tree of $\alpha$ w.r.t.~$D$ and $\dep$} is a proof tree $T = (V,E,\lambda)$ of $\alpha$ w.r.t.~$D$ and $\dep$ such that, for all $v,u \in V$, $\lambda(v) = \lambda(u)$ implies $T[v] \eqtree T[u]$. \hfill\markfull
\end{definition}

Considering again Example~\ref{exa:unambiguous-trees}, we can construct an unambiguous proof tree of $A(d)$ w.r.t.~$D$ and $\dep$ by simply replacing the subtree of the second child of $A(d)$ with the subtree of its first child (or vice versa).
Now, why-provenance relative to unambiguous proof trees is defined as expected: for a Datalog query $Q = (\dep,R)$, a database $D$ over $\esch{\dep}$, and a tuple $\bar t \in \adom{D}^{\arity{R}}$, the {\em why-provenance of $\bar t$ w.r.t.~$D$ and $Q$ relative to unambiguous proof trees} is the family
\begin{multline*}
\{\support{T} \mid T \text{ is an unambiguous proof tree of }\\
R(\bar t) \text{ w.r.t. } D \text{ and } \dep\}
\end{multline*}
denoted $\unwhy{\bar t}{D}{Q}$.
Considering again Example~\ref{exa:unambiguous-trees}, $\unwhy{(d)}{D}{Q}$ consists of $\{S(a),T(a,a,c),T(c,c,d)\}$ and $\{S(b),T(b,b,c),T(c,c,d)\}$, which is what one expects as conceptually intuitive explanations for the tuple $(d)$, unlike the whole database $D$.
The algorithmic problems
\[
\mathsf{Why\text {-}Provenance_{UN}[C]} \quad \text{and} \quad  \mathsf{Why\text {-}Provenance_{UN}}[Q]
\] 
are defined in the expected way. We can show that the data complexity of why-provenance remains unchanged.

\def\theunambiguouscomplexity{
	The following hold:
	\begin{enumerate}
		\item $\mathsf{Why\text {-}Provenance_{UN}[C]}$ is \NP-complete in data complexity, for each class $\class{C} \in \{\DAT,\LDAT\}$.
		\item $\mathsf{Why\text {-}Provenance_{UN}[\NRDAT]}$ is in $\ACZ$ in data compl.
	\end{enumerate}
}
\begin{theorem}\label{the:complexity-unambiguous-proof-trees}
	\theunambiguouscomplexity
\end{theorem}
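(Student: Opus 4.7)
The plan is to lift the two technical pillars of Section~\ref{sec:complexity} to the unambiguous setting. For part (1), we upgrade the proof-DAG characterization of Proposition~\ref{pro:characterization-all-trees} to unambiguous proof trees, and for NP-hardness we reuse the $\mathsf{3SAT}$ reduction behind Theorem~\ref{the:recursive-complexity}. For part (2), we restrict the finite family of CQs underlying the first-order rewriting to those induced by unambiguous proof trees, and otherwise reuse the construction leading to Lemma~\ref{lem:fo-tree-equiv}.

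For NP-membership in part (1), the key observation is that unambiguous proof trees correspond canonically to proof DAGs $G=(V,E,\lambda)$ whose labeling is \emph{injective}, i.e., every fact in $\base{D,\dep}$ labels at most one node, so $|V| \leq |\base{D,\dep}|$, which is polynomial in $|D|$ for fixed $\dep$. Given an unambiguous proof tree $T$, quotienting its node set by $\lambda$-equality is well-defined because $\lambda(v)=\lambda(u)$ implies $T[v] \eqtree T[u]$, and in particular the ordered list of children's labels at $v$ matches the one at $u$; this yields a proof DAG with the same support, and acyclicity follows since otherwise some fact label would reappear strictly below itself in $T$, contradicting the isomorphism condition on a finite tree. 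Conversely, the unravelling of such an injectively-labeled DAG is an unambiguous proof tree with the same support. A polynomial-size guess-and-check therefore decides membership in $\unwhy{\bar t}{D}{Q}$. For NP-hardness, I would inspect the $\mathsf{3SAT}$ reduction of Theorem~\ref{the:recursive-complexity}: the linear Datalog query employed there derives facts via a chain-like computation that commits to exactly one truth value per variable, so the witnessing proof trees for satisfiable inputs are already unambiguous by construction, and the reduction transfers verbatim.

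For part (2), the first-order rewriting adapts almost verbatim. Since $Q$ is non-recursive, every $Q$-tree has depth at most the depth of the predicate graph of $\dep$, so there are only finitely many $Q$-trees up to isomorphism, and a fortiori only finitely many unambiguous ones; collecting the induced CQs yields a finite subset $\cqeq{Q}^{\mathsf{UN}} \subseteq \cqeq{Q}$. Taking $Q_{\mi{FO}}^{\mathsf{UN}} = \bigvee_{\varphi(\bar y) \in \cqeq{Q}^{\mathsf{UN}}} \psi_{\varphi(\bar y)}$ with the very same $\psi_{\varphi(\bar y)}$ used in the proof of Lemma~\ref{lem:fo-tree-equiv}, the expected analogue of that lemma then gives $D' \in \unwhy{\bar t}{D}{Q}$ iff $\bar t \in Q_{\mi{FO}}^{\mathsf{UN}}(D')$, whence $\ACZ$-membership follows. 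The main obstacle is to argue that instantiating an unambiguous skeleton on $D'$ preserves unambiguity; fortunately, the inequality conjunct $\bigwedge_{u \neq v}\neg(u=v)$ already present in $\psi_{\varphi(\bar y)}$ forces the instantiation to be injective on skeleton variables, so distinct variable-labels remain distinct in the instantiated tree and the isomorphism conditions inherited from the skeleton carry over.
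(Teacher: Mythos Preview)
Your NP-membership argument and your treatment of part~(2) are essentially what the paper does (the paper's ``compressed DAG'' is exactly your injectively-labeled proof DAG, and the $\ACZ$ proof just restricts $\cq{Q}$ to unambiguous $Q$-trees). There is one minor technicality in your quotient construction: if a rule body has two atoms that instantiate to the \emph{same} fact, the quotient collapses two children into one edge, and the resulting node may no longer match Definition~\ref{def:proof-dag} verbatim; this is why the paper's compressed-DAG condition uses a set equality $\{\gamma_i\}=\{R_i(h(\bar x_i))\}$ rather than a bijection between edges and body atoms. This is easy to patch and does not affect the argument.

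The real gap is the NP-hardness. The $\mathsf{3SAT}$ reduction behind Theorem~\ref{the:recursive-complexity} does \emph{not} transfer. Look at the rules $\sigma_3,\sigma_4,\sigma_5$: each has the form ${\rm Assign}(x,y)\ \assign\ C(\ldots),{\rm Assign}(x,y)$, so every application places a node labeled ${\rm Assign}(v_i,b_i)$ directly above another node with the \emph{same} label but a strictly smaller subtree (one fewer $C$-fact left to consume). Hence any proof tree that touches a $C$-fact is automatically ambiguous. Since $D_\varphi$ always contains $C$-facts, there is \emph{no} unambiguous proof tree with support $D_\varphi$, regardless of whether $\varphi$ is satisfiable; the reduction is vacuous in the unambiguous setting. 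Your informal claim that ``the witnessing proof trees \ldots\ are already unambiguous by construction'' is therefore false.

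The paper obtains hardness differently: for linear programs the intensional facts form a spine, and an unambiguous tree cannot repeat a label on that spine (a proper ancestor would have a strictly deeper, hence non-isomorphic, subtree). Thus, for $\LDAT$, unambiguous proof trees coincide with \emph{non-recursive} proof trees, and the paper reuses the Hamiltonian-cycle reduction it develops for $\mathsf{Why\text{-}Provenance_{NR}}[\LDAT]$, whose witnessing trees have pairwise distinct spine labels by design.
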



For item (1), we show that $\mathsf{Why\text {-}Provenance_{UN}[\DAT]}$ is in \NP~and $\mathsf{Why\text {-}Provenance_{NR}[\LDAT]}$ is \NP-hard.
The latter is established via a reduction from the problem of deciding whether a directed graph has a Hamiltonian cycle.
The \NP~upper bound relies on a characterization of the existence of an unambiguous proof tree of a fact $\alpha$ w.r.t.~a database $D$ and a Datalog program $\dep$ with $\support{T} = D' \subseteq D$ via the existence of a so-called {\em unambiguous proof DAG} $G$ of $\alpha$ w.r.t.~$D$ and $\dep$ with $\support{G} = D'$ of polynomial size. 
Interestingly, unlike arbitrary
proof trees, we can directly go from an unambiguous proof tree $T$ to a polynomially-sized unambiguous proof DAG with the same support as $T$, without applying any intermediate steps for reducing the depth or the subtree count of $T$. This is because an unambiguous proof tree has, by definition, ``small'' depth and subtree count (in fact, the subtree count is one).
The $\ACZ$ upper bound in item (2) is shown via FO rewritability. The target FO query is obtained as in the proof of Theorem~\ref{the:non-recursive-complexity}, but considering only unambiguous proof trees in the definition of $\cq{Q}$.

\subsection{Computing Why-Provenance via SAT Solvers}\label{sec:reduction-to-sat}

We proceed to discuss how off-the-shelf SAT solvers can be used to efficiently compute the why-provenance of a tuple relative to unambiguous proof trees. 
We then discuss a proof-of-concept implementation and report encouraging results of a preliminary experimental evaluation. 
Let us stress that focusing on unambiguous proof trees was crucial towards these encouraging results as it is unclear how a SAT-based implementation can be made practical for proof trees that are not unambiguous. This is mainly because unambiguous proof trees, unlike other classes of proof trees, have always subtree count one, which is crucial for keeping the size of the Boolean formula manageable.


%

Consider a Datalog query $Q = (\dep,R)$, a database $D$ over $\esch{\dep}$, and a tuple $\bar t \in \adom{D}^{\arity{R}}$. We construct in polynomial time in $D$ a Boolean formula $\phi_{(\bar t,D,Q)}$ such that the why-provenance of $\bar t$ w.r.t.~$D$ and $Q$ relative to unambiguous proof trees can be computed from the truth assignments that make $\phi_{(\bar t,D,Q)}$ true.
This relies on the characterization mentioned above of the existence of an unambiguous proof tree of $R(\bar t)$ w.r.t.~$D$ and $\dep$ with $\support{T} = D' \subseteq D$ via the existence of an unambiguous proof DAG $G$ of $R(\bar t)$ w.r.t.~$D$ and $\dep$ with $\support{G} = D'$.
The formula $\phi_{(\bar t,D,Q)}$ is of the form
$\phi_{\mi{graph}} \wedge \phi_{\mi{acyclic}} \wedge \phi_{\mi{root}} \wedge \phi_{\mi{proof}}$, where $\phi_{\mi{graph}}$ verifies that a truth assignment corresponds to a syntactically correct labeled directed graph $G$, $\phi_{\mi{acyclic}}$ verifies that $G$ is acyclic, $\phi_{\mi{root}}$ verifies that $R(\bar t)$ is the unique root of $G$, and $\phi_{\mi{proof}}$ verifies that $G$ is an unambiguous proof DAG.
%

The key ingredient in the construction of $\phi_{(\bar t,D,Q)}$ is the so-called  {\em downward closure of $R(\bar t)$ w.r.t.~$D$ and $\dep$}, taken from~\cite{ElKM22}, which, intuitively speaking, is a hypergraph that encodes all possible proof DAGs of $R(\bar t)$ w.r.t.~$D$ and $\dep$. 
We first construct this hypergraph $H$, which can be done in polynomial time in the size of $D$, and then guided by $H$ we build the formula $\phi_{(\bar t,D,Q)}$, which essentially searches for an unambiguous proof DAG inside the hypergraph $H$.
%
Now, a truth assignment $\tau$ to the variables of $\phi_{(\bar t,D,Q)}$ naturally gives rise to a database denoted $\db{\tau}$. Let $\sem{\phi_{(\bar t,D,Q)}}$ be the family
\[
\left\{\db{\tau} \mid \tau \text{ is a satisfying assignment of } \phi_{(\bar t,D,Q)}\right\}.
\]
We can then show the next technical result:

\def\prowhyprovenancesat{
	Consider a Datalog query $Q = (\dep,R)$, a database $D$ over $\esch{\dep}$, and a tuple $\bar t \in \adom{D}^{\arity{R}}$. It holds that $\unwhy{\bar t}{D}{Q} = \sem{\phi_{(\bar t,D,Q)}}$.
}

\begin{proposition}\label{pro:why-provenance-sat}
\prowhyprovenancesat
\end{proposition}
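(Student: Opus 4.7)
The plan is to establish the claim by proving the two inclusions $\sem{\phi_{(\bar t,D,Q)}} \subseteq \unwhy{\bar t}{D}{Q}$ and $\unwhi{\bar t}{D}{Q} \subseteq \sem{\phi_{(\bar t,D,Q)}}$ separately, leveraging the characterization used in the proof of Theorem~\ref{the:complexity-unambiguous-proof-trees}: a set $D' \subseteq D$ belongs to $\unwhy{\bar t}{D}{Q}$ if and only if there exists an unambiguous proof DAG $G$ of $R(\bar t)$ w.r.t.~$D$ and $\dep$, of polynomial size, with $\support{G} = D'$. Moreover, every such DAG can be taken to live inside the downward-closure hypergraph $H$, since by construction $H$ encodes all possible proof DAGs of $R(\bar t)$ w.r.t.~$D$ and $\dep$. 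Hence it suffices to exhibit a correspondence between satisfying assignments of $\phi_{(\bar t,D,Q)}$ and unambiguous proof DAGs embedded in $H$.

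For the inclusion $\sem{\phi_{(\bar t,D,Q)}} \subseteq \unwhy{\bar t}{D}{Q}$, I would fix an arbitrary satisfying assignment $\tau$ and read off a labeled directed graph $G_\tau$ (sitting inside $H$) from those propositional variables that $\tau$ sets to true, then check the role of each conjunct in turn. The conjunct $\phi_{\mi{graph}}$ guarantees that $G_\tau$ is a syntactically well-formed labeled directed graph embedded in $H$; $\phi_{\mi{acyclic}}$ guarantees acyclicity; $\phi_{\mi{root}}$ enforces that the node labeled $R(\bar t)$ is the unique root of $G_\tau$; and $\phi_{\mi{proof}}$ enforces the structural rule-based conditions of Definition~\ref{def:proof-dag} together with the unambiguity invariant. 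Combined, these imply that $G_\tau$ is an unambiguous proof DAG of $R(\bar t)$ w.r.t.~$D$ and $\dep$ with $\support{G_\tau} = \db{\tau}$, which by the characterization places $\db{\tau}$ in $\unwhy{\bar t}{D}{Q}$.

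For the converse inclusion, I would start from $D' \in \unwhy{\bar t}{D}{Q}$ and invoke the characterization to obtain an unambiguous proof DAG $G$ embedded in $H$ with $\support{G} = D'$. I would then define a truth assignment $\tau_G$ by setting to true exactly the propositional variables that witness the nodes and edges of $G$ (so that the leaves correspond precisely to the facts of $D'$) and setting the remaining ones to false, and verify conjunct by conjunct that $\tau_G \models \phi_{(\bar t,D,Q)}$, with $\db{\tau_G} = D'$. The main obstacle is to spell out how $\phi_{\mi{proof}}$ faithfully captures, via the hyperedges of $H$, the rule-application condition of Definition~\ref{def:proof-dag} --- namely that each non-leaf node labeled $R_0(h(\bar x_0))$ has outgoing edges to nodes labeled $R_i(h(\bar x_i))$ for some rule of $\dep$ and some common homomorphism $h$ --- while simultaneously enforcing the unambiguity invariant that each fact labels at most one node of $G_\tau$. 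Once this encoding is verified, both directions follow by the correspondence described above.
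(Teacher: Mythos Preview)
Your proposal is correct and follows essentially the same route as the paper. The one difference worth noting is that the paper introduces an explicit intermediate object, the \emph{compressed DAG}: a proof-DAG-like structure whose nodes \emph{are} facts of $\base{D,\dep}$ (rather than being labeled by facts), so that each fact occurs at most once by construction. The paper then proves a refined characterization --- an unambiguous proof tree with support $D'$ exists iff a compressed DAG with support $D'$ exists --- which matches the Boolean encoding on the nose, since the node variables $x_\alpha$ are indexed by facts. Your phrases ``can be taken to live inside $H$'' and ``the unambiguity invariant that each fact labels at most one node of $G_\tau$'' are exactly this collapsing step, but you leave it implicit; in particular, observe that unambiguity is not something $\phi_{\mi{proof}}$ has to \emph{enforce} --- it is automatic from the fact-indexed variable structure, and $\phi_{\mi{proof}}$ only needs to encode the rule-application condition via the hyperedges of the downward closure. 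Making the compressed-DAG characterization explicit, as the paper does, tidies up both directions: the forward direction reads $G_\tau$ directly as a compressed DAG and then unravels, and the converse first collapses the unambiguous proof tree to a compressed DAG (using that equal labels force isomorphic subtrees, hence a well-defined quotient) before defining $\tau_G$. (Also, you have a typo in the second inclusion: the macro name for $\mathsf{why}_{\mathsf{UN}}$ is misspelled.)
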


The above proposition provides a way for computing the why-provenance of a tuple relative to unambiguous proof trees via off-the-shelf SAT solvers.
But how does this machinery behave when applied in a practical context? In particular, we are interested in the incremental computation of the why-provenance by enumerating its members instead of computing the whole set at once. The rest of the section is devoted to providing a preliminary answer to this question.

\subsection{Some Implementation Details}

Before presenting our experimental results, let us first briefly discuss some interesting aspects of the implementation. In what follows, fix a Datalog query $Q = (\dep,R)$, a database $D$ over $\esch{\dep}$, and a tuple $\bar t \in \adom{D}^{\arity{R}}$.

\medskip

\noindent \textbf{Constructing the Downward Closure.} Recall that the construction of $\phi_{(\bar t,D,Q)}$ relies on the downward closure of $R(\bar t)$ w.r.t.~$D$ and $\dep$. It turns out that the hyperedges of the downward closure can be computed by executing a slightly modified Datalog query $Q_{\downarrow}$ over a slightly modified database $D_{\downarrow}$. In other words, the answers to $Q_{\downarrow}$ over $D_{\downarrow}$ coincide with the hyperedges of the downward closure. Hence, to construct the downward closure we exploit a state-of-the-art Datalog engine, that is, version 2.1.1 of DLV~\cite{AACC+18}.
Note that our approach based on evaluating a Datalog query differs form the one in~\cite{ElKM22}, which uses an extension of Datalog with set terms.


\medskip
\noindent \textbf{Constructing the Formula.} Recall that $\phi_{(\bar t,D,Q)}$ consists of four conjuncts, where each one is responsible for a certain task. As it might be expected, the heavy task is to verify that the graph in question is acyclic (performed by the formula $\phi_{\mi{acyclic}}$).
Checking the acyclicity of a directed graph via a Boolean formula is a well-studied problem in the SAT literature.
%
For our purposes, we employ the technique of {\em vertex elimination}~\cite{RankoohR22}.
The advantage of this approach is that the number of Boolean variables needed for the encoding of $\phi_{\mi{acyclic}}$ is of the order $O(n \cdot \delta)$, where $n$ is the number of nodes of the graph, and $\delta$ is the so-called \emph{elimination width} of the graph, which, intuitively speaking, is related to how connected the graph is.

\medskip
\noindent \textbf{Incrementally Constructing the Why-Provenance.} Recall that we are interested in the incremental computation of the why-provenance, which is more useful in practice than computing the whole set at once. To this end, we need a way to enumerate all the members of the why-provenance without repetitions. This is achieved by adapting a standard technique from the SAT literature for enumerating the satisfying assignments of a Boolean formula, called {\em blocking clause}.
We initially collect in a set $S$ all the facts of $D$ occurring in the downward closure of $R(\bar t)$ w.r.t.~$D$ and $\dep$. Then, after asking the SAT solver for an arbitrary satisfying assignment $\tau$ of $\phi_{(\bar t,D,Q)}$, we output the database $\db{\tau}$, and then construct the ``blocking'' clause
$
\vee_{\alpha \in S} \ell_\alpha,
$
where $\ell_\alpha = \neg x_\alpha$ if $\alpha \in \db{\tau}$, and $\ell_\alpha = x_\alpha$ otherwise. We then add this clause to the formula, which expresses that no other satisfying assignment $\tau'$ should give rise to the same member of the why-provenance.
This will exclude the previously computed explanations from the computation. We keep adding such blocking clauses each time we get a new member of the why-provenance until the formula is unsatisfiable.

\subsection{Experimental Evaluation}

We now proceed to experimentally evaluate the SAT-based approach discussed above. To this end, we consider a variety of scenarios from the literature consisting of a Datalog query $Q = (\dep,R)$ and a family of databases $\mathcal{D}$ over $\esch{\dep}$.

{\footnotesize 
	\begingroup
	\setlength{\tabcolsep}{5pt} 
	\renewcommand{\arraystretch}{1.3} 
	\begin{table*}[t]
		\centering
		\begin{tabular}{|c||c|c|c|}
			\hline
			\textbf{Scenario} & \textbf{Databases} & \textbf{Query Type}  & \textbf{Number of Rules}\\ \hline
			\hline
			$\mathsf{TransClosure}$ & $D_\mathsf{bitcoin}$ (235K), $D_\mathsf{facebook}$ (88.2K) & linear, recursive & 2 \\ \hline
			$\mathsf{Doctors\text{-}}i$, $i \in [7]$ & $D_1$ (100K) & linear, non-recursive & 6\\ \hline
			$\mathsf{Galen}$ & $D_1$ (26.5K), $D_2$ (30.5K), $D_3$ (67K), $D_4$ (82K) & non-linear, recursive & 14\\ \hline
			$\mathsf{Andersen}$ & $D_1$ (68K), $D_2$ (340K), $D_3$ (680K), $D_4$ (3.4M), $D_5$ (6.8M) & non-linear, recursive & 4 \\ \hline
			$\mathsf{CSDA}$ & $D_\mathsf{httpd}$ (10M), $D_\mathsf{postgresql}$ (34.8M), $D_\mathsf{linux}$ (44M) & linear, recursive & 2\\ \hline
		\end{tabular}
		\caption{Experimental scenarios.}
		\label{tab:scenarios}
	\end{table*}
	\endgroup
}

\medskip
\noindent \textbf{Experimental Scenarios.} All the considered scenarios are summarized in Table~\ref{tab:scenarios}.
Here is brief description:

\begin{description}
	\item[$\mathsf{TransClosure}$.] This scenario computes the transitive closure of a graph and asks for connected nodes. The database $D_\mathsf{bitcoin}$ stores a portion of the Bicoin network~\cite{Weber19}, whereas
	$D_\mathsf{facebook}$ stores different ``social circles'' from Facebook~\cite{McAuley12}.
	%

	\item[$\mathsf{Doctors}$.] The scenarios $\mathsf{Doctors\text{-}}i$, for $i \in [7]$, were used in~\cite{ElKM22} and represent queries obtained from a well-known data-exchange benchmark involving existential rules (the existential variables have been replaced with fresh constants). All such scenarios share the same database with 100K facts.

	\item[$\mathsf{Galen}$.] This scenario used in~\cite{ElKM22} implements the ELK calculus~\cite{KazakovKS14} and asks for all pairs of concepts that are related with the $\mathsf{subClassOf}$ relation. The various databases contain different portions of the Galen ontology~\cite{Galen}.
	
	\item[$\mathsf{Andersen}$.] This scenario used in~\cite{FanMK22} implements the classical Andersen ``points-to'' algorithm for determining the flow of data in procedural programs and asks for all the pairs of a pointer $p$ and a variable $v$ such that $p$ points to $v$. The databases are encodings of program statements of different length.

	\item[$\mathsf{CSDA}$.] This scenario (Context-Sensitive Dataflow Analysis) used in~\cite{FanMK22} is similar to $\mathsf{Andersen}$ but asks for null references in a program. The databases $D_\mathsf{httpd}$, $D_{\mathsf{postgresql}}$, and $D_{\mathsf{linux}}$ store the statements of the httpd web server, the PostgreSQL DBMS, and the Linux kernel, respectively. 
\end{description}

\medskip
\noindent \textbf{Experimental Setup.} For each scenario $s$ consisting of the query $Q = (\dep,R)$ and the family of databases $\mathcal{D}$, and for each $D \in \mathcal{D}$, we have computed $Q(D)$ using DLV, and then selected five tuples $\bar t^1_{s,D},\ldots,\bar t^5_{s,D}$ from $Q(D)$ uniformly at random. 
Then, for each $i \in [5]$, we constructed the downward closure of $R(\bar t^i_{s,D})$ w.r.t.~$D$ and $\dep$ by first computing the adapted query $Q_{\downarrow}$ and database $D_{\downarrow}$ via a Python 3 implementation and then using DLV for the actual computation of the downward closure, then
we constructed the Boolean formula $\phi_{(\bar t^i_{s,D},D,Q)}$ via a C++ implementation, and finally 
we ran the state-of-the-art SAT solver Glucose (see, e.g.,~\cite{Audemard18}), version 4.2.1, with input the above formula to enumerate the members of $\unwhy{\bar t^i_{s,D}}{D}{Q}$.
%
%
All the experiments have been conducted on a laptop with an Intel(R) Core(TM) i7-10750H CPU @ 2.60GHz, and 32GB of RAM, running Fedora Linux 37. The Python code is executed with Python 3.11.2, and the C++ code has been compiled with g++ 12.2.1, using the -O3 optimization flag.

\begin{figure}[t]
	\centering
	\includegraphics[width=.464\textwidth]{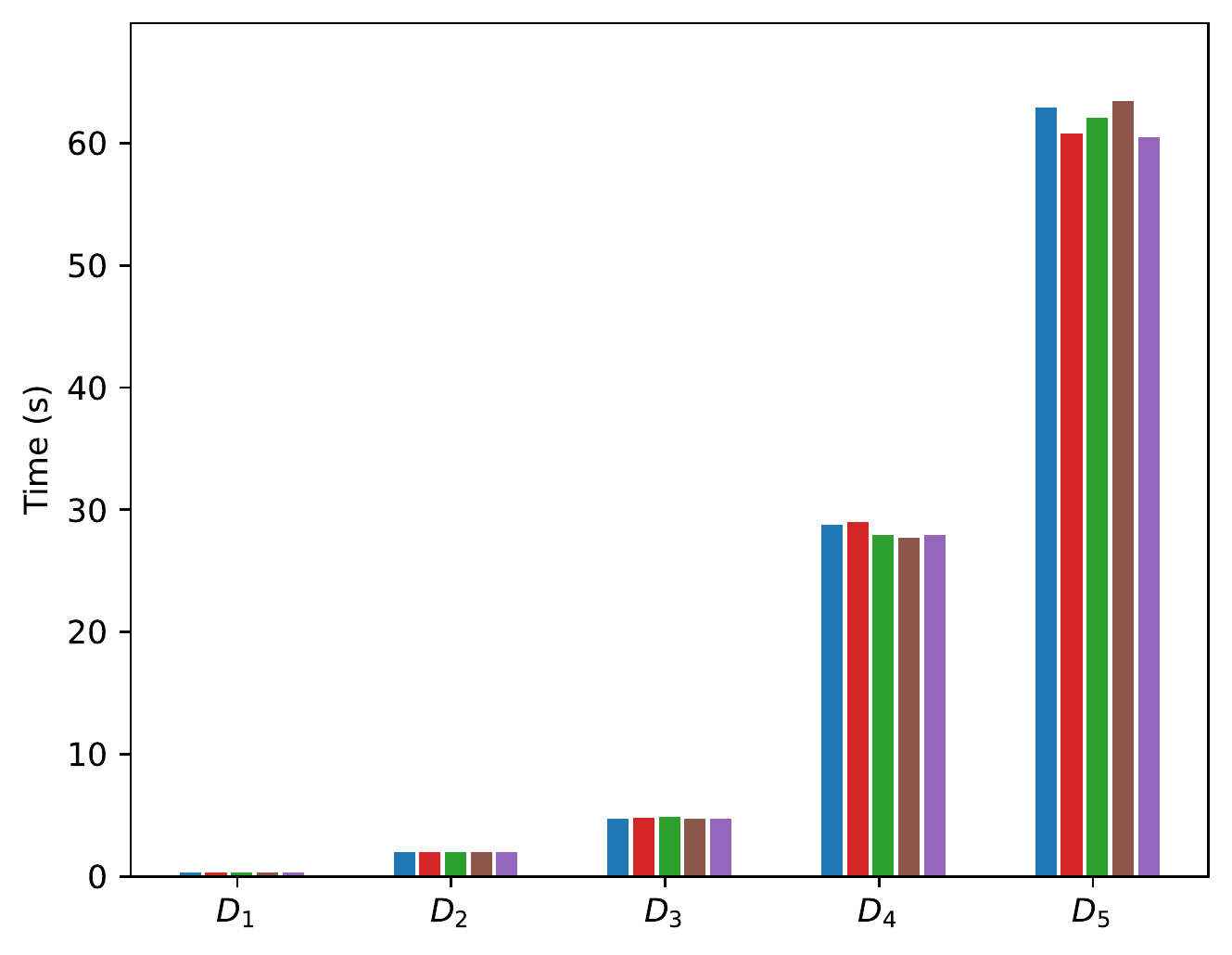}
	\caption{Building the downward closure and the Boolean formula.}
	\label{fig:andersen-task1}
\end{figure}

\begin{figure}[t]
	\centering
	\includegraphics[width=.47\textwidth]{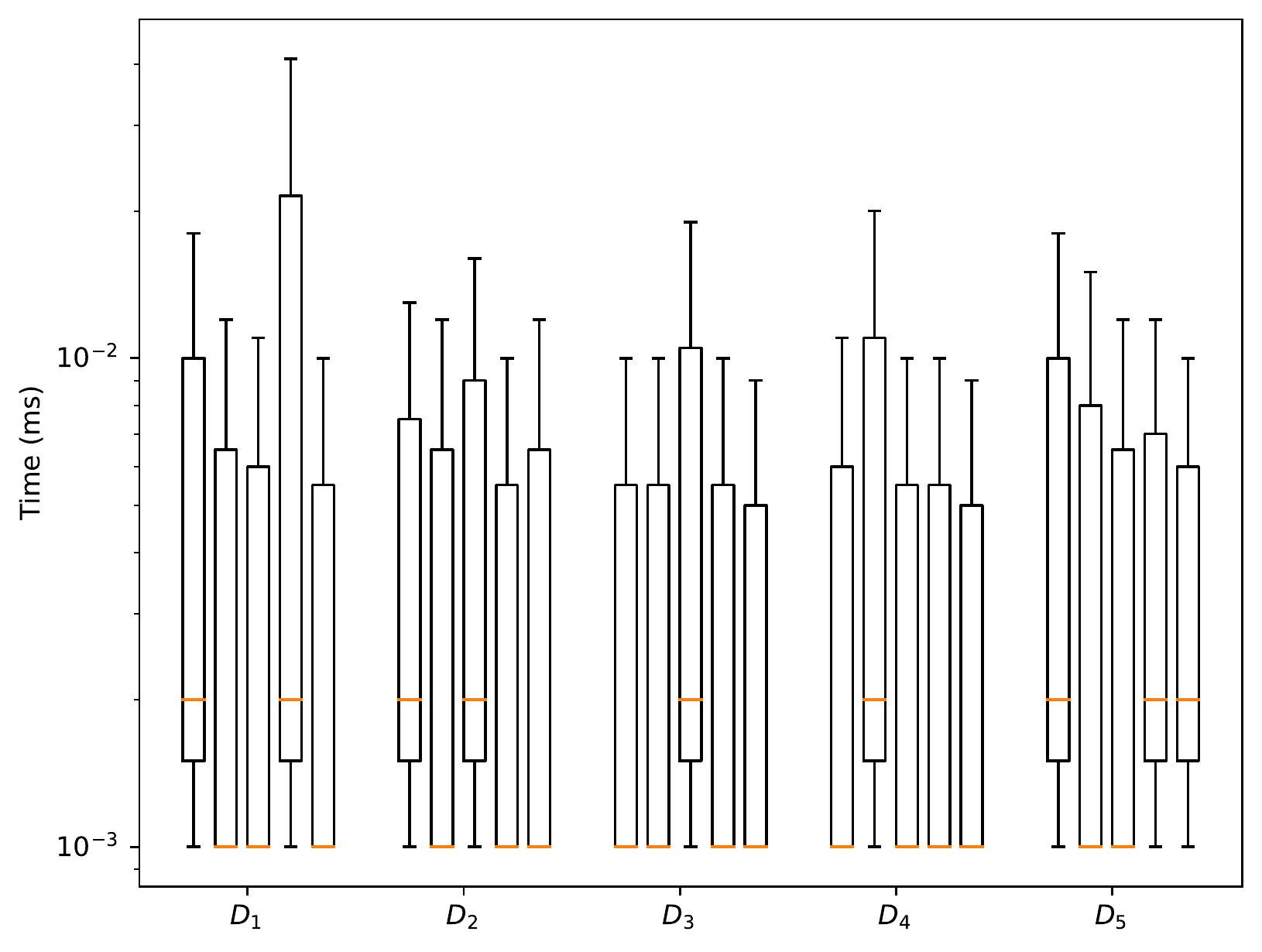}
	\caption{Incremental computation of the why-provenance.}
	\label{fig:andersen-task2}
\end{figure}

\medskip
\noindent \textbf{Experimental Results.} Due to space constraints, we are going to present only the results based on the $\mathsf{Andersen}$ scenario. Nevertheless, the final outcome is aligned with what we have observed based on all the other scenarios.
%

Concerning the construction of the downward closure and the Boolean formula, we report in Figure~\ref{fig:andersen-task1} the total running time for each database of the $\mathsf{Andersen}$ scenario (recall that there are five databases of varying size, and thus we have five plots). Furthermore, each plot consists of five bars that correspond to the five randomly chosen tuples. Each such bar shows the time for building the downward closure plus the time for constructing the Boolean formula. 
%
%
We have observed that almost all the time is spent for computing the downward closure, whereas the time for building the formula is negligible. Hence, our efforts should concentrate on improving the computation of the downward closure.
Moreover, for the reasonably sized databases (68K, 340K, and 680K facts) the total time is in the order of seconds, which is quite encouraging. Now, for the very large databases that we consider (3.4M and 6.8M facts), the total time is between half a minute and a minute, which is also encouraging taking into account the complexity of the query, the large size of the databases, and the limited power of our machine.

For the incremental computation of the why-provenance, we give in Figure~\ref{fig:andersen-task2}, for each database of the $\mathsf{Andersen}$ scenario, the times required to build an explanation, that is, the time between the current member of the why-provenance and the next one (this time is also known as the delay).
Each of the five plots collects the delays of constructing the members of the why-provenance (up to a limit of 10K members or 5 minutes timeout) for each of the five randomly chosen tuples. We use box plots, where the bottom and the top borders of the box represent the first and third quartile, i.e., the delay under which 25\% and 75\% of all delays occur, respectively, and the orange line represents the median delay. Moreover, the bottom and the top whisker represent the minimum and maximum delay, respectively. All times are expressed in milliseconds and we use logarithmic scale.
As we can see, most of the delays are below 1 millisecond, with the median in the order of microseconds. Therefore, once we have the Boolean formula in place, incrementally computing the members of the why-provenance is extremely fast.

\section{Conclusions}\label{sec:conclusions}
%


The takeaway of our work is that for recursive queries the why-provenance problem is, in general, intractable, whereas for non-recursive queries it is highly tractable in data complexity. 
With the aim of overcoming the conceptual limitations of arbitrary proof trees, we considered the new class of unambiguous proof trees and showed that it does not affect the data complexity of the why-provenance problem.
Interestingly, we have experimentally confirmed that unambiguous proof trees help to exploit off-the-shelf SAT solvers towards an efficient computation of the why-provenance.
Note that we have performed a preliminary comparison with~\cite{ElKM22} by focusing on a setting that both approaches can deal with. In particular, we used the scenarios $\mathsf{Doctors}\text{-}i$, for $i \in [7]$, and measured the end-to-end runtime of our approach (not the delays). For the simple scenarios, the two approaches are comparable in the order of a second. For the demanding scenarios ($\mathsf{Doctors}\text{-}i$ for $i \in \{1,5,7\}$), our approach is generally faster.


It would be extremely useful to provide a complete classification of the data complexity of the why-provenance problem in the form of a dichotomy result. It would also provide further insights to pinpoint the combined complexity of the problem, where the Datalog query is part of the input. Finally, it is crucial to perform a more thorough experimental evaluation of our SAT-based machinery in order to understand better whether it can be applied in practice.

\bibliographystyle{kr}

\newpage
\appendix

\section{Data Complexity of Why-Provenance}\label{appsec:all-trees}

In this section, we provide the missing details for Section~\ref{sec:complexity}.

\subsection{Recursive Queries}

We proceed to give the full proof of Theorem~\ref{the:recursive-complexity}, which we recall here for convenience:

\begin{manualtheorem}{\ref{the:recursive-complexity}}
	\therecursivecomplexity
\end{manualtheorem}

To prove the above result, it suffices to show that:
\begin{itemize}
	\item $\mathsf{Why\text {-}Provenance[\DAT]}$ is in \NP~in data complexity.
	\item $\mathsf{Why\text {-}Provenance[\LDAT]}$ is \NP-hard in data complexity.
\end{itemize}

\medskip
\noindent \underline{\textbf{Upper Bound}}
\smallskip

\noindent Our main task is to prove Proposition~\ref{pro:characterization-all-trees}, which we recall below, that will allow us to devise a guess-and-check procedure that runs in polynomial time in the size of the database.

\begin{manualproposition}{\ref{pro:characterization-all-trees}}
	\procharacterizationalltrees
\end{manualproposition}

The direction $(2)$ implies $(1)$ is shown by ``unravelling'' the proof DAG $G$ of $\alpha$ w.r.t.~$D$ and $\dep$ into a proof tree $T$ of $\alpha$ w.r.t.~$D$ and $\dep$ with $\support{T} = \support{G}$. 
More precisely, we go over the nodes of $G$ starting from its root and ending at its leaves using breadth-first search. Whenever we encounter a node $v$ that has $k$ incoming edges, we create $k$ copies of its subDAG. The subDAG of a node $v$ contains $v$ itself and every node reachable from $v$, and an edge $(u_1,u_2)$ if there is an edge $(w_1,w_2)$ in $G$, where $u_1$ is a copy of $w_1$ and $u_2$ is a copy of $w_2$. Note that these copies preserve the labels of the nodes. We then replace each incoming edge of $v$ with an edge to the root of a distinct copy of its subDAG. Note that since $G$ is acyclic, the above operation on $v$ has no impact on the nodes that have been processed before $v$.
It is rather straightforward that the result is a tree $T$ with a root $v$ that has the same label as the root of $G$, and where the leaves have the same labels as the leaves of $G$ (hence, for each leaf $v$ of the tree we have that $\lambda(v)\in D$, as the same holds for the labels of the leaves of the proof DAG). Moreover, it is easy to verify that Property~(3) of Definition~\ref{def:proof-tree} holds since $G$ satisfies the equivalent property~$(3)$ of Definition~\ref{def:proof-dag} and our copies preserve the labels of the nodes. Therefore, the resulting tree is a proof tree of $\alpha$ w.r.t.~$D$ and $\Sigma$. 

Concerning the direction $(1)$ implies $(2)$, as discussed in the main body of the paper, the proof proceeds in three main steps captured by Lemmas~\ref{lem:depth-reduction},~\ref{lem:scount-reduction}, and~\ref{lem:from-trees-to-dags}, which we prove next.

\begin{manuallemma}{\ref{lem:depth-reduction}}
	\lemmadepthreduction
\end{manuallemma}
\begin{proof}
	We prove the claim for $f(|D|)=|\base{D,\dep}|\times |D|$ by induction on $n = \depth{T}$.
	
	\medskip
	\noindent \textbf{Base Case.} For any $n\le |\base{D,\dep}|\times |D|$, the claim holds trivially.
	
	\medskip
	\noindent \textbf{Inductive Step.} We assume that the claim holds for $n \in \{|\base{D,\dep}|\times |D|,\dots,p\}$, and prove that it holds for $n=p+1$. Let $T$ be a proof tree of $\alpha$ w.r.t.~$D$ and $\Sigma$ with $\depth{T} = p+1$. Since $p+1>|\base{D,\dep}|\times |D|$, there exists a path $v_1\rightarrow v_2\dots\rightarrow v_{p+2}$ of length $p+1$ in $T$ and a label $\beta$, such that $\beta$ is the label of $k>|D|$ nodes $v_{i_1},\dots,v_{i_k}$ along the path. (Note that $|\base{D,\dep}|$ is an upper bound on the number of distinct labels in $T$.) We assume, without loss of generality, that $i_1<i_2<\dots<i_k$.
	We will show that for some $v_{i_j}$ and $v_{i_r}$ with $j< r$, it holds that
	\[
	\support{T[v_{i_j}]}\ =\ \support{T[v_{i_r}]}. 
	\]
	Recall that for a node $v$, $T[v]$ is the subtree of $T$ rooted at $v$.
	
	An easy observation is that for $T_1,T_2$ such that $T_2$ is a subtree of $T_1$, it holds that $\support{T_2}\subseteq\support{T_1}$.
	Hence, for all $j<r$, we have that: 
	\[
	\support{T[v_{i_r}]}\subseteq\support{T[v_{i_j}]}.
	\]
	Now assume, towards a contradiction, that for every $j<r$,
	\[\support{T[v_{i_r}]}\subsetneq\support{T[v_{i_j}]}.\] 
	We then conclude that
	\[
	\support{T[v_{i_k}]}\subsetneq\dots\subsetneq \support{T[v_{i_1}]}.
	\]
	This means that $\support{T[v_{i_1}]}$ contains $k>|D|$ distinct facts, which in turn means that $\support{T}$ contains at least $k>|D|$ distinct facts. This is a contradiction to the fact that $\support{T}=D'$ for some $D'\subseteq D$.

	Therefore, for some $j<r$ it holds that 
	\[
	\support{T[v_{i_r}]}=\support{T[v_{i_j}]}.
	\]
	We can now shorten the path $v_1\rightarrow v_2\dots\rightarrow v_{p+2}$ in $T$ and obtain another proof tree $T_1$ with the same support $D'$, by replacing the subtree $T[v_{i_j}]$ with the subtree $T[v_{i_r}]$. An important observation here is that $T_1$ is still a proof tree of $\alpha$ w.r.t.~$D$ and $\Sigma$. Since we do not modify the root node $v$, it still holds that $\lambda(v)=\alpha$. Moreover, the set of leaves of $T_1$ is contained in the set of leaves of $T$; hence, for every leaf $v$ of $T_1$ it holds that $\lambda(v)\in D$. Finally, since $T[v_{i_r}]$ is a subtree of $T$, it satisfies property $(3)$ of Definition~\ref{def:proof-tree} (that is, if $v$ is a node with $n\ge1$ children $u_1,\ldots,u_n$, then there is a rule $R_0(\bar x_0)\ \assign\ R_1(\bar x_1),\ldots,R_n(\bar x_n) \in \dep$ and a function $h : \bigcup_{i \in [n]} \bar x_i \ra \ins{C}$ such that $\lambda(v) = R_0(h(\bar x_0))$, and $\lambda(u_i) = R_i(h(\bar x_i))$ for each $i \in [n]$). Therefore, this property also holds for every node $v$ of $T_1$ (for the parent of the node $v_{i_j}$ that we replace with the node $v_{i_r}$ the property holds because $\lambda(v_{i_j})=\lambda(v_{i_r})$).
	
	Clearly, when applying the above procedure, we eliminate at least one path of length $p+1$ and we do not introduce any new path of length $p+1$.
	If we repeat this process for every path of length $p+1$, we will eventually obtain a proof tree $T_2$ of $\alpha$ w.r.t.~$D$ and $\dep$ with $\depth{T_2} \le p$ and $\support{T_2} = D'$. The claim follows by the inductive hypothesis.
\end{proof}

Before proving Lemma~\ref{lem:scount-reduction}, we show the following result, where we do not consider the support of the proof tree. 

\begin{lemma}\label{lem:exists-unumbiguous-tree}
	For each Datalog program $\dep$, database $D$ over $\esch{\dep}$, and fact $\alpha$ over $\sch{\dep}$, if there exists a proof tree $T$ of $\alpha$ w.r.t.~$D$ and $\dep$, then there exists also such a proof tree $T'$ with $|\quot{T'[\beta]}| =1$ for every fact $\beta$ that occurs in $T'$.
\end{lemma}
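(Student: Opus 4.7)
The plan is to rebuild $T$ top-down, consistently choosing a canonical ``blueprint'' subtree for each fact label that appears.

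First, for every fact $\beta$ appearing somewhere as a node-label in $T$, let
\[
m(\beta)\ =\ \min\{\depth{T[v]} \mid v \in T,\ \lambda(v) = \beta\},
\]
and fix one witness subtree $T_\beta \in T[\beta]$ with $\depth{T_\beta} = m(\beta)$. Then I would define, by well-founded recursion on $m(\beta)$, a tree $U(\beta)$ as follows: the root of $U(\beta)$ is labeled $\beta$; if $T_\beta$ is a single node (so $\beta \in D$), then $U(\beta)$ is just that leaf; otherwise, if $\beta_1,\ldots,\beta_n$ are the labels of the children of the root of $T_\beta$, then the root of $U(\beta)$ has $n$ children whose subtrees are $U(\beta_1),\ldots,U(\beta_n)$ respectively. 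Well-foundedness holds because whenever $\beta_i$ is a child-label of the root of $T_\beta$, the subtree rooted at that child has depth at most $m(\beta)-1$, which gives $m(\beta_i) \leq m(\beta)-1 < m(\beta)$; so $m$ strictly decreases along the recursion and termination with a finite tree is guaranteed.

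Next I would verify that $T' := U(\alpha)$ is a proof tree of $\alpha$ w.r.t.~$D$ and $\dep$ in the sense of Definition~\ref{def:proof-tree}. Property~(1) holds by construction. Property~(3) holds at every internal node of $T'$: by construction, whenever a node of $T'$ is labeled $\beta$ with children's subtrees $U(\beta_1),\ldots,U(\beta_n)$, the labels $\beta,\beta_1,\ldots,\beta_n$ are exactly the root- and children-labels of $T_\beta$, and the latter is a subtree of $T$, hence is itself a proof tree, so the very rule and homomorphism that justifies the corresponding node of $T_\beta$ also justifies this node of $T'$. Property~(2) holds because the only way $U(\beta)$ becomes a leaf in the construction is when $T_\beta$ is a single node, in which case $T_\beta$ is a leaf of $T$ and therefore $\beta \in D$.

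Finally, the unambiguity property is immediate from the construction: a straightforward induction on the structure of $T'$ shows that every node of $T'$ labeled $\gamma$ has, as its subtree, precisely $U(\gamma)$. Hence all subtrees of $T'$ rooted at nodes labeled $\gamma$ are literally equal (in particular, isomorphic), giving $|\quot{T'[\gamma]}| = 1$ for every fact $\gamma$ occurring in $T'$. The only genuinely delicate point in the whole argument is choosing the ``right'' representatives $T_\beta$ so that the recursive rebuild terminates; picking minimum-depth witnesses handles this cleanly, and the rest of the verification is bookkeeping directly against Definition~\ref{def:proof-tree}.
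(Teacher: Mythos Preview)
Your proof is correct and follows essentially the same approach as the paper: both pick a minimum-depth witness subtree $T_\beta$ for each label $\beta$ and rebuild the tree by recursively replacing each child's subtree with the canonical representative for its label, with minimum depth guaranteeing termination. The only difference is presentational: you phrase the construction as a single well-founded recursion on $m(\beta)$, whereas the paper builds the representatives layer by layer via sets $Z_i$ (collecting all $U(\beta)$ with $m(\beta)\le i$) and then verifies six invariants by induction on $i$; your packaging is more direct but the underlying argument is the same.
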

\begin{proof}
	Let $T=(V,E,\lambda)$ be a proof tree of $\alpha$ w.r.t.~$D$ and $\dep$. We construct another proof tree $T'=(V',E',\lambda')$ of $\alpha$ w.r.t.~$D$ and $\dep$ with the desired property in the following way. Let $S$ be the set that contains, for every fact $\beta$ that occurs in $T$ (i.e., it is the label of some node in $V$), one subtree $T[v]$ of $T$ with $\lambda(v)=\beta$ of smallest depth among all such subtrees; if more than one such tree exists, we choose one arbitrarily. For every $1\le i\le \depth{T}$, let $S^i$ be the set that contains all the trees of $S$ of depth exactly $i$. We will now inductively construct a set of trees that will contain a single representative tree for every fact $\beta$ that occurs in $T$. Then, we will use the representative tree of $\alpha$ as the tree $T'$.
	
	We define:
	\begin{itemize}
		\item $Z_1 = S^1$;
		\item $Z_{i+1} = \mathsf{Op}^{i+1}(Z_i) \cup Z_i$, for $i\ge 1$.
	\end{itemize}
	where $\mathsf{Op}^{i+1}(Z_i)$ contains, for every tree $T''$ in $S^{i+1}$, the tree that is obtained from it using the following procedure. Let $u_1,\dots,u_n$ be the direct children of the root of $T''$. For every child $u_j$ with $\lambda(u_j)=\beta$, we replace the subtree $T''[u_j]$ with a tree of $Z_i$ whose root is labeled with $\beta$. Intuitively, the existence of such a tree is guaranteed because $S$ contains a smallest depth subtree for each fact, and since $T''[u_j]$ is of depth at most $i$, the set $S$ has a tree of depth at most $i$ rooted with a node labeled with $\beta$.
	Formally, we prove the following properties of the sets $Z_i$:
	\begin{enumerate}
		\item For every label $\beta$, if $S^i$ contains a tree with root $v$ such that $\lambda(v)=\beta$, then $Z_i$ contains a tree with root $u$ such that $\lambda(u)=\beta$.
		\item For every label $\beta$, if $Z_i$ contains a tree with root $u$ such that $\lambda(u)=\beta$, then there is a tree $T''$ with root $w$ such that $\lambda(w)=\beta$ and $T''\in S^{k}$ for some $k\le i$.
		\item For every label $\beta$, if $Z_i$ contains a tree with root $u$ such that $\lambda(u)=\beta$, then there is precisely one such tree.
		\item For every tree $T''$ of $Z_i$, if $u$ is a leaf node of $T''$, then $\lambda(u)\in D$.
		\item For every tree $T''$ of $Z_i$, if $u$ is a node of $T''$ with $n\ge1$ children $u_1,\ldots,u_n$, then there exists a rule $R_0(\bar x_0)\ \assign\ R_1(\bar x_1),\ldots,R_n(\bar x_n)$ in $\dep$ and a function $h : \bigcup_{i \in n} \bar x_i \ra \ins{C}$ such that $\lambda(u) = R_0(h(\bar x_0))$ and $\lambda(u_j) = R_j(h(\bar x_j))$, for $j \in [n]$.
		\item For every tree $T''$ of $Z_i$, $|\quot{T''[\beta]}| =1$, for each fact $\beta$ that occurs in $T''$.
	\end{enumerate}
	We prove all six properties by induction on $i$. 
	
	\medskip
	\noindent \textbf{Base Case.} For $i=1$, the first two properties trivially hold as $Z_1=S^1$ by definition. Since $S$ contains a single tree for each fact $\beta$ (i.e., a tree where the root is labeled with $\beta$), so does $S^1$, and the third property also holds. 
	The fourth and fifth properties hold because every tree of $S^1$ (and so every tree of $Z_1$) is a subtree of a proof tree, and these are properties of proof trees. The last property is satisfied since a tree of $Z_1$ contains one root node $v$ and its children $u_1,\dots,u_n$, and it cannot be the case that $\lambda(v)=\lambda(u_j)$ for some $j \in [n]$ (as the leaves correspond to extensional predicates, while the root corresponds to an intentional predicate).
	
	\medskip
	\noindent \textbf{Inductive Step.} We assume that the claim holds for $i=1,\dots,p$ and prove that it holds for $i=p+1$. 
	The first property holds by construction, since the set $\mathsf{Op}^{p+1}(Z_{p})$ contains, for every tree $T''$ of $S^{p+1}$, another tree with the same root (we only modify the subtrees of its children). Moreover, if the children of the root of $T''$ are $u_1,\dots,u_n$, then for every $r\in [n]$, the subtree $T''[u_r]$ (which is also a subtree of the original $T$) is of depth at most $p$. Hence, the smallest depth subtree for the label $\lambda(u_r)$ in $T$ occurs in $S_k$ for some $1\le k\le p$. By the inductive assumption, the set $Z_k$ contains a tree with root $v$ such that $\lambda(v)=\lambda(u_r)$, and since $Z_k\subseteq Z_{p}$, this tree also appears in $Z_{p}$; hence, our construction is well-defined.

	The second property is satisfied since every tree of $Z_{p+1}$ with root $u$ such that $\lambda(u)=\beta$ either occurs in $Z_p$ or is obtained from a tree of $S_{p+1}$. In the first case, the inductive assumption implies that there is a tree $T''$ with a root $w$ and $\lambda(w)=\beta$ in $S^k$ for some $k\le p$. In the second case, the definition of $\mathsf{Op}^{p+1}(Z_{p})$ implies that there is a tree $T''$ with a root $w$ and $\lambda(w)=\beta$ in $S^{p+1}$.
	
	The third property holds because $S^{p+1}$ contains a single tree per fact, and so the same holds for $\mathsf{Op}^{p+1}(Z_{p})$. Moreover, $Z_p$ contains a single tree per fact due to the inductive assumption. We will show that it cannot be the case that there is a label $\beta$ and two trees $T_1,T_2$ such that: \textit{(1)} $\lambda(v_1)=\beta$ for the root $v_1$ of $T_1$, \textit{(2)} $\lambda(v_2)=\beta$ for the root $v_2$ of $T_2$, \textit{(3)} $T_1\in Z_p$, and \textit{(4)} $T_2\in \mathsf{Op}^{p+1}(Z_{p})$. Assume, towards a contradiction, that such two trees exist. Then, $S^{p+1}$ contains a tree $T_3$ with root $v_3$ such that $\lambda(v_3)=\beta$ (this is the tree from which $T_2$ is obtained). Moreover, the inductive assumption and property $(2)$ imply that there is a tree $T_4$ with root $v_4$ in $S^k$ for some $k\le p$ such that  $\lambda(v_4)=\beta$. We conclude that $S$ contains two trees whose root is labeled with $\beta$ -- one with depth $p+1$ and one with depth $k\le p$. This is a contradiction to the fact that $S$ only contains one smallest depth subtree of $T$ whose root is labeled with $\beta$.
	
	As for the fourth property, as aforementioned, every tree $T''$ of $Z_{p+1}$ either occurs in $Z_p$ or is obtained from a tree of $S_{p+1}$. In the first case, the claim immediately follows from the inductive assumption. In the second case, let $T''$ be a tree of $\mathsf{Op}^{p+1}(Z_{p})$. Assume that the root of $T''$ is $u$ and its children are $u_1,\dots,u_n$. Then, every leaf node of $T''$ is also a leaf node of $T''[u_j]$ for some $j\in [n]$, and since $T''[u_j]$ is a tree of $Z_p$ by construction, we have that $\lambda(u_j)\in D$ by the inductive assumption.
	
	The fifth property holds for every tree of $Z_p$ by the inductive assumption. We will show that it also holds for every tree $T''$ of $\mathsf{Op}^{p+1}(Z_{p})$.
	Each tree of $S^{p+1}$ is a subtree of $T$; hence, it satisfies the desired property (which is a property of proof trees). In particular, the property is satisfied by the root node, and since we do not modify the label of the root node or the labels of its children, the root of the obtained tree $T''$ also satisfies this property. For every child of the root, its subtree is replaced with a tree from $Z_{p}$ that satisfies the desired property by the inductive assumption, and so every node of $T''$ satisfies this property.
	
	Finally, for a tree of $Z_{p+1}$ that also occurs in $Z_p$, the last property holds from the inductive assumption. For a tree $T''$ of $Z_{p+1}$ that comes from $\mathsf{Op}^{p+1}(Z_{p})$, the last property holds for every label $\delta$ that occurs in $T''$ and is not the label of the root, due to the inductive assumption (since we replace the subtrees under the children of the root with trees from $Z_p$). Note that the label $\beta$ of the root of $T''$ cannot occur in a tree of $Z_p$ (and, in particular, as the label of one of its children). This holds since $Z_p\subseteq Z_{p+1}$ and due to property $(3)$ that we have already proved. Thus, it also holds that $|\quot{T''[\beta]}| =1$.
	This concludes our proof for the six properties.
	
	Now, by definition, $S$ contains a tree $T_1$ whose root is labeled with $\alpha$. Assume that the depth of this tree is $k$, then $T_1\in S^k$. Property $(1)$ then implies that $Z_k$ contains a tree $T_2$ whose root is labeled with $\alpha$. It is only left to show that this tree is a proof tree of $\alpha$ w.r.t.~$D$ and $\dep$ that satisfies the desired properties, and then we will define $T'=T_2$, and that will conclude our proof. The first property of proof trees (Definition~\ref{def:proof-tree}) is clearly satisfied as $\lambda(v)=\alpha$ for the root node $v$ of $T_2$. Properties $(2)$ and $(3)$ of proof trees are satisfied due to properties $(4)$ and $(5)$, respectively, of the sets $Z_i$. Hence, $T_2$ is indeed a proof tree of $\alpha$ w.r.t.~$D$ and $\dep$. Property $(6)$ of the sets $Z_i$ implies that $|\quot{T_2[\beta]}| =1$, for each fact $\beta$ that occurs in $T_2$. Therefore, we can indeed define $T'=T_2$ and obtain the desired proof tree.
\end{proof}

We now proceed to prove Lemma~\ref{lem:scount-reduction}, which we recall here:

\begin{manuallemma}{\ref{lem:scount-reduction}}
	\lemmascountreduction
\end{manuallemma}
\begin{proof}
	Given a proof tree $T$ of $\alpha$ w.r.t.~$D$ and $\dep$ with $\depth{T}\le f(|D|)$ and $\support{T}=D'$, we construct another proof tree $T'$ of $\alpha$ w.r.t.~$D$ and $\dep$ with $\scount{T'} \leq g(|D|)$ and $\support{T'}=D'$ in two steps. First, for every fact of $D'$, we select one path in $T$ from the root to a leaf labeled with this fact. Then, we ``freeze'' those paths (i.e., we do not modify them in $T'$) in order to preserve the support. However, it is not sufficient to keep only these paths, but we also need to keep the siblings of each node along these paths to obtain a valid proof tree, and, in particular, to satisfy the last property of Definition~\ref{def:proof-tree}. The second step is then to reduce, for every sibling node $v$ and for every fact $\beta$ in $T[v]$, the number of equivalence classes in $\quot{T[v][\beta]}$.
	
	Let $v$ be a node of $T$. An easy observation is that if $\lambda(v)=\beta$ for some fact $\beta$, then $T[v]$ is a proof tree of $\beta$ w.r.t.~$D$ and $\dep$. Then, Lemma~\ref{lem:exists-unumbiguous-tree} implies that there exists another proof tree $T_v$ of $\beta$ w.r.t.~$D$ and $\dep$ whose root node is labeled with $\beta$, such that $|\quot{T_v[\delta]}| =1$, for every fact $\delta$ that occurs in $T_v$. We can then replace the subtree $T[v]$ of $T$ with the tree $T_v$. We do that for each sibling node. Clearly, the tree $T'$ that we obtain via this procedure remains a proof tree of $\alpha$ w.r.t.~$D$ and $\dep$. This holds since we do not modify the label of the root; hence, Property~$(1)$ of Definition~\ref{def:proof-tree} holds. Moreover, Property~$(2)$ of Definition~\ref{def:proof-tree} holds because the set of leaves of $T'$ is a subset of the set of leaves of $T$ (due to the construction in the proof of Lemma~\ref{lem:exists-unumbiguous-tree}). Finally, Property~$(3)$ of Definition~\ref{def:proof-tree} is satisfied for every node in a subtree $T_v$ since, as aforementioned, every $T_v$ is a proof tree of some fact w.r.t.~$D$ and $\dep'$. Moreover, this property holds for every node along the frozen paths because for each such path $v_1\ra\dots\ra v_n$, if $u_1,\dots,u_m$ are the children of $v_j$, then one of these children is $v_{j+1}$ and we do not modify its label or the labels of its siblings (we only replaces the subtrees underneath them).
	It is only left to show that $T'$ satisfies the desired property.
	
	To this end, we observe that: \textit{(1)} every fact $\beta$ in $T'$ occurs polynomialy many times on the frozen paths, and \textit{(2)} there are polynomialy many such sibling nodes. Property $(1)$ holds since $\depth{T} \le f(|D|)$ for some polynomial $f$; hence, there are polynomialy many nodes on a path (at most $f(|D|)+1$). Moreover, since we freeze one path per fact of $D'$, there are $|D'|$ paths. Hence, each fact occurs at most $[f(|D|)+1]\times |D'|$ times on the frozen paths. Property $(2)$ holds because each node on a frozen path has at most $b-1$ siblings, where $b$ is the maximal number of atoms occurring in the body of some rule in $\Sigma$. Therefore, the number of sibling nodes is bounded by $[f(|D|)+1]\times |D'|\times (b-1)$. Due to our construction, for every fact $\beta$ that occurs in a subtree $T''$ of some sibling node, it hold that $|\quot{T''[\beta]}|=1$. Therefore, 
	\[|\quot{T'[\beta]}|\ \le\ [f(|D|)+1]\times |D'| \times b \]
	(one equivalence class for each node on the frozen paths, and one equivalence class for each sibling node). The claim then follows with:
	\[g(|D|)\ =\ [f(|D|)+1]\times |D'| \times b.\]
	This concludes our proof.
\end{proof}

Finally, we prove Lemma~\ref{lem:from-trees-to-dags}, which we recall here:

\begin{manuallemma}{\ref{lem:from-trees-to-dags}}
	\lemmafromtreestodags
\end{manuallemma}
\begin{proof}
	Our goal here is to construct a DAG $G=(V,E,\lambda)$ that contains, for every fact $\beta$ that occurs in $T$, and every equivalence class of $\quot{T[\beta]}$, a single DAG representing this class. To this end, we first add, for every fact $\beta$ that occurs in $T$, and every equivalence class $C_{\beta}$ of $\quot{T[\beta]}$, $k$ nodes $v^{C_{\beta}}_1,\dots,v^{C_{\beta}}_k$ to $V$, where $k$ is the maximal number of occurrences of the trees of $C_{\beta}$ under a single node of $T$. Clearly, $k\le b$, where $b$ is the maximal number of atoms occurring in the body of some rule in $\Sigma$. We then define $\lambda(v^{C_{\beta}}_j)=\beta$ for every $j\in [k]$. Note that the tree $T$ itself belongs to some equivalence class $C\in\quot{T[\alpha]}$ (since $\lambda(v)=\alpha$ for the root node $v$ of $T$ by the definition of proof trees), and this tree does not appear under any node of $T$; hence, for this equivalence class we have that $k=0$. In this case, we add a single node $v^{C}$ to $V$.
	
	Next, we add the edges $(v^{C_{\beta}}_j,v^{C'_{\beta'}}_{1}),\dots,(v^{C_{\beta}}_j,v^{C'_{\beta'}}_{p})$ to $E$ if for every tree $T'$ of $C$ with root $v$, we have that $T'[u]\in C'$ for precisely $p$ children $u$ or $v$. (Observe that this either holds for all trees of $C$ or none of them, since $C$ is an equivalence class.) The number of nodes in $V$ is bounded by $|\base{D,\dep}|\times f(|D|)\times b$ as the number of facts that occur in $T$ is bounded by $|\base{D,\dep}|$, there are at most $f(|D|)$ equivalence classes for each fact, and we have at most $b$ nodes for each combination of a fact and its equivalence class; hence, by defining 
	\[
	g(|D|)\ =\ |\base{D,\dep}|\times f(|D|)\times b,
	\]
	$|V|\le g(|D|)$. It remains to show that $G$ is a proof DAG.
	
	Let $v$ be the root of the proof tree $T$. As aforementioned, the subtree of $v$ (which is $T$ itself) belongs to some equivalence class $C$ of $\quot{T[\alpha]}$, and we have a node $v^C$ in $V$ with $\lambda(v^C)=\alpha$. Clearly, there is no other node $u\neq v$ in $T$ such that $T[u]\in C$; hence, by the definition of $E$, the node $v^C$ has no incoming edges. Contrarily, for every other equivalence class $C_\beta$ of some fact $\beta$, if $V$ contains precisely $k$ nodes $v^{C_\beta}_1,\dots,v^{C_\beta}_k$ corresponding to this class, then there exists, by definition, a node $u$ in $T$ with $\lambda(u)=\delta$ for some $\delta$ that has $k$ children $u_1,\dots,u_k$ whose subtrees all belong to $C_\beta$. In this case, $E$ contains the edges  $(v^{C_{\delta}}_1,v^{C_{\beta}}_{1}),\dots,(v^{C_{\delta}}_1,v^{C_{\beta}}_{k})$, where $C_{\delta}$ is the equivalence class of the subtree $T[u]$. We conclude that $G$ has a single node with no incoming edges, and the label of this node is $\alpha$; hence, property $(1)$ of Definition~\ref{def:proof-dag} holds.
	
	We can similarly show that every leaf $v\in V$ corresponds to a leaf $v'$ of $T$ labelled with the same fact. That is, if $v$ is a leaf of $G$, then it is of the form $v^{C_\beta}_j$, where $\beta$ is the label of some leaf of $T$ and $C_\beta$ is an equivalence class that contains trees with a single node labeled with $\beta$. Since $T$ is a proof tree, we have that $\lambda(v')\in D$ for every leaf $v'$ of $T$ and so $\lambda(v)\in D$ for every leaf $v$ of $G$, and property $(2)$ of Definition~\ref{def:proof-dag} holds. 
	
	Finally, we show that property $(3)$ of Definition~\ref{def:proof-dag} is satisfied by $G$. Let $v^{C_\beta}_j$ be a node in $V$ with $n\ge 1$ outgoing edges $(v^{C_\beta}_j,v^{C_{\delta_1}}_{j_1}),\dots,(v^{C_\beta}_j,v^{C_{\delta_n}}_{j_n})$. By the definition of $V$, there exists a node $u$ in $T$ with $\lambda(u)=\beta$ such that $T[u]\in C_\beta$. 
	By the definition of $E$, if among the equivalence classes $C_{\delta_1},\dots,C_{\delta_n}$ there are precisely $p$ occurrences of some class $C$ (corresponding to some fact $\gamma$), then for every tree of $C_\beta$ (in particular, for $T[u]$), the root node of the tree (in particular, the node $u$) has precisely $p$ children $u_1,\dots,u_p$ such that $T[u_\ell]\in C$ for every $\ell\in[p]$. This also means that $\lambda(u_\ell)=\gamma$ for every $\ell\in[p]$ by the definition of $V$. We conclude that there is a node $u$ of $T$ with children $u_1,\dots,u_n$ such that $\lambda(u)=\beta$ and $\lambda(u_i)=\delta_i$ for every $i\in[n]$. 
	Since $T$ is a proof tree, this means that there exists a rule $R_0(\bar x_0)\ \assign\ R_1(\bar x_1),\ldots,R_n(\bar x_n) \in \dep$ and a function $h : \bigcup_{i \in [n]} \bar x_i \ra \ins{C}$ such that $\lambda(v) = R_0(h(\bar x_0))$, and $\lambda(u_i) = R_i(h(\bar x_i))$ for $i \in [n]$.
	Therefore, we also have that $\lambda(v^{C_\beta}_j)=R_0(h(\bar x_0))$ and $\lambda(v^{C_{\delta_i}}_{j_i})=R_i(h(\bar x_i))$ for every $i\in [n]$, and property $(3)$ indeed holds. We conclude that $G$ is a proof DAG of $\alpha$ w.r.t.~$D$ and $\dep$ with $|V|\le g(|D|)$.
\end{proof}

\medskip
\noindent \textbf{Finalize the Proof.} With the above technical lemmas in place, it is now easy to show the NP upper bound. Fix a Datalog query $Q = (\dep,R)$. Given a database $D$ over $\esch{\dep}$, a tuple $\bar t \in \adom{D}^{\arity{R}}$, and a subset $D'$ of $D$, to decide whether $D' \in \why{\bar t}{D}{Q}$ we simply need to check for the existence of a proof tree $T$ of $R(\bar t)$ w.r.t.~$D$ and $\dep$ such that $\support{T} = D'$. By Proposition~\ref{pro:characterization-all-trees}, this is tantamount to the existence of a compact proof DAG $G$ of $R(\bar t)$ w.r.t.~$D$ and $\dep$ with $\support{G} = D'$.
It is clear that the existence of such a proof DAG can be checked by simply guessing a polynomially-sized (w.r.t.~$|D|$) labeled directed graph $G = (V,E,
\lambda)$, and then checking whether $G$ is acyclic, rooted, and a proof DAG of $R(\bar t)$ w.r.t.~$D$ and $\dep$ with $\support{G} = D'$. Since both steps can be carried out in polynomial time, $\mathsf{Why\text {-}Provenance}[Q]$ is in \NP, and thus, $\mathsf{Why\text {-}Provenance[\DAT]}$ is in \NP~in data complexity.


\medskip

\noindent \underline{\textbf{Lower Bound}}
\smallskip

\noindent We proceed to establish that $\mathsf{Why\text {-}Provenance[\LDAT]}$ is \NP-hard in data complexity. To this ends, we need to show that there exists a linear Datalog query $Q$ such that the problem $\mathsf{Why\text {-}Provenance}[Q]$ is \NP-hard. The proof is via a reduction from $\mathsf{3SAT}$, which takes as input a Boolean formula $\varphi = C_1 \wedge \ldots \wedge C_m$ in 3CNF, where each clause has exactly 3 literals (a Boolean variable $v$ or its negation $\neg v$), and asks whether $\varphi$ is satisfiable.

\medskip
\noindent \textbf{The Linear Datalog Query.}
We start by defining the linear Datalog query $Q = (\dep,R)$. If the name of a variable is not important, then we use $\_$ for a fresh variable occurring only once in $\dep$. By abuse of notation, we use semicolons instead of commas in a tuple expression in order to separate terms with a different semantic meaning. The program $\dep$ follows:
\begin{eqnarray*}
\sigma_1 &:& R(x)\,\, \assign\,\, {\rm Var}(x;z,\_),{\rm Assign}(x,z), \\
\sigma_2 &:& R(x)\,\, \assign\,\, {\rm Var}(x;\_,z), {\rm Assign}(x,z),\\
\sigma_3 &:& {\rm Assign}(x,y)\,\, \assign\,\, C(x,y;\_,\_;\_,\_),{\rm Assign}(x,y), \\
\sigma_4 &:& {\rm Assign}(x,y)\,\, \assign\,\, C(\_,\_;x,y;\_,\_),{\rm Assign}(x,y), \\
\sigma_5 &:& {\rm Assign}(x,y)\,\, \assign\,\, C(\_,\_;\_,\_;x,y),{\rm Assign}(x,y), \\
\sigma_6 &:& {\rm Assign}(x,z)\,\, \assign\,\, {\rm Next}(x,y;z,\_),R(y), \\
\sigma_7 &:& {\rm Assign}(x,z)\,\, \assign\,\, {\rm Next}(x,y;\_,z),R(y), \\
\sigma_8 &:& R(x)\,\, \assign\,\, {\rm Last}(x).
\end{eqnarray*}	
It is easy to verify that $\dep$ is indeed a linear Datalog program.
The high-level idea underlying the program $\dep$ is, for each variable $v$ occurring in a given Boolean formula $\varphi$, to non-deterministically assign a value ($0$ or $1$) to $v$, and then check whether the global assignment makes $\varphi$ true.
The rules $\sigma_1$ and $\sigma_2$ are responsible for assigning $0$ or $1$ to a variable $v$; the last two positions of the relation ${\rm Var}$ always store the values $0$ and $1$, respectively.
The rules $\sigma_3$, $\sigma_4$, and $\sigma_5$ are responsible for checking whether an assignment for a certain variable $v$ makes a literal that mentions $v$ in some clause $C$ (and thus, $C$ itself) true.
The rules $\sigma_6$ and $\sigma_7$ are responsible, once we are done with a certain variable $v$, to consider the variable $u$ that comes after $v$; the relation ${\rm Next}$ provides an ordering of the variables in the given 3CNF Boolean formula.
Finally, once all the variables of the formula have been considered, $\sigma_8$ brings us to the last variable, which is a dummy one, that indicates the end of the above process.

\medskip
\noindent \textbf{From $\mathsf{3SAT}$ to  $\mathsf{Why\text {-}Provenance}[Q]$.} We now establish that $\mathsf{Why\text {-}Provenance}[Q]$ is \NP-hard by reducing from $\mathsf{3SAT}$.
Consider a 3CNF Boolean formula $\varphi = C_1 \wedge \cdots \wedge C_m$ with $n$ Boolean variables $v_1,\ldots,v_n$. For a literal $\ell$, we write $\lvar{\ell}$ for the variable occurring in $\ell$, and $\lsign{\ell}$ for the number $1$ (resp., $0$) if $\ell$ is a variable (resp., the negation of a variable).
We define $D_\varphi$ as the database over $\esch{\dep}$
\begin{eqnarray*}
	&& \{{\rm Var}(v_i;0,1) \mid i \in [n]\}\\
	&\cup& \{{\rm Next}(v_i,v_{i+1};0,1) \mid i \in [n-1]\}\\
	&\cup& \{{\rm Next}(v_n,\bullet;0,1), {\rm Last}(\bullet)\}\\
	&\cup& \{C(\lvar{\ell_1},\lsign{\ell_1};\lvar{\ell_2},\lsign{\ell_2};\lvar{\ell_3},\lsign{\ell_3}) \mid \\
	&& \hspace{28mm} (\ell_1 \vee \ell_2 \vee \ell_3) \text{ is a clause of } \varphi \},
\end{eqnarray*}
which essentially stores the clauses of $\varphi$ and provides an ordering of the variables occurring in $\varphi$, with $\bullet$ being a dummy one.
We can show the next lemma, which essentially states that the above construction leads to a correct polynomial-time reduction from $\mathsf{3SAT}$ to $\mathsf{Why\text {-}Provenance}[Q]$:

\begin{lemma}\label{lem:reduction-from-3sat}
		$D_\varphi$ can be constructed in polynomial time in $\varphi$. Furthermore,
		$\varphi$ is satisfiable iff $D_\varphi \in \why{(v_1)}{D_\varphi}{Q}$.
\end{lemma}
\begin{proof}
Clearly, $D_{\varphi}$ is over $\esch{\dep}$, and $D_{\varphi},(v_1)$ can be constructed in polynomial time w.r.t.~$\varphi$. We now show that $\varphi$ is satisfiable if and only if there exists a proof tree $T$ of $R(v_1)$ w.r.t.~$D_{\varphi}$ and $\Sigma$ such that $\support{T} = D_{\varphi}$.

We start with the $(\Rightarrow)$ direction. Assume $\varphi$ is satisfiable via the truth assignment $\mu$. For every variable $v_i$, we denote by $S_{v_i}$ the set of facts of the form $C(v_i,\mu(v_i);\_,\_;\_,\_)$, $C(\_,\_;v_i,\mu(v_i);\_,\_)$, and $C(\_,\_;\_,\_;v_i,\mu(v_i))$. We define the labeled rooted tree $T=(V,E,\lambda)$ where the root $v \in V$ is labeled with $\lambda(v) = R(v_1)$, and inductively:
	\begin{enumerate}
		\item if $v \in V$ is labeled with $\lambda(v) = R(v_i)$, for $i \in [n]$, then $v$ has two children $u_1,u_2$, where 
		$\lambda(u_1)$ is the (only) fact in $D_{\varphi}$ of the form $Var(v_i;0,1)$ and  $\lambda(u_2) = {\rm Assign}(v_i,\mu(v_i))$.
	\item if $v \in V$ is labeled with $\lambda(v) = {\rm Assign}(v_i,\mu(v_i))$ for $i \in [n]$, and $S_{v_i}$ is not empty, then $v$ 
has two children $u_1,u_2$, where $\lambda(u_1)=f$ for some fact $f\in S_{v_i}$ and $\lambda(u_2) = {\rm Assign}(v_i,\mu(v_i))$. We then remove $f$ from $S_{v_i}$. 
\item if $v \in V$ is labeled with $\lambda(v) = {\rm Assign}(v_i,\mu(v_i))$ for $i \in [n-1]$, and $S_{v_i}$ is empty, then $v$ 
has two children $u_1,u_2$, where $\lambda(u_1)={\rm Next}(v_i,v_{i+1};0,1)$ and $\lambda(u_2) = R(v_{i+1})$.
\item if $v \in V$ is labeled with $\lambda(v) = {\rm Assign}(v_n,\mu(v_n))$ and $S_{v_n}$ is empty, then $v$
has two children $u_1,u_2$, where $\lambda(u_1)={\rm Next}(v_n,\bullet;0,1)$ and and $\lambda(u_2) = R(\bullet)$.
\item if $v \in V$ is labeled with $\lambda(v) = R(\bullet)$, then $v$ has one child $u_1$, where $\lambda(u_1)={\rm Last}(\bullet)$.
	\end{enumerate}
	One can verify that $T$ is indeed a proof tree of $R(v_1)$ w.r.t.~$D_\varphi$ and $\Sigma$. In particular, the root is labeled with $R(v_1)$ by construction, and it is easy to see that the labels of the leaves all occur in $D_\varphi$. Furthermore, since $\dep$ is linear, at each level of $T$ there exists at most one non-leaf node. The edges from this node are defined in items~$(1)-(5)$ above. The edges defined in item~(1) are obtained by considering rules $\sigma_1,\sigma_2$, the edges defined in item~(2) are obtained by considering rules $\sigma_3,\sigma_4,\sigma_5$, and the edges defined in items~(3) and~(4) are obtained by considering rules $\sigma_6,\sigma_7$. Finally, the edge defined in item~(5) is obtained by considering rule $\sigma_8$. Hence, Property~$(3)$ of Definition~\ref{def:proof-tree} holds.
 
 Regarding the support of $T$, since $\mu$ is a satisfying assignment, every clause of $\varphi$ contains at least one literal $\ell_j$ such that $\mu(\lvar{\ell_j})=\lsign{\ell_j}$. Item~(1) ensures that we follow this satisfying assignment (i.e., add a node labeled with ${\rm Assign}(v_i,\mu(v_i))$ under a node $R(v_i)$ corresponding to the variable $v_i$). Item~(2) ensures that whenever we consider a variable $v_i$, we touch every atom over the predicate $C$ corresponding to a clause that contains the variable $v_i$ with the correct sign (positive if $\mu(v_i)=1$ or negated if $\mu(v_i)=0$); that is, every clause satisfied by the assignment to this variable. Items~(3) and~(4) ensure that we go over all the variables, as the atoms over the predicate ${\rm Next}$ only allow us to move from a variable $v_i$ to the next variable $v_{i+1}$. This, in turn, ensures that we touch every atom over the predicate ${\rm Var}$ by item~(1), every atom over the predicate $C$ by item~(2) (since, as aforementioned, each such clause contains a literal with a sign that is consistent with the assignment to the corresponding variable), and every atom over the predicate ${\rm Next}$ by items~(3) and~(4). Finally, item~(5) ensures that we touch the atom ${\rm Last}(\bullet)$. Hence, we conclude that $\support{T} = D_{\varphi}$.

Next, we prove the $(\Leftarrow)$ direction. Assume that $T=(V,E,\lambda)$ is a proof tree for $R(v_1)$ w.r.t.~$D_\varphi$ and $\Sigma$ such that $\support{T}=D_{\varphi}$. 
	Note that by linearity of $\dep$, at each level of $T$, besides the last one, there exists precisely one non-leaf node. These nodes thus form a path $u_0,u_1,\ldots$ in $T$. Note that $u_0$, i.e., the root, is necessarily labeled with the fact $R(v_1)$ by the definition of $Q$. Then, the children of the root are labeled with ${\rm Var}(v_1;0,1)$ and ${\rm Assign}(v_1,b_1)$ (this is the node $u_1$) for some $b_1\in\{0,1\}$, based on rule $\sigma_1$ or $\sigma_2$, as these are the only options. At this point, either the node $u_2$ is labeled with $R(v_2)$ based on rule $\sigma_6$ or $\sigma_7$, or $u_2,\dots,u_j$ for some $j\ge 2$ are all labeled with ${\rm Assign}(v_1,b_1)$, and then $u_{j+1}$ is labeled with $R(v_2)$. Then, the same reasoning applies to $v_2$ and all the variables that come next.

    Since the atoms over the predicate ${\rm Next}$ only consider consecutive variables, it is only possible to move from $v_i$ to $v_{i+1}$ along the path, and since $\support{T}=D_\varphi$, the set $\support{T}$ contains all the atoms over ${\rm Next}$, and we conclude that we go over all variables. Moreover, for every variable $v_i$, once we select the label ${\rm Assign}(v_i,b_i)$ for the child of the node labeled with $R(v_i)$, there is no rule that allows us to obtain the label ${\rm Assign}(v_i,1-b_i)$, and so the labels over the predicate ${\rm Assign}$ correspond to a truth assignment $\mu$ to the variables $v_1,\dots,v_n$. Finally, since all the atoms over the predicate $C$ appear in the support, the rules $\sigma_3,\sigma_4,\sigma_5$ imply that for every clause $(\ell_1\vee\ell_2\vee\ell_3)$, either ${\rm Assign}(\lvar{\ell_1},\lsign{\ell_1})$, or ${\rm Assign}(\lvar{\ell_2},\lsign{\ell_2})$, or ${\rm ssign}(\lvar{\ell_3},\lsign{\ell_3})$ appears as a label along the path. Hence, $\mu$ is a satisfying truth assignment.
\end{proof}

By Lemma~\ref{lem:reduction-from-3sat}, $\mathsf{Why\text {-}Provenance}[Q]$ is \NP-hard, and thus, $\mathsf{Why\text {-}Provenance[\LDAT]}$ is \NP-hard in data complexity.

\subsection{Non-Recursive Queries}

We now focus on non-recursive Datalog queries, and give the full proof of Theorem~\ref{the:non-recursive-complexity}, which we recall here:

\begin{manualtheorem}{\ref{the:non-recursive-complexity}}
	\thenonrecursivecomplexity
\end{manualtheorem}

Given a non-recursive Datalog query $Q$, we have already explained in the main body how the FO query $Q_{\mi{FO}}$ is constructed. Our main task here is to establish the correctness of this construction, i.e., Lemma~\ref{lem:fo-tree-equiv}, which we recall here:

\begin{manuallemma}{\ref{lem:fo-tree-equiv}}
	\lemmaforewriting
\end{manuallemma}

\begin{proof}
	We first discuss the $(\Rightarrow)$ direction. There is a proof tree $T$ of $R(\bar t)$ w.r.t.~$D$ and $\dep$ with $\support{T} = D'$. Thus, $T$ is a $Q$-tree, the set $\cq{Q}$ contains the CQ $\cq{T}$, that is, the CQ induced by $T$, and the set $\cqeq{Q}$ contains a CQ $\varphi(\bar y)$ that is the same as $\cq{T}$ up to variable renaming. It is then an easy exercise to show that that $D'$ satisfies the sentence $\psi_{\varphi(\bar y)}[\bar x / \bar t]$. This in turn implies that $\bar t \in Q_{\mi{FO}}(D')$.
	
	We now discuss the $(\Leftarrow)$ direction. There is a CQ $\varphi(\bar y) \in \cqeq{Q}$ such that $D'$ satisfies the sentence $\psi_{\varphi(\bar y)}[\bar x / \bar t]$. It is then an easy exercise to show that there exists a proof tree of $R(\bar t)$ w.r.t.~$D$ and $\dep$ with $\support{T} = D'$. This in turn implies that $D \in \why{R(\bar t)}{D}{Q}$.
\end{proof}
\section{Non-Recursive Proof Trees}\label{appsec:refined-trees}

As discussed in the main body of the paper (see Section~\ref{sec:refined-trees}), the standard notion of why-provenance, which was defined in Section~\ref{sec:why-provenance} and thoroughly analyzed in Section~\ref{sec:complexity}, relies on arbitrary proof trees without any restriction. Indeed, a subset of the input database $D$ belongs to the why-provenance of a tuple $\bar t$ w.r.t.~$D$ and a Datalog query $Q = (\dep,R)$ as long as it is the support of {\em any} proof tree of $R(\bar t)$ w.r.t.~$D$ and $\dep$.
However, as already discussed in the literature (see, e.g., the recent work~\cite{BBPT22}), there are proof trees that are counterintuitive. Such a proof tree is the second one in Example~\ref{exa:proof-tree} as the fact $A(a)$ is derived from itself.
Now, a member $D'$ of $\why{\bar t}{D}{Q}$, witnessed via such an unnatural proof tree, might be classified as a counterintuitive explanation of $\bar t$ as it does not correspond to an intuitive derivation process, which can be extracted from the proof tree, that derives from $D'$ the fact $R(\bar t)$.
This leads to the need of considering refined classes of proof trees that overcome the conceptual limitations of arbitrary proof trees, which in turn lead to conceptually intuitive explanations.
In this section, we focus on the class of non-recursive proof trees. Roughly, a non-recursive proof tree is a proof
tree that does not contain two nodes labeled with the same fact and such that one is the descendant of the other, which reflects the above discussion that using a fact to derive itself is a counterintuitive phenomenon. The formal definition follows:

\begin{definition}[\textbf{Non-Recursive Proof Tree}]\label{def:non-recursove-proof-tree}
	Consider a Datalog program $\dep$, a database $D$ over $\esch{\dep}$, and a fact $\alpha$ over $\sch{\dep}$. A {\em non-recursive proof tree of $\alpha$ w.r.t.~$D$ and $\dep$} is a proof tree $T = (V,E,\lambda)$ of $\alpha$ w.r.t.~$D$ and $\dep$ such that, for every two nodes $v,u \in V$, if there is a path from $v$ to $u$ in $T$, then $\lambda(v) \neq \lambda(u)$. \hfill\markfull
\end{definition}

We now define why-provenance relative to non-recursive proof trees. Given a Datalog query $Q = (\dep,R)$, a database $D$ over $\esch{\dep}$, and a tuple $\bar t \in \adom{D}^{\arity{R}}$, the {\em why-provenance of $\bar t$ w.r.t.~$D$ and $Q$ relative to non-recursive proof trees} is defined as the family of sets of facts
\begin{multline*}
\{\support{T} \mid T \text{ is a non-recursive proof tree of }\\
R(\bar t) \text{ w.r.t. } D \text{ and } \dep\}
\end{multline*}
denoted $\nrwhy{\bar t}{D}{Q}$.
Then, the algorithmic problems
\[
\mathsf{Why\text {-}Provenance_{NR}[C]} \quad \text{and} \quad  \mathsf{Why\text {-}Provenance_{NR}}[Q]
\] 
are defined in the exact same way as those in Section~\ref{sec:why-provenance} with the key difference that $\nrwhy{\bar t}{D}{Q}$ is used instead of $\why{\bar t}{D}{Q}$, i.e., the question is whether the given subset of the database belongs to $\nrwhy{\bar t}{D}{Q}$. 
We proceed to study the data complexity of $\mathsf{Why\text {-}Provenance_{NR}[C]}$ for each class $\class{C} \in \{\DAT,\LDAT,\NRDAT\}$. As shown in the case of arbitrary proof trees, for recursive queries, even if the recursion is restricted to be linear, the problem is in general intractable, whereas for non-recursive queries it is highly tractable. We first focus on recursive queries.

\subsection{Recursive Queries}

We show the following complexity result:

\begin{theorem}\label{the:complexity-non-recursive-proof-trees-np}
	$\mathsf{Why\text {-}Provenance_{NR}[C]}$ is \NP-complete in data complexity, for each class $\class{C} \in \{\DAT,\LDAT\}$.
\end{theorem}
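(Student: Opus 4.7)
The plan is to prove the theorem in two parts: $\mathsf{Why\text{-}Provenance_{NR}[\DAT]}$ is in \NP{} in data complexity, and $\mathsf{Why\text{-}Provenance_{NR}[\LDAT]}$ is \NP-hard in data complexity.

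For the upper bound, the strategy is to refine the proof-DAG machinery of Section~\ref{sec:recursive-complexity}. We call a proof DAG $G$ \emph{non-recursive} if no directed path in $G$ visits two nodes carrying the same label; the unravelling of such $G$ (as in the proof of Proposition~\ref{pro:characterization-all-trees}) is then automatically a non-recursive proof tree, since every root-to-leaf path in the unravelled tree is witnessed by a directed path in $G$ with pairwise distinct labels. The \NP{} algorithm guesses a polynomially-sized labelled directed graph $G$ and verifies in polynomial time that it is acyclic, has a unique source labelled $R(\bar t)$, is a proof DAG of $R(\bar t)$ w.r.t.\ $D$ and $\dep$ with $\support{G} = D'$, and has no two equally labelled nodes connected by a directed path (a pairwise reachability check). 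The non-trivial ingredient is the converse direction of the corresponding characterization: starting from a non-recursive proof tree $T$ of $R(\bar t)$ with $\support{T} = D'$, we must construct a non-recursive proof DAG $G$ with the same support and size polynomial in $|D|$. We plan to revisit the three steps underlying Proposition~\ref{pro:characterization-all-trees} and verify that each preserves non-recursiveness. Depth reduction (Lemma~\ref{lem:depth-reduction}) is free: labels along every root-to-leaf path of a non-recursive tree are pairwise distinct and drawn from $\base{D,\dep}$, so the depth is at most $|\base{D,\dep}|$, which is polynomial in $|D|$ for fixed $\dep$. For the subtree-count reduction (Lemma~\ref{lem:scount-reduction}), we apply Lemma~\ref{lem:exists-unumbiguous-tree} to each sibling subtree $T[v]$ in isolation: the tree $T_v$ produced there uses only labels already occurring in $T[v]$, and non-recursiveness of $T$ guarantees that no label of $T[v]$ coincides with a label on the ancestor chain of $v$, so the substitution cannot create an ancestor-descendant pair with equal labels; moreover, $T_v$ has subtree count one, which internally rules out same-label ancestor-descendant pairs (such a pair would force the subtree to be isomorphic to a proper subtree of itself, impossible for a finite tree). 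Finally, Lemma~\ref{lem:from-trees-to-dags} yields a polynomial-size proof DAG $G$, and a repeated-label directed path in $G$ would correspond to a same-label ancestor-descendant pair in the input tree, contradicting non-recursiveness.

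For the lower bound, the plan is to adapt the 3SAT reduction of Lemma~\ref{lem:reduction-from-3sat}. The direct reduction does not work because the proof tree constructed there repeats ${\rm Assign}(v_i,b_i)$ along a single spine, once per clause satisfied by $v_i$. To break this recursion, we augment the program with a step counter: replace ${\rm Assign}(x,y)$ by ${\rm Assign}(x,y,k)$ and add to the database a linearly ordered chain of extensional ${\rm Step}(k,k')$ facts. Each clause-checking rule then propagates the counter, for instance ${\rm Assign}(x,y,k') \,\assign\, C(x,y;\_,\_;\_,\_),{\rm Assign}(x,y,k),{\rm Step}(k,k')$, so that successive occurrences of ${\rm Assign}(v_i,b_i,\cdot)$ along the spine carry distinct counter values and are therefore distinct atoms. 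Linearity is preserved because the only additional body atom is extensional. The correctness argument of Lemma~\ref{lem:reduction-from-3sat} goes through after minor bookkeeping to track the counter through the spine and to ensure that the modified database appears in full in the support of the resulting proof tree.

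We expect the main obstacle to be formalising the adaptation of Lemma~\ref{lem:scount-reduction} for non-recursive trees, namely verifying that replacing sibling subtrees across the entire tree introduces no label clash anywhere, not only inside a single branch. On the lower-bound side, setting up the counter so that both directions of the correctness argument go through while keeping the program linear should be routine but somewhat tedious.
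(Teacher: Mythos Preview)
Your upper-bound argument follows the paper's route and is correct; you even supply more justification than the paper for why the subtree-count reduction preserves non-recursiveness (the paper merely asserts it).

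For the lower bound you diverge from the paper, and your sketch has a real gap. The paper reduces from $\mathsf{Ham\text{-}Cycle}$, not $\mathsf{3SAT}$: a linear program with an intensional predicate ${\rm Path}$ is arranged so that the spine carries atoms ${\rm Path}(v_0),\ldots,{\rm Path}(v_{n-1})$, non-recursiveness forces the $v_i$ to be pairwise distinct graph vertices, and the support condition (all $N(v)$ facts touched) forces every vertex to appear, so non-recursiveness is the \emph{source} of hardness rather than an obstacle to be neutralised. Your counter idea instead tries to neutralise the repetition of ${\rm Assign}(v_i,b_i)$, but as stated it does not work: if the database contains a chain ${\rm Step}(1,2),\ldots,{\rm Step}(N{-}1,N)$ and every clause-checking rule consumes one ${\rm Step}$ fact, then for the support to equal the whole database the spine must contain exactly $N{-}1$ clause-rule applications; yet in the construction of Lemma~\ref{lem:reduction-from-3sat} that number is $\sum_i |S_{v_i}|$, which depends on the chosen satisfying assignment $\mu$, so no fixed $N$ works. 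The repair is to tie the counter to the clause index and add ``skip'' rules that advance the counter without touching a $C$-fact when the current variable's value does not satisfy the current clause, so that each variable contributes exactly $m$ steps regardless of $\mu$. This is exactly what the paper does for its minimal-depth lower bound (Lemma~\ref{lem:reduction-from-3sat-md}), and the proof tree built there is non-recursive, so your strategy is salvageable---but the missing ingredient is not ``minor bookkeeping''.
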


To prove Theorem~\ref{the:complexity-non-recursive-proof-trees-np}, it suffices to show that:
\begin{itemize}
	\item $\mathsf{Why\text {-}Provenance_{NR}[\DAT]}$ is in \NP~in data complexity.
	\item $\mathsf{Why\text {-}Provenance_{NR}[\LDAT]}$ is \NP-hard in data complexity.
\end{itemize}
Let us first focus on the upper bound.

\medskip

\noindent \underline{\textbf{Upper Bound}}
\smallskip

\noindent The proof is similar to the proof of the analogous result for $\mathsf{Why\text {-}Provenance[\DAT]}$ established in Section~\ref{sec:why-provenance}.
Given a Datalog program $\dep$, a database $D$ over $\esch{\dep}$, and a fact $\alpha$ over $\sch{\dep}$, we first define the notion of {\em non-recursive proof DAG of $\alpha$ w.r.t.~$D$ and $\dep$}.
We then proceed to establish a result analogous to Proposition~\ref{pro:characterization-all-trees}: the existence of a non-recursive proof tree of $\alpha$ w.r.t.~$D$ and $\dep$ with $\support{T} = D' \subseteq D$ is equivalent to the existence of a polynomially-sized non-recursive proof DAG $G$ of $\alpha$ w.r.t.~$D$ and $\dep$ with $\support{G} = D'$. This in turn leads to a guess-and-check algorithm that runs in polynomial time.
Let us formalize the above high-level description.

\begin{definition}[\textbf{Non-Recursive Proof DAG}]\label{def:non-recursive-proof-dag}
	Consider a Datalog program $\dep$, a database $D$ over $\esch{\dep}$, and a fact $\alpha$ over $\sch{\dep}$. 
	A {\em non-recursive proof DAG of $\alpha$ w.r.t.~$D$ and $\dep$} is a proof DAG $G=(V,E,\lambda)$ of $\alpha$ w.r.t.~$D$ and $\dep$ such that, for every two nodes $v,u \in V$, if there is a path from $v$ to $u$ in $G$, then $\lambda(v) \neq \lambda(u)$. \hfill\markfull
\end{definition}

The analogous result to Proposition~\ref{pro:characterization-all-trees} follows:

\begin{proposition}\label{pro:characterization-nr-trees}
		For a Datalog program $\dep$, there is a polynomial $f$ such that, for every database $D$ over $\esch{\dep}$, fact $\alpha$ over $\sch{\dep}$, and $D' \subseteq D$, the following are equivalent:
	\begin{enumerate}
		\item There is a non-recursive proof tree $T$ of $\alpha$ w.r.t.~$D$ and $\dep$ such that $\support{T} = D'$.
		\item There is a non-recursive proof DAG $G = (V,E,\lambda)$ of $\alpha$ w.r.t.~$D$ and $\dep$ with $\support{G} = D'$ and $|V| \leq f(|D|)$.
	\end{enumerate}
\end{proposition}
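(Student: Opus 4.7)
The approach I would take mirrors the three-step strategy used to prove Proposition~\ref{pro:characterization-all-trees}, adapted for the non-recursive setting. The direction $(2) \Rightarrow (1)$ is essentially unchanged: I will unravel the non-recursive proof DAG $G$ into a proof tree $T$ by the same breadth-first duplication procedure described earlier, and observe that every ancestor-descendant pair in $T$ projects onto an ancestor-descendant pair in $G$ carrying the same labels. Since $G$ is a DAG, distinct nodes on a root-to-leaf path of $T$ project to distinct nodes in $G$, so non-recursiveness of $G$ transfers to $T$.

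For the converse direction $(1) \Rightarrow (2)$, I will first observe that any non-recursive proof tree $T$ already has depth at most $|\base{D,\dep}|$, which is polynomial in $|D|$ for a fixed program $\dep$. Hence the depth-reduction step (analogue of Lemma~\ref{lem:depth-reduction}) is not needed, and I can proceed directly to a subtree-count reduction (analogue of Lemma~\ref{lem:scount-reduction}). Given a non-recursive $T$ with $\support{T} = D'$, I will freeze one path per fact of $D'$ and, for each sibling node $v$ along those paths, replace $T[v]$ by an unambiguous proof tree $T_v$ of $\lambda(v)$ obtained by applying Lemma~\ref{lem:exists-unumbiguous-tree} to $T[v]$ (not to the whole $T$). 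Three observations make this replacement safe: $(i)$ unambiguous proof trees are automatically non-recursive, since a repeated label on a root-to-leaf path would force a subtree to be isomorphic to a proper subtree of itself, impossible for finite trees; $(ii)$ the construction in Lemma~\ref{lem:exists-unumbiguous-tree} uses only subtrees of its input, so the labels of $T_v$ lie inside those of $T[v]$, which by non-recursiveness of $T$ are disjoint from the ancestor labels of $v$; $(iii)$ the leaves of $T_v$ lie among the leaves of $T[v]$, so the support of the modified tree stays within $D'$. The same counting argument as in Lemma~\ref{lem:scount-reduction} then yields polynomial subtree count.

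The final step applies the construction of Lemma~\ref{lem:from-trees-to-dags} verbatim to the resulting tree $T'$, giving a proof DAG $G$ of polynomial size with $\support{G} = D'$. I will then verify that $G$ is automatically non-recursive: an edge in $G$ from the representative of a class $C$ to that of a class $C'$ witnesses an ancestor-descendant pair in some tree of $C$, so any path in $G$ corresponds to a nested chain of subtrees along a single root-to-leaf path of $T'$; since $T'$ is non-recursive, the labels along that path---and hence along the corresponding $G$-path---are all distinct.

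The main obstacle I expect is the subtree replacement step: showing that plugging in unambiguous proof trees does not introduce recursive paths. The delicate point is that the unambiguous tree $T_v$ could in principle contain a label already present on the frozen path above $v$. Overcoming this rests on combining two facts: Lemma~\ref{lem:exists-unumbiguous-tree} introduces no new labels beyond those of its input, and non-recursiveness of $T$ forces the labels inside $T[v]$ to be disjoint from those above $v$. Once this is in place, the remaining arguments are routine adaptations of the machinery already established for the case of arbitrary proof trees.
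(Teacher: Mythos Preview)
Your proposal is correct and follows essentially the same approach as the paper: skip the depth-reduction step (since non-recursive trees are already shallow), then invoke the subtree-count reduction and the tree-to-DAG construction, checking that each step preserves non-recursiveness. The paper merely asserts that the constructions underlying Lemmas~\ref{lem:scount-reduction} and~\ref{lem:from-trees-to-dags} preserve non-recursiveness, whereas you spell out the reasons---your observations (i)--(iii) and the path-correspondence argument for the DAG are exactly the details that justify the paper's assertions.
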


The direction (2) implies (1) is shown by ``unravelling'' the non-recursive proof DAG $G$ into a non-recursive proof $T$ with $\support{G} = \support{T}$. We use the same ``unravelling'' construction as in the proof of direction (2) implies (1) of Proposition~\ref{pro:characterization-all-trees}, which {\em preserves non-recursiveness}.

We now proceed with (1) implies (2). The underlying construction proceeds in two main steps captured by Lemmas~\ref{lem:scount-reduction-nr} and~\ref{lem:from-trees-to-dags-nr} given below.


\medskip 

$\bullet$ The \textbf{\textit{first step}} is to show that a non-recursive proof tree $T$ of $\alpha$ w.r.t.~$D$ and $\dep$ with $\support{T} = D'$ can be converted into a non-recursive proof tree $T'$ of $\alpha$ w.r.t.~$D$ and $\dep$ with $\support{T'} = D'$ that has ``small'' subtree count.

\begin{lemma}\label{lem:scount-reduction-nr}
		For each Datalog program $\dep$, there is a polynomial $f$ such that, for every database $D$ over $\esch{\dep}$, fact $\alpha$ over $\sch{\dep}$, and $D' \subseteq D$, if there is a non-recursive proof tree $T$ of $\alpha$ w.r.t.~$D$ and $\dep$ with $\support{T} = D'$, then there is also such a proof tree $T'$ with $\scount{T'} \leq f(|D|)$.
\end{lemma}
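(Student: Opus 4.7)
The plan is to follow the same general strategy as in Lemma~\ref{lem:scount-reduction}, but with two simplifications tailored to the non-recursive setting. First, observe that in a non-recursive proof tree $T$ of $\alpha$ w.r.t.~$D$ and $\dep$, any root-to-leaf path carries pairwise distinct labels, and every such label comes from $\base{D,\dep}$. Hence $\depth{T}\leq|\base{D,\dep}|$, which is polynomial in $|D|$ for fixed $\dep$. In particular, the analogue of Lemma~\ref{lem:depth-reduction} is not needed here: the starting tree is already of polynomially bounded depth.

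Next, I would replicate the freeze-and-replace construction used in the proof of Lemma~\ref{lem:scount-reduction}. For each fact $\beta\in D'$ choose one root-to-leaf path in $T$ ending at a leaf labelled $\beta$, and let $P$ be the union of these paths; since $|D'|\leq|D|$ and each such path has length at most $|\base{D,\dep}|$, $P$ contains only polynomially many nodes. At every sibling of $P$ (a node off $P$ whose parent lies on $P$), I would replace the subtree hanging there by an unambiguous proof tree of the same root label, obtained by applying Lemma~\ref{lem:exists-unumbiguous-tree} to that subtree. Call the resulting tree $T'$. The usual counting then yields $\scount{T'}\leq f(|D|)$ for a polynomial $f$: along the frozen paths each label contributes at most polynomially many equivalence classes, and each replacement subtree contributes exactly one equivalence class per label by construction. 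Moreover $\support{T'}=D'$, since the frozen paths contribute every fact of $D'$ while each replacement subtree $T_u$ satisfies $\support{T_u}\subseteq\support{T[u]}\subseteq D'$.

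The main obstacle, absent in the proof of Lemma~\ref{lem:scount-reduction}, is to ensure that $T'$ is still non-recursive. Here two observations suffice. First, any proof tree $S$ with $|\quot{S[\gamma]}|=1$ for every label $\gamma$ occurring in $S$ is itself non-recursive: if some $\gamma$ appeared at two nodes $v,w$ with $w$ a proper descendant of $v$, then $S[v]\eqtree S[w]$ by the unambiguity property, but a finite tree cannot be isomorphic to one of its proper subtrees. Hence each replacement subtree is non-recursive in isolation. Second, the construction in the proof of Lemma~\ref{lem:exists-unumbiguous-tree} only reassembles subtrees of the input, so the labels occurring in the replacement for a sibling $u$ are a subset of the labels occurring in the original $T[u]$. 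Since $T$ itself is non-recursive, none of the labels of $T[u]$ coincides with a label of a proper ancestor of $u$ in $T$, and in particular with any label on the frozen portion above $u$. Consequently no new repeated-label path is created across the boundary between the frozen skeleton and a replacement subtree, so $T'$ is non-recursive. This closes the argument modulo the routine polynomial bookkeeping for $f$.
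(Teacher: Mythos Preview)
Your proposal is correct and follows essentially the same approach as the paper: observe that a non-recursive proof tree already has depth bounded by $|\base{D,\dep}|$, then invoke the freeze-and-replace construction from Lemma~\ref{lem:scount-reduction}. The paper simply asserts that this construction \emph{preserves non-recursiveness}; you supply the justification the paper omits, via the two observations that (i) an unambiguous tree is automatically non-recursive, and (ii) the replacement at a sibling $u$ uses only labels already present in $T[u]$, which by non-recursiveness of $T$ are disjoint from the labels of the ancestors of $u$.
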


\begin{proof}
	We first observe that the non-recursive proof tree $T$, by definition, has ``small'' depth. In particular, since no two nodes on a path of $T$ have the same label, the length of a path is bounded by the number of labels, that is, $|\base{D,\dep}|$, which is clearly polynomial in the size of the database $D$.
	The other crucial observation is that the construction underlying Lemma~\ref{lem:scount-reduction}, which converts a proof tree of ``small'' depth into a proof tree of ``small`` subtree count with the same support {\em preserves non-recursiveness}.
	Consequently, we can apply the construction underlying Lemma~\ref{lem:scount-reduction} to the non-recursive proof tree $T$ and get a non-recursive proof tree $T'$ with $\support{T} = \support{T'}$ such that $\scount{T'} \leq f(|D|)$, where $f$ is the polynomial provided by Lemma~\ref{lem:scount-reduction}.
\end{proof}

\smallskip

$\bullet$ The \textbf{\textit{second step}} shows that a non-recursive proof tree $T$ of $\alpha$ w.r.t.~$D$ and $\dep$ with $\support{T} = D'$ of ``small'' subtree count can be converted into a compact non-recursive proof DAG $G$ of $\alpha$ w.r.t.~$D$ and $\dep$ with $\support{G} = D'$.

\begin{lemma}\label{lem:from-trees-to-dags-nr}
	For each Datalog program $\dep$ and a polynomial $f$, there is a polynomial $g$ such that, for every database $D$ over $\esch{\dep}$, fact $\alpha$, and $D' \subseteq D$, if there is a non-recursive proof tree $T$ of $\alpha$ w.r.t.~$D$ and $\dep$ with $\support{T} = D'$ and $\scount{T} \leq f(|D|)$, then there is a non-recursive proof DAG $G = (V,E,\lambda)$ of $\alpha$ w.r.t.~$D$ and $\dep$ with $\support{G} = D'$ and $|V| \leq g(|D|)$.
\end{lemma}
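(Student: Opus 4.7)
The plan is to reuse verbatim the construction from the proof of Lemma~\ref{lem:from-trees-to-dags} applied to the non-recursive proof tree $T$, and then argue that the resulting proof DAG $G$ automatically inherits non-recursiveness from $T$. Concretely, I set $g$ to be the polynomial provided by Lemma~\ref{lem:from-trees-to-dags}, and I take as $G$ the DAG built there: for each fact $\beta$ occurring in $T$ and each equivalence class $C_\beta \in \quot{T[\beta]}$, one introduces a bounded number of nodes $v^{C_\beta}_1,\ldots,v^{C_\beta}_k$ labeled $\beta$, with edges dictated by the child structure of a representative tree in $C_\beta$. By Lemma~\ref{lem:from-trees-to-dags} we already know that $G$ satisfies properties (1)--(3) of Definition~\ref{def:proof-dag}, that $\support{G}=D'$, and that $|V|\le g(|D|)$, so the only non-trivial task is to establish non-recursiveness of $G$ in the sense of Definition~\ref{def:non-recursive-proof-dag}.

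The central observation, which I would formulate and prove as a short auxiliary claim, is a \emph{path-lifting lemma}: whenever there is a directed path
\[
v^{C_{\beta_0}}_{j_0}\ \to\ v^{C_{\beta_1}}_{j_1}\ \to\ \cdots\ \to\ v^{C_{\beta_m}}_{j_m}
\]
in $G$, then for every node $w_0$ of $T$ with $T[w_0]\in C_{\beta_0}$, there exists a chain of descendants $w_0,w_1,\ldots,w_m$ in $T$ with $T[w_i]\in C_{\beta_i}$, and in particular $\lambda_T(w_i)=\beta_i$ for all $i\in\{0,\ldots,m\}$. This is proved by induction on $m$, using directly the definition of edges of $G$: an edge from $v^{C_\beta}_j$ to $v^{C'_{\beta'}}_i$ exists precisely because every tree in $C_\beta$ has, among the children of its root, at least one whose subtree lies in $C'_{\beta'}$.

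Granted the path-lifting lemma, non-recursiveness of $G$ is immediate: if two nodes on a directed path of $G$ shared a label, say $\beta_0=\beta_m$, then the lifted chain in $T$ would contain two nodes $w_0$ and $w_m$, with $w_m$ a proper descendant of $w_0$, carrying the same label $\beta_0$, contradicting the non-recursiveness of $T$. As a bonus, non-recursiveness of $G$ immediately implies acyclicity (a cycle would give a path from some node back to itself, with matching labels), confirming that $G$ is an honest DAG and hence a genuine non-recursive proof DAG.

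The main obstacle is purely the bookkeeping in the path-lifting step: one must be careful that the correspondence ``edge of $G$ $\leftrightarrow$ child-in-some-tree-of-the-class'' is preserved under composition, which requires using the fact that all representatives of an equivalence class have the same multiset of child-classes. Everything else is a direct appeal to Lemma~\ref{lem:from-trees-to-dags}, so the total cost of proving Lemma~\ref{lem:from-trees-to-dags-nr} over and above that earlier lemma is this one inductive lemma plus a one-line contrapositive argument.
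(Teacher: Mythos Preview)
Your proposal is correct and follows essentially the same approach as the paper: you invoke the construction of Lemma~\ref{lem:from-trees-to-dags} and then argue that non-recursiveness is preserved because any path in $G$ lifts to a path in $T$ with the same sequence of labels. The paper states this more tersely (``for each path of the proof tree, there is a path in the proof DAG with the same labels, and vice versa''), but your explicit path-lifting lemma is exactly the content of that claim in the direction needed.
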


\begin{proof}
	We employ the construction underlying Lemma~\ref{lem:from-trees-to-dags}, which converts a proof tree of ``small'' subtree count into a non-recursive proof DAG of polynomial size with the same support, since it {\em preserves non-recursiveness}. The latter holds since, for each path of the proof tree, there is a path in the proof DAG with the same labels, and vice versa. 
\end{proof}

It is now clear that the direction (1) implies (2) of Proposition~\ref{pro:characterization-nr-trees} is an immediate consequence of Lemmas~\ref{lem:scount-reduction-nr} and~\ref{lem:from-trees-to-dags-nr}.

\medskip
\noindent
\textbf{Finalize the Proof.} We can now finalize the proof of the claim that $\mathsf{Why\text {-}Provenance_{NR}[\DAT]}$ is in \NP~in data complexity.
Fix a Datalog query $Q = (\dep,R)$. Given a database $D$ over $\esch{\dep}$, a tuple $\bar t \in \adom{D}^{\arity{R}}$, and a subset $D'$ of $D$, to decide whether $D' \in \nrwhy{\bar t}{D}{Q}$ we simply need to check for the existence of a non-recursive proof tree $T$ of $R(\bar t)$ w.r.t.~$D$ and $\dep$ such that $\support{T} = D'$. By Proposition~\ref{pro:characterization-nr-trees}, this is tantamount to the existence of a polynomially-sized non-recursive proof DAG $G$ of $R(\bar t)$ w.r.t.~$D$ and $\dep$ with $\support{G} = D'$.
The existence of such a non-recursive proof DAG can be checked via a non-deterministic algorithm that runs in polynomial time in the size of the database as it was done for proving that $\mathsf{Why\text {-}Provenance[\DAT]}$ is in \NP~in data complexity (Theorem~\ref{the:recursive-complexity}). The only difference is that now we need to additionally check that the guessed DAG is also non-recursive.
This can be done by going over the nodes of the
DAG using depth-first search, and remembering all the labels that we have seen on the current path. Whenever we encounter a new node, we need to verify that its label does not belong to the set of labels that we have already seen on this path. If it does, then we reject; otherwise, we add its label to the set of labels and continue. Whenever we go ``back'' in the DAG (move from a node $u$ to a node $v$ when there is an edge $(v,u)$), we remove the label of the child node from the set. This requires $O(|V| \cdot |E|)$ time, where $V$ and $E$ are the sets of nodes and edges of the DAG, respectively.
Consequently, $\mathsf{Why\text {-}Provenance}[Q]$ is in \NP, and thus, $\mathsf{Why\text {-}Provenance[\DAT]}$ is in \NP~in data complexity.

\medskip

\noindent \underline{\textbf{Lower Bound}}
\smallskip

\noindent We proceed to establish that $\mathsf{Why\text {-}Provenance_{NR}[\LDAT]}$ is \NP-hard in data complexity. To this end, we need to show that there exists a linear Datalog query $Q$ such that the problem $\mathsf{Why\text {-}Provenance_{NR}}[Q]$ is \NP-hard. The proof is via a reduction from the problem $\mathsf{Ham\text{-}Cycle}$, which takes as input a directed graph $G = (V,E)$ and asks whether $G$ has a Hamiltonian cycle, i.e., whether there exists a cycle $v_1,\ldots,v_n,v_1$ in $G$ such that, for distinct integers $i,j \in [n]$, $v_i \neq v_j$, and $V = \{v_1,\ldots,v_n\}$.

\medskip
\noindent \textbf{The Linear Datalog Query.}
We start by defining the linear Datalog query $Q = (\dep,{\rm Path})$. If the name of a variable is not important, then we use $\_$ for a fresh variable occurring only once in $\dep$. By abuse of notation, we use semicolons instead of commas in a tuple expression in order to separate terms with a different meaning. The program $\dep$ follows:
\begin{eqnarray*}
\sigma_1 &:& {\rm MarkedE}(x)\,\, \assign \,\, {\rm First}(x)\\
\sigma_2 &:& {\rm MarkedE}(y)\,\, \assign \,\, E(\_,\_;x,y;\_), {\rm MarkedE}(x),\\
\sigma_3 &:& {\rm Path}(y)\,\, \assign \,\, E(x,y;\_,\_;z), {\rm MarkedE}(z), N(x), \\
\sigma_4 &:& {\rm Path}(y)\,\, \assign \,\, E(x,y;\_,\_;\_), {\rm Path}(x), N(x).
\end{eqnarray*}
It is easy to verify that $\dep$ is indeed a linear Datalog program.

\medskip
\noindent \textbf{From $\mathsf{Ham\text{-}Cycle}$ to  $\mathsf{Why\text {-}Provenance_{NR}}[Q]$.} We now show that $\mathsf{Why\text {-}Provenance}[Q]$ is \NP-hard by reducing from $\mathsf{Ham\text{-}Cycle}$.
Consider a directed graph $G = (V,E)$ with $E = \{e_1,\ldots,e_m\}$.
Let $D_G$ be the database over $\esch{\dep}$
\begin{multline*}
\{{\rm First}(1)\} \cup \{N(v) \mid v \in V\}\ \cup \\
\{E(u,v;i,i+1;m+1) \mid i \in [m] \text{ and } e_i = (u,v)\},
\end{multline*}
which essentially stores the graph $G$ and provides an ordering of its edges.
We now show the next lemma, which states that the above construction leads to a correct polynomial-time reduction from $\mathsf{Ham\text{-}Cycle}$  to $\mathsf{Why\text {-}Provenance_{NR}}[Q]$.

\begin{lemma}\label{lem:reduction-from-hasm-cycle}
	The following hold:
	\begin{enumerate}
	\item $D_G$ can be constructed in polynomial time w.r.t.~$G$.
	\item $G$ has a Hamiltonian cycle iff $D_G \in \nrwhy{(v^*)}{D_G}{Q}$ for some arbitrary node $v^* \in V$.
	\end{enumerate}
\end{lemma}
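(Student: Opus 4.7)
Claim (1) is immediate: $D_G$ consists of $1 + |V| + |E|$ facts and is constructible in linear time from $G$. For claim (2), I would analyze the shape of any non-recursive proof tree $T$ of ${\rm Path}(v^*)$ w.r.t.~$D_G$ and $\dep$ with $\support{T} = D_G$, and show that such a tree corresponds exactly to a Hamiltonian cycle through $v^*$.

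Because $\dep$ is linear, $T$ has a unique ``spine'' formed by its intensional atoms. The first step is to observe that this spine must have the form
\[
{\rm Path}(v_k) \to {\rm Path}(v_{k-1}) \to \cdots \to {\rm Path}(v_1) \to {\rm MarkedE}(m+1) \to \cdots \to {\rm MarkedE}(1),
\]
with $v_k = v^*$, and that the ${\rm MarkedE}$-portion is in fact \emph{forced} by $D_G$. Indeed, the only rule that can derive ${\rm MarkedE}(1)$ is $\sigma_1$, since no edge in $D_G$ has second coordinate pair of the form $(j,1)$; and for $i \ge 2$, the only rule that derives ${\rm MarkedE}(i)$ is $\sigma_2$ instantiated with the unique edge of $D_G$ whose second pair is $(i-1,i)$, namely $e_{i-1}$. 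Moreover, the switch from the ${\rm Path}$-portion to the ${\rm MarkedE}$-portion must go through ${\rm MarkedE}(m+1)$, because the third coordinate of every edge fact in $D_G$ equals $m+1$, so the atom ${\rm MarkedE}(z)$ used in $\sigma_3$ is forced to have $z = m+1$. As a consequence, every such $T$ automatically contains ${\rm First}(1)$ and all $m$ edge facts in $\support{T}$.

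Next I would analyze the ${\rm Path}$-portion. Each application of $\sigma_4$ along the spine contributes leaves $E(v_{i-1},v_i;\ldots)$ and $N(v_{i-1})$, and the single $\sigma_3$ application at the bottom contributes $E(v_0,v_1;\ldots)$ and $N(v_0)$. Non-recursiveness of $T$ forces the labels ${\rm Path}(v_1),\ldots,{\rm Path}(v_k)$ to be pairwise distinct, so $v_1,\ldots,v_k$ are pairwise distinct and hence $k \le |V|$. On the other hand, $\support{T} \supseteq \{N(v) \mid v \in V\}$ forces $\{v_0,\ldots,v_{k-1}\} = V$, so $k \ge |V|$. Therefore $k = |V|$, both $\{v_0,\ldots,v_{k-1}\}$ and $\{v_1,\ldots,v_k\}$ are permutations of $V$, and comparing these two sets yields $v_0 = v_k = v^*$. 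The edges $(v_{i-1},v_i)$ used along the ${\rm Path}$-spine then form a Hamiltonian cycle through $v^*$.

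For the converse direction, given any Hamiltonian cycle of $G$, I would cyclically rotate it so that it begins and ends at $v^*$, and then construct a proof tree whose ${\rm Path}$-spine walks the cycle (applying $\sigma_4$ for the internal steps and $\sigma_3$ at the bottom) and whose ${\rm MarkedE}$-spine enumerates $1,2,\ldots,m+1$ along the edge ordering (using $\sigma_1$ at the base and $\sigma_2$ for each rung, pulling in $e_i$ at step $i$); non-recursiveness is immediate because all ${\rm Path}$-labels and all ${\rm MarkedE}$-labels are pairwise distinct, and $\support{T} = D_G$ holds by the forced uses. The main technical obstacle is the first step: establishing that the spine is essentially unique up to the choice of $v_0,\ldots,v_k$, which relies on a careful case analysis on the rules applied together with the rigidity of the edge ordering encoded in $D_G$.
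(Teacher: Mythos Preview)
Your proposal is correct and follows essentially the same approach as the paper. Both proofs exploit linearity of $\dep$ to reduce the analysis to a single spine, argue that the ${\rm MarkedE}$-segment is completely rigid (forcing all edge facts and ${\rm First}(1)$ into the support), and then use non-recursiveness together with the requirement $\{N(v)\mid v\in V\}\subseteq\support{T}$ to pin down the ${\rm Path}$-segment as a Hamiltonian cycle through $v^*$. The only cosmetic difference is the order of presentation: you dispose of the ${\rm MarkedE}$-portion first and then bound the length $k$ of the ${\rm Path}$-portion from both sides, whereas the paper first shows that the top $|V|$ spine nodes must all carry ${\rm Path}$-labels (by counting $N$-facts) and only afterwards argues that $\sigma_4$ can no longer be applied.
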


\begin{proof}
	It is straightforward to see that $D_G$ can be constructed in polynomial time in the size of $G$. We proceed to establish item (2). We start with the direction $(\Rightarrow )$.
	%
	%
	Assume that $G$ has a Hamiltonian cycle $v_1,\ldots,v_n,v_{n+1}$, and w.l.o.g.~let $v_1 = v_{n+1} = v^*$. Hence, there exist edges $(v_1,v_2),\ldots,(v_{n-1},v_n),(v_n,v_{n+1})$ in $G$. We define the labeled rooted tree $T=(V',E',\lambda)$: for the root $v \in V'$, let $\lambda(v) = {\rm Path}(v_{n+1}) = {\rm Path}(v^*)$, and inductively:
	\begin{enumerate}
		\item If $v \in V'$ is such that $\lambda(v) = {\rm Path}(v_{i+1})$, for $i \in \{2,\ldots,n\}$, then $v$ has 3 children $u_1,u_2,u_3$, where 
		$\lambda(u_1)$ is the (only) fact in $D_{G}$ of the form $E(v_i,v_{i+1};\cdot,\cdot;\cdot)$, $\lambda(u_2) = {\rm Path}(v_i)$, and $\lambda(u_3) = N(v_i)$.
		
		\item if $v \in V'$ is such that $\lambda(v) = {\rm Path}(v_2)$, $v$ has 3 children $u_1,u_2,u_3$, where $\lambda(u_1)$ is the (only) fact in $D_G$ of the form $E(v_1,v_2;\cdot,\cdot;m+1)$, $\lambda(u_2) = {\rm MarkedE}(m+1)$, and $\lambda(u_3) = N(v_1)$.
		
		\item if $v \in V'$ is such that $\lambda(v) = {\rm MarkedE}(i+1)$, for $i \in \{1,\ldots,m\}$, then $v$ has 2 children $u_1,u_2$, where $\lambda(u_1)$ is the (only) fact in $D_G$ of the form $E(\cdot,\cdot;i,i+1;\cdot)$, and $\lambda(u_2) = {\rm MarkedE}(i)$.
		
		\item if $v \in V'$ is such that $\lambda(v) = {\rm MarkedE}(1)$, then $v$ has only one child $u$, where $\lambda(u) = {\rm First}(1)$.
	\end{enumerate}
	This completes the construction of $T$.
	We proceed to show that $T$ is a non-recursive proof tree of ${\rm Path}(v^*)$ w.r.t.~$D_G$ and $Q$ such that $\support{T} = D_G$, which in turn implies that $D_G \in \nrwhy{(v^*)}{D_G}{Q}$, as needed.
	Since $\dep$ is linear, at each level of $T$ there exists at most one non-leaf node. Moreover, the edges in $T$ going from level $i$ to level $i+1$, with $i \in \{0,\ldots,n-1\}$, and the labels of the corresponding nodes, as defined in item~(1), are valid since they are obtained by considering rule $\sigma_4$ and there exist edges $(v_i,v_{i+1}) \in E$, for $i \in \{2,\ldots,n\}$. The edges in $T$ from level $n-1$ to $n$ are obtained considering $\sigma_3$, and the labels are again valid since $(v_1,v_2) \in E$. Then, for all the remaining levels, except the last one, we consider $\sigma_2$ in item~(3), and for the last level we consider $\sigma_1$, in item~(4). In these last two cases, it is easy to verify that the labels are valid.
	Crucially, by construction of $T$, for every non-leaf node there is no other node in $T$ with the same label, and thus, the proof tree is trivially non-recursive. Regarding the support, by item~(1), $N(v_i) \in \support{T}$, for $i \in \{2,\ldots,n\}$. By item~(2), $N(v_1) \in \support{T}$. Moreover, by item~(3), $E(u,v;i,i+1;m+1) \in \support{T}$, where $(u,v) = e_i$, for $i \in [m]$. Finally, by item~(4), ${\rm First}(1) \in \support{T}$. Therefore, $\support{T}=D_G$, and the claim follows.
	
	
	We now proceed with the direction $(\Leftarrow)$. Assume that $D_G \in \nrwhy{(v^*)}{D_G}{Q}$, which in turn implies that there is a non-recursive proof tree $T=(V,E,\lambda)$ of $(v^*)$ w.r.t.~$D_G$ and $Q$ such that $\support{T}=D_{G}$. Let $n$ be the number of nodes in $G$, and assume w.l.o.g.\ that $n \ge 2$.
	Note that by linearity of $\dep$, at each level of $T$, besides the last one, there exists precisely one non-leaf node. Let $u_i$ be the non-leaf node at level $i$. Clearly, $u_0,u_1,\ldots$ is a path in $T$. Note that, by the definition of $Q$, $u_0$, i.e., the root, is necessarily labeled with a fact using the predicate ${\rm Path}$; this is also the case for $u_i$, at level $i \in \{1,\ldots,n-1\}$. Indeed, assume, towards a contradiction, that $i \in \{1,\ldots,n-1\}$ is the first level in $T$ where $u_i$ is \emph{not} labeled with a fact using the predicate ${\rm Path}$. Since $\sigma_3$ and $\sigma_4$ are the only rules in $\dep$ with a body-atom using the predicate $N$, and since these are the only rules where ${\rm Path}$ appears in the head, $\support{T}$ contains no more than $i < n$ facts using the predicate $N$, and thus, $\support{T} \neq D_G$, which is a contradiction.
	
	Since $u_0,\ldots,u_{n-1}$ is a path in $T$, and such nodes are all labeled with a fact using ${\rm Path}$, we conclude that such labels are obtained by using $\sigma_4$, as it is the only rule having the predicate ${\rm Path}$ both in its body and its head. Hence, because of the atom $E(x,y;\_,\_;\_)$ in $\body{\sigma_4}$, which uses an extensional predicate of $\dep$, and from the fact that $T$ is a proof tree, $\lambda(u_i) = {\rm Path}(v_i)$, where $v_i$ is some node of the graph $G$. Therefore, since $T$ is non-recursive, ${\rm Path}(v_0),{\rm Path}(v_1),\ldots,{\rm Path}(v_{n-1})$ are all distinct, and thus, $v_0,v_1,\ldots,v_{n-1}$ are distinct. Hence, again from the fact that the atom $E(x,y;\_,\_;\_)$ appears in $\body{\sigma_4}$ with an extensional predicate, the fact that $T$ is a proof tree, and by the construction of $D_G$, we conclude that $(v_{n-1},v_{n-2}),\ldots,(v_1,v_0)$ are all the edges of the graph $G$. Therefore, $v_0,\ldots,v_{n-1}$ is the reverse of a Hamiltonian path in $G$. 
	Moreover, let $S_i \subseteq \support{T}$ be the set of facts using the predicate $N$ that label nodes of $T$ up to level $i$. Because of the atom $N(x)$ in the body of $\sigma_4$, we conclude that $S_{n-1} = \{N(v_1),\ldots,N(v_{n-1})\}$, that is, $S_{n-1}$ contains all nodes of the graph $G$, except for $v_0$.
	
	Let us focus now on the node $u_{n-1}$ of $T$. Recall that $\lambda(u_{n-1}) = {\rm Path}(v_{n-1})$. Note that the children of $u_{n-1}$ cannot be labeled using $\sigma_4$ anymore, since it contains the two body atoms ${\rm Path}(x),N(x)$, where $N$ is an extensional predicate. Hence, since $v_0,\ldots,v_{n-1}$ are precisely all the nodes of $G$, one of the children of $u_{n-1}$ would necessarily be labeled with a fact ${\rm Path}(v)$, where $v$ will necessarily coincide with some of the nodes in $v_0,\ldots,v_{n-1}$, and thus, $T$ would not be non-recursive. Hence, the only rule left is $\sigma_3$. As already discussed, up to level $n-1$, the set of facts in $\support{T}$ with predicate $N$ is $S_{n-1}=\{N(v_1),\ldots,N(v_{n-1})\}$. Hence, for $\support{T}$ to also contain ${\rm Path}(v_0) \in D_G$, there must be at least one more node below level $n-1$ in $T$ that is labeled with $N(v_0)$. Since $\sigma_3$ has no body atom with predicate ${\rm Path}$, necessarily one of the children of $u_{n-1}$ is labeled with $N(v_0)$. Hence, thanks to the atom $E(x,y;\_,\_;z)$ in the body of $\sigma_4$, we conclude that there is also an edge $(v_0,v_{n-1})$ in $G$, and thus, $v_0,v_1,\ldots,v_{n-1},v_0$ is the reverse of a Hamiltonian cycle of $G$, and the claim follows.
\end{proof}

By Lemma~\ref{lem:reduction-from-hasm-cycle}, $\mathsf{Why\text {-}Provenance_{NR}}[Q]$ is \NP-hard. Thus, $\mathsf{Why\text {-}Provenance_{NR}[\LDAT]}$ is \NP-hard in data complexity.

\subsection{Non-Recursive Queries}

We now focus on non-recursive Datalog queries, and show the following about the data complexity of why-provenance relative to non-recursive proof trees:

\begin{theorem}\label{the:non-recursive-complexity-nr}
	$\mathsf{Why\text {-}Provenance_{NR}[\NRDAT]}$ is in $\ACZ$ in data complexity.
\end{theorem}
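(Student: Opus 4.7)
The plan is to observe that, for non-recursive Datalog queries, the class of non-recursive proof trees coincides with the class of all proof trees, so the result reduces immediately to Theorem~\ref{the:non-recursive-complexity} with no new rewriting required.

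Fix a non-recursive Datalog query $Q = (\dep, R)$. I first would establish the following structural property: every proof tree $T = (V,E,\lambda)$ of a fact $R(\bar t)$ w.r.t.~a database $D$ over $\esch{\dep}$ and $\dep$ is already non-recursive in the sense of Definition~\ref{def:non-recursove-proof-tree}. The key is to relate root-to-leaf paths in $T$ to walks in the predicate graph of $\dep$. By property~$(3)$ of Definition~\ref{def:proof-tree}, whenever a node $v$ has label using some predicate $P$ and a child $u$ has label using some predicate $Q$, there is a rule in $\dep$ with $P$ in the head and $Q$ in the body, hence an edge from $Q$ to $P$ in the predicate graph. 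Consequently, if $u$ is a descendant of $v$ in $T$ and the predicates of $\lambda(v)$ and $\lambda(u)$ are $P$ and $Q$ respectively, then the predicate graph contains a (nonempty) path from $Q$ to $P$. Since $Q$ is non-recursive, its predicate graph is acyclic, so $P \neq Q$, and a fortiori $\lambda(v) \neq \lambda(u)$. Hence every proof tree of $R(\bar t)$ is non-recursive.

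From this observation it follows that, for every database $D$ over $\esch{\dep}$ and tuple $\bar t \in \adom{D}^{\arity{R}}$,
\[
\nrwhy{\bar t}{D}{Q}\ =\ \why{\bar t}{D}{Q},
\]
because the two families are defined by taking the supports of proof trees from sets that coincide. In particular, for every $D' \subseteq D$,
\[
D' \in \nrwhy{\bar t}{D}{Q}\quad \text{iff}\quad D' \in \why{\bar t}{D}{Q}.
\]

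Therefore the algorithmic problem $\mathsf{Why\text{-}Provenance_{NR}}[Q]$ is literally the same problem as $\mathsf{Why\text{-}Provenance}[Q]$ on instances parameterised by $Q$. Since $Q$ is non-recursive, Theorem~\ref{the:non-recursive-complexity} gives an FO query $Q_{\mi{FO}}$ over $\esch{\dep}$ such that $D' \in \why{\bar t}{D}{Q}$ iff $\bar t \in Q_{\mi{FO}}(D')$, and FO query evaluation is in $\ACZ$ in data complexity. By the equality above, the same FO query $Q_{\mi{FO}}$ serves as a first-order rewriting for $\mathsf{Why\text{-}Provenance_{NR}}[Q]$, establishing the claim. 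There is no genuine obstacle in this proof beyond carefully justifying the predicate-graph argument, and in particular we do not need to redo the construction of $\cqeq{Q}$ or the conjuncts $\varphi_1, \varphi_2, \varphi_3$; these are inherited verbatim from the proof of Theorem~\ref{the:non-recursive-complexity}.
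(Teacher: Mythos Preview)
Your proof is correct. The key structural observation---that for a non-recursive Datalog program every proof tree is automatically non-recursive, because a violation would yield a cycle in the predicate graph---is sound, and from it the equality $\nrwhy{\bar t}{D}{Q} = \why{\bar t}{D}{Q}$ follows immediately, reducing the claim to Theorem~\ref{the:non-recursive-complexity}.

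The paper takes a slightly different, more uniform route: it reruns the FO-rewriting construction from Theorem~\ref{the:non-recursive-complexity}, but restricts the set $\cq{Q}$ to CQs induced by \emph{non-recursive} $Q$-trees. Your argument shows that this restriction is in fact vacuous for non-recursive queries (every $Q$-tree is already non-recursive), so the two constructions yield literally the same FO query. The paper's phrasing has the virtue of fitting the same template used later for minimal-depth and unambiguous proof trees; your phrasing has the virtue of being more direct and of isolating a clean structural lemma that explains \emph{why} no new work is needed in this case.
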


\begin{proof}
This is shown via first-order rewritability as done for Theorem~\ref{the:non-recursive-complexity}. In fact, the construction of the target FO query is exactly the same as in the proof of Theorem~\ref{the:non-recursive-complexity} with the key difference that, for a Datalog query $Q$, the set of CQs $\cq{Q}$ is defined by considering only non-recursive proof trees, i.e., is the set $\{\cq{T} \mid T \text{ is a {\em non-recursive} $Q$-tree}\}$.
\end{proof}


\section{Minimal-Depth Proof Trees}\label{sec:minimal-depth-proof-trees}

We now focus on another refined class of proof trees that has been considered in the literature. Recall that the depth of a rooted tree $T$, denoted $\depth{T}$, is the length of the longest path from its root to a leaf node. Given a Datalog program $\dep$, a database $D$ over $\esch{\dep}$, and a fact $\alpha \in \dep(D)$, let $\mtd{\alpha}{D}{\dep}$ be the integer
\[
\min\{\depth{T} \mid T \text{ is a proof tree of } \alpha \text{ w.r.t. } D \text{ and } \dep\},
\]
i.e., the minimal depth over all proof trees of $\alpha$ w.r.t.~$D$ and $\dep$. The notion of minimal-depth proof tree follows:

\begin{definition}[\textbf{Minimal-Depth Proof Tree}]\label{def:min-depth-proof-tree}
	Consider a Datalog program $\dep$, a database $D$ over $\esch{\dep}$, and a fact $\alpha$ over $\sch{\dep}$. A {\em minimal-depth proof tree of $\alpha$ w.r.t.~$D$ and $\dep$} is a proof tree $T$ of $\alpha$ w.r.t.~$D$ and $\dep$ such that $\depth{T}$ coincides with $\mtd{\alpha}{D}{\dep}$. \hfill\markfull
\end{definition}

Why-provenance relative to minimal-depth proof trees is defined as expected. Given a Datalog query $Q = (\dep,R)$, a database $D$ over $\esch{\dep}$, and a tuple $\bar t \in \adom{D}^{\arity{R}}$, the {\em why-provenance of $\bar t$ w.r.t.~$D$ and $Q$ relative to minimal-depth proof trees} is defined as the family of sets of facts
\begin{multline*}
\{\support{T} \mid T \text{ is a minimal-depth proof tree of }\\
R(\bar t) \text{ w.r.t. } D \text{ and } \dep\}
\end{multline*}
denoted $\mdwhy{\bar t}{D}{Q}$.
Then, the algorithmic problems
\[
\mathsf{Why\text {-}Provenance_{MD}[C]} \quad \text{and} \quad  \mathsf{Why\text {-}Provenance_{MD}}[Q]
\] 
are defined as expected. We proceed to study the data complexity of $\mathsf{Why\text {-}Provenance_{MD}[C]}$ for each class $\class{C} \in \{\DAT,\LDAT,\NRDAT\}$. As shown in the case of arbitrary and non-recursive proof trees, for recursive queries, even if the recursion is restricted to be linear, the problem is in general intractable, whereas for non-recursive queries it is highly tractable. We first focus on recursive queries.

\subsection{Recursive Queries}

We show the following complexity result:

\begin{theorem}\label{the:complexity-minimal-depth-proof-trees-np}
	$\mathsf{Why\text {-}Provenance_{MD}[C]}$ is \NP-complete in data complexity, for each class $\class{C} \in \{\DAT,\LDAT\}$.
\end{theorem}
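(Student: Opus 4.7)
The plan is to follow the same blueprint used for Theorem~\ref{the:complexity-non-recursive-proof-trees-np}, namely separately establish the NP upper bound for $\DAT$ and the NP-hardness lower bound for $\LDAT$. For the upper bound, I would first introduce the notion of a \emph{minimal-depth proof DAG} of $\alpha$ w.r.t.~$D$ and $\dep$ as a proof DAG $G$ whose longest root-to-leaf path has length exactly $\mtd{\alpha}{D}{\dep}$; unravelling such a $G$ yields a proof tree of the same depth, so the unravelled tree is automatically minimal-depth, and conversely a minimal-depth proof tree is itself a (degenerate) minimal-depth proof DAG. The target characterization is then the analog of Propositions~\ref{pro:characterization-all-trees} and~\ref{pro:characterization-nr-trees}: a minimal-depth proof tree $T$ with $\support{T} = D'$ exists iff there is a polynomially-sized minimal-depth proof DAG $G$ with $\support{G} = D'$. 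Given this, the NP algorithm is straightforward: first compute $\mtd{R(\bar t)}{D}{\dep}$ in polynomial time via a standard bottom-up fixpoint computation that records the minimum derivation depth for each fact in $\base{D,\dep}$; then guess a polynomially-sized labeled directed graph $G$ and verify in polynomial time that it is a proof DAG of $R(\bar t)$ w.r.t.~$D$ and $\dep$ with $\support{G} = D'$ whose longest path equals $\mtd{R(\bar t)}{D}{\dep}$.

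The characterization itself will reuse the machinery of Section~\ref{sec:complexity}. A first useful observation is that any minimal-depth proof tree already has polynomial depth: $\mtd{\alpha}{D}{\dep} \le |\base{D,\dep}|$, because otherwise a label would repeat on some root-to-leaf path and one could splice out the segment between the two occurrences to obtain a strictly shorter proof tree of the same fact, contradicting minimality. Hence no analog of Lemma~\ref{lem:depth-reduction} is needed. I would then adapt Lemma~\ref{lem:scount-reduction} to turn the starting tree into one of polynomial subtree count, and Lemma~\ref{lem:from-trees-to-dags} to build the proof DAG. The essential point to verify, and the main obstacle, is that both constructions \emph{preserve} the minimal-depth property, not merely the support. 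The scount reduction proceeds by freezing a few root-to-leaf paths that carry the support and replacing the hanging sibling subtrees by representatives of the same root-label drawn from a pool of smallest-depth subtrees (Lemma~\ref{lem:exists-unumbiguous-tree}); since the replacements can only decrease the depth, applying it to a tree of depth $\mtd{R(\bar t)}{D}{\dep}$ gives a tree whose depth is at most, and hence equal to, $\mtd{R(\bar t)}{D}{\dep}$. For the DAG construction, the longest path of the resulting DAG coincides with the longest root-to-leaf path of the tree being encoded, so minimality is again preserved. The only delicate point is to double-check that when a frozen path is kept verbatim, the sibling replacements along it do not accidentally enlarge the longest path above the originally minimal value; this is essentially automatic but should be spelled out.

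For the lower bound, my plan is to reuse the 3SAT reduction employed for $\mathsf{Why\text{-}Provenance[\LDAT]}$ in Theorem~\ref{the:recursive-complexity}, with the candidate support being the whole database $D_\varphi$. Because the program $\dep$ there is linear, every proof tree of $R(v_1)$ is a caterpillar whose spine alternates between the variable rules $\sigma_1,\sigma_2$, the clause-check rules $\sigma_3,\sigma_4,\sigma_5$, and the ordering rules $\sigma_6,\sigma_7$, ending with $\sigma_8$. A proof tree has $\support{T} = D_\varphi$ precisely when every fact of $D_\varphi$ is touched, which forces each of the $n$ variables to be visited and each of the $m$ clause facts to be used at least once along the spine; this pins down the spine length to a single value that depends only on $\varphi$, and one can verify by a simple minimality argument that this value is exactly $\mtd{R(v_1)}{D_\varphi}{\dep}$. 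Consequently, for this query, $D_\varphi \in \mdwhy{(v_1)}{D_\varphi}{Q}$ iff $D_\varphi \in \why{(v_1)}{D_\varphi}{Q}$, and Lemma~\ref{lem:reduction-from-3sat} carries over directly to give \NP-hardness. Should the equality of the two depth values fail for the original construction, I would fall back on a slight modification of $\dep$ that pads all full-support derivations to a fixed common length (e.g., by introducing a dummy counter predicate), or alternatively adapt the Hamiltonian cycle reduction of Lemma~\ref{lem:reduction-from-hasm-cycle}, where an analogous uniform-depth property is even easier to establish since the spine must visit each of the $n$ nodes exactly once.
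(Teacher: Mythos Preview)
Your upper-bound plan is essentially the paper's: define minimal-depth proof DAGs, reuse the subtree-count reduction and the tree-to-DAG construction while checking they do not increase depth, and compute $\mtd{R(\bar t)}{D}{\dep}$ by the bottom-up fixpoint. One cosmetic difference: the paper defines a minimal-depth proof DAG via $\mgd{\alpha}{D}{\dep}$ (the minimum over proof DAGs) rather than $\mtd{\alpha}{D}{\dep}$, but since unravelling preserves depth these two minima coincide, so your formulation is equivalent.

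The lower bound, however, has a real gap. In the original 3SAT reduction the clause-check rules $\sigma_3,\sigma_4,\sigma_5$ are self-loops on ${\rm Assign}(x,y)$ and are therefore \emph{optional}: a proof tree of $R(v_1)$ that never uses them has depth $2n+1$ and support consisting only of the ${\rm Var}$, ${\rm Next}$, and ${\rm Last}$ facts. Any proof tree with support $D_\varphi$ must hit every $C$-fact via $\sigma_3$--$\sigma_5$, so its depth is strictly larger whenever $m\ge 1$. Hence $D_\varphi \notin \mdwhy{(v_1)}{D_\varphi}{Q}$ regardless of satisfiability, and your claimed equality of the full-support spine length with $\mtd{R(v_1)}{D_\varphi}{\dep}$ is false. (Also, the full-support spine length is not even uniquely determined: it is $2n+1+\sum_i |S_{v_i}|$, which depends on the chosen assignment.) The Hamiltonian-cycle fallback fails for the same reason: one can apply $\sigma_3$ immediately instead of iterating $\sigma_4$, so the minimal depth of ${\rm Path}(v^*)$ is $m+2$, whereas any full-support tree has depth $n+m+1$.

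Your ``padding'' fallback is the right instinct and is exactly what the paper does, but the required modification is more than padding full-support derivations: one must redesign the program so that \emph{every} proof tree of $R(v_1)$ has the same depth. The paper achieves this by introducing a clause counter ${\rm NextC}$ that forces, for each variable, exactly $m$ iteration steps; two new rules $\sigma',\sigma''$ let a step consume a ${\rm NextC}$ fact without touching a $C$-fact when the current assignment does not satisfy the current clause. With this change all proof trees have depth $n(m+2)+1$, so minimal-depth imposes no extra constraint and the original correctness argument carries over.
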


To prove Theorem~\ref{the:complexity-non-recursive-proof-trees-np}, it suffices to show that:
\begin{itemize}
	\item $\mathsf{Why\text {-}Provenance_{MD}[\DAT]}$ is in \NP~in data complexity.
	\item $\mathsf{Why\text {-}Provenance_{MD}[\LDAT]}$ is \NP-hard in data compl.
\end{itemize}
Let us first focus on the upper bound.

\medskip

\noindent \underline{\textbf{Upper Bound}}
\smallskip

\noindent The proof is similar to the proof of the analogous result for $\mathsf{Why\text {-}Provenance_{NR}[\DAT]}$ (see Theorem~\ref{the:complexity-non-recursive-proof-trees-np}). 
Given a Datalog program $\dep$, a database $D$ over $\esch{\dep}$, and a fact $\alpha$ over $\sch{\dep}$, we first define the notion of {\em minimal-depth proof DAG of $\alpha$ w.r.t.~$D$ and $\dep$}.
We then establish a result analogous to Proposition~\ref{pro:characterization-nr-trees}: the existence of a minimal-depth proof tree of $\alpha$ w.r.t.~$D$ and $\dep$ with $\support{T} = D' \subseteq D$ is equivalent to the existence of a polynomially-sized minimal-depth proof DAG $G$ of $\alpha$ w.r.t.~$D$ and $\dep$ with $\support{G} = D'$. This in turn allows us to devise a guess-and-check algorithm that runs in polynomial time. We proceed to formalize this high-level description.

The notion of depth can be naturally transferred to rooted DAGs. In particular, for a rooted DAG $G$, the {\em depth} of $G$, denoted $\depth{G}$, is defined as the length of the longest path from the root of $G$ to a leaf node of $G$.
Given a Datalog program $\dep$, a database $D$ over $\esch{\dep}$, and a fact $\alpha \in \dep(D)$, let $\mgd{\alpha}{D}{\dep}$ be the integer
\[
\min\{\depth{G} \mid G \text{ is a proof DAG of } \alpha \text{ w.r.t. } D \text{ and } \dep\},
\]
i.e., the minimal depth over all proof DAGs of $\alpha$ w.r.t.~$D$ and $\dep$. Before introducing minimal-depth proof DAGs, let us establish a key property of $\mgd{\alpha}{D}{\dep}$, which will play a crucial role in our complexity analysis.

\begin{proposition}\label{pro:depth-ptime}
	Consider a Datalog program $\dep$, a database $D$ over $\esch{\dep}$, and a fact $\alpha \in \dep(D)$. The integer $\mgd{\alpha}{D}{\dep}$ can be computed in polynomial time in $|D|$.
\end{proposition}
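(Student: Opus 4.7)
The plan is to compute $\mgd{\alpha}{D}{\dep}$ via a bottom-up, semi-naive-style evaluation of $\dep$ on $D$, and then to establish a tight correspondence between the round at which $\alpha$ is first derived and the minimum depth of a proof DAG for $\alpha$. I would set $T^0 = D$ and, inductively, obtain $T^{i+1}$ from $T^i$ by adding every fact $R_0(h(\bar x_0))$ such that some rule $\sigma : R_0(\bar x_0)\ \assign\ R_1(\bar x_1),\ldots,R_n(\bar x_n)$ in $\dep$ and some homomorphism $h$ satisfy $R_j(h(\bar x_j)) \in T^i$ for all $j \in [n]$. Since $\dep$ is fixed, the arity of predicates and the number of variables per rule are bounded by constants, so each $T^{i+1}$ is computable from $T^i$ in polynomial time in $|D|$ by enumerating all candidate homomorphisms. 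Moreover, $T^0 \subseteq T^1 \subseteq \cdots \subseteq \base{D,\dep}$ and $|\base{D,\dep}|$ is polynomial in $|D|$, so the sequence stabilizes within polynomially many rounds.

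The key equality I would then prove is
\[
\mgd{\alpha}{D}{\dep}\ =\ \min\{i \ge 0 \mid \alpha \in T^i\}.
\]
For the ``$\le$'' direction, define $d(\beta) = \min\{i \mid \beta \in T^i\}$ for every derivable fact $\beta$ and, for each $\beta$ with $d(\beta) > 0$, fix one rule instance witnessing $\beta \in T^{d(\beta)}$ whose body atoms $\beta_1,\ldots,\beta_n$ all satisfy $d(\beta_j) < d(\beta)$. Constructing the labelled graph top-down from $\alpha$ along these chosen rule instances yields a rooted labelled DAG satisfying the three clauses of Definition~\ref{def:proof-dag}; acyclicity and the depth bound $d(\alpha)$ both follow because $d$ strictly decreases along every edge. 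For the ``$\ge$'' direction, a straightforward induction on the longest-path distance from a node to a leaf of an arbitrary proof DAG $G$ shows that if this distance equals $k$ at a node $v$ with label $\beta$, then $\beta \in T^k$; applying this to the root yields $\alpha \in T^d$, where $d = \depth{G}$.

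Putting these pieces together, the algorithm simply computes $T^0, T^1, \ldots$ until $\alpha$ appears (which is guaranteed since $\alpha \in \dep(D)$) and returns the first index at which it does, which by the equality above is exactly $\mgd{\alpha}{D}{\dep}$. I do not foresee a genuine obstacle; the point meriting a little care is verifying that the top-down gluing in the ``$\le$'' direction satisfies all three clauses of Definition~\ref{def:proof-dag}, in particular that $\alpha$ is the unique node without an incoming edge, which holds because every other node is introduced precisely when it appears among the body atoms of some previously chosen rule instance.
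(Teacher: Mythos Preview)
Your proposal is correct and follows essentially the same approach as the paper: both use the immediate consequence operator (the paper writes $T_\dep^i(D)$ for your $T^i$), define the first round at which $\alpha$ appears (the paper calls this $\rank{\alpha}{D,\dep}$), and prove it equals $\mgd{\alpha}{D}{\dep}$. Your explicit split into the two inequalities, with a top-down DAG construction for ``$\le$'' and an induction on longest-path distance for ``$\ge$'', is if anything a bit cleaner than the paper's single induction on $\mgd{\alpha}{D}{\dep}$.
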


\begin{proof}
	The proof relies on the well-known immediate consequence operator for Datalog. Roughly, the operator constructs in different steps all the facts that can be derived starting from $D$ and applying the rules of $\dep$. In particular, each fact is constructed ``as early as possible'', and we are going to show that the step at which a fact $\alpha$ is first obtained coincides with $\mgd{\alpha}{D}{\dep}$.
	
	A fact $R(\bar t)$ is an {\em immediate consequence of $D$ and $\dep$} if
	\begin{itemize}
		\item $R(\bar t) \in D$, or
		\item there exists a Datalog rule $R(\bar x)\,\assign\, R_1(\bar x_1),\ldots,R_n(\bar x_n)$ in $\dep$ and a function $h : \bigcup_{i \in [n]} \bar x_i \ra \ins{C}$ such that $\{R_1(h(\bar x_1)),\ldots,R_n(h(\bar x_n))\} \subseteq D$ and $h(\bar x) = \bar t$.
	\end{itemize}
	The immediate consequence operator for $\dep$ is defined as the function $T_\dep$ from the set $S$ of databases over $\sch{\dep}$ to $S$
	\begin{multline*}
	T_\dep(D)\ =\ \{R(\bar t) \mid R(\bar t) \text{ is an immediate}\\
	\text{consequence of } D \text{ and } \dep\}.
	\end{multline*}
	We then define
	\[
	T_\dep^0(D)\ =\ D, 
	\]
	and for each $i>0$, 
	\[
	T_\dep^i(D)\ =\ T_\dep( T_\dep^{i-1}(D)).
	\]
	Finally, we define
	\[
	T_\dep^\infty(D)\ =\ \bigcup_{i \ge 0} T_\dep^i(D).
	\]
	%
	It is not difficult to see that 
	\[
	T_\dep^\infty(D)\ =\ T_\dep^{|\base{D,\dep}|}(D),
	\]
	which in turn implies that $T_\dep(D)$ can be computed in polynomial time in the size of $D$; see, e.g.,~\cite{DEGV01}. Hence, for each $i \ge 0$, $T_\dep^i(D)$ can be computed in polynomial time in the size of $D$.
	Note that $T_\dep^\infty(D) = \dep(D)$~\cite{AbHV95}.
	%
	
	The above discussion, together with the following auxiliary lemmas, will prove our claim. For a fact $\alpha \in T_\dep^\infty(D)$, we write $\rank{\alpha}{D,\dep}$ for the integer $\min\{i \mid \alpha \in T_\dep^i(D)\}$.
	
	\begin{lemma}\label{lem:rank-leq}
		For every fact $\alpha \in T_\dep^\infty(D)$, $\rank{\alpha}{D,\dep} = \mgd{\alpha}{D}{\dep}$.
	\end{lemma}

	\begin{proof}
		Consider an arbitrary fact $\alpha \in T_\dep^\infty(D)$ and let $n = \mgd{\alpha}{D}{\dep}$. We proceed by induction on $n$.
		
		\medskip
		\noindent \textbf{Base Case.} For $n = 0$, the claim follows immediately by the definition of proof DAG and of $T_\dep$.
		
		\medskip
		\noindent \textbf{Inductive Step.} There is a proof DAG $G=(V,E,\lambda)$ of $\alpha$ w.r.t.~$D$ and $\dep$ with $\depth{G} = n$. Let $u_1,\ldots,u_k$ be the out-neighbours of the root node $v$, and let $\lambda(u_i) = R_i(\bar u_i)$, for each $i \in [k]$. Since $\depth{G} =n$, for each $i \in [k]$, $\mgd{R_i(\bar u_i)}{D}{\dep} \le \depth{G_i} < n$, with $G_i$ being the subDAG of $G$ rooted at $u_i$. By inductive hypothesis, $\rank{R_i(\bar u_i)}{D,\dep} \leq \mgd{R_i(\bar u_i)}{D}{\dep}$, for each $i \in [k]$, and thus, $\{R_1(\bar u_1),\ldots,R_k(\bar u_k)\} \subseteq T_\dep^{n-1}(D)$. Moreover, by the definition of proof DAG, there exist a rule $R_0(\bar x_0) \assign R_1(\bar x_1),\ldots,R_k(\bar x_k)$ in $\dep$ and a function $h : \bigcup_{i \in [n]} \bar x_i \ra \ins{C}$ such that $\alpha = R_0(h(\bar x_0))$ and $h(\bar x_i) = \bar u_i$ for each $i \in [k]$. Consequently, $\rank{\alpha}{D,\dep} \leq \max_{i \in [k]} \{\rank{R_i(\bar u_i)}{D,\dep}\} + 1 \leq n$.
		We further observe that $\alpha \not\in T_{\dep}^{n-1}(D)$ since, otherwise, by induction hypothesis we can conclude that $\mgd{\alpha}{D}{\dep} < n$, which is a contradiction. Hence, $\rank{\alpha}{D,\dep} = n$.
		%
	\end{proof}

	The claim follows by Lemma~\ref{lem:rank-leq}.
\end{proof}

The central notion of minimal-depth proof DAG follows:

\begin{definition}[\textbf{Minimal-Depth Proof DAG}]\label{def:minimal-depth-proof-dag}
	Consider a Datalog program $\dep$, a database $D$ over $\esch{\dep}$, and a fact $\alpha$ over $\sch{\dep}$. 
	A {\em minimal-depth proof DAG of $\alpha$ w.r.t.~$D$ and $\dep$} is a proof DAG $G$ of $\alpha$ w.r.t.~$D$ and $\dep$ such that $\depth{G}$ coincides with $\mgd{\alpha}{D}{\dep}$. \hfill\markfull
\end{definition}

The analogous result to Proposition~\ref{pro:characterization-nr-trees} follows:

\begin{proposition}\label{pro:characterization-md-trees}
	For a Datalog program $\dep$, there is a polynomial $f$ such that, for every database $D$ over $\esch{\dep}$, fact $\alpha$ over $\sch{\dep}$, and $D' \subseteq D$, the following are equivalent:
	\begin{enumerate}
		\item There is a minimal-depth proof tree $T$ of $\alpha$ w.r.t.~$D$ and $\dep$ such that $\support{T} = D'$.
		\item There is a minimal-depth proof DAG $G = (V,E,\lambda)$ of $\alpha$ w.r.t.~$D$ and $\dep$ with $\support{G} = D'$ and $|V| \leq f(|D|)$.
	\end{enumerate}
\end{proposition}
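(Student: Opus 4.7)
The plan is to mirror the two-direction strategy used for Propositions~\ref{pro:characterization-all-trees} and~\ref{pro:characterization-nr-trees}, checking at each step that minimality of depth is preserved. The central observation that makes this clean is that $\mtd{\alpha}{D}{\dep} = \mgd{\alpha}{D}{\dep}$: the inequality $\mgd{\alpha}{D}{\dep} \leq \mtd{\alpha}{D}{\dep}$ is trivial since every proof tree is a proof DAG, and the reverse inequality follows from the unravelling procedure used in the proof of Proposition~\ref{pro:characterization-all-trees}, which duplicates subDAGs along incoming edges without introducing any new root-to-leaf paths and hence preserves depth exactly. By Proposition~\ref{pro:depth-ptime} (concretely, Lemma~\ref{lem:rank-leq}), this common quantity is bounded by $|\base{D,\dep}|$, which is polynomial in $|D|$. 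In particular, every minimal-depth proof tree already has polynomial depth, so the depth-reduction step of Lemma~\ref{lem:depth-reduction} is not needed in the argument.

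For the direction $(2) \Rightarrow (1)$, I would take a minimal-depth proof DAG $G$ with $\support{G} = D'$ and apply the unravelling procedure from the proof of Proposition~\ref{pro:characterization-all-trees} to produce a proof tree $T$ of $\alpha$ w.r.t.~$D$ and $\dep$ with $\support{T} = \support{G} = D'$. Since unravelling preserves depth, $\depth{T} = \depth{G} = \mgd{\alpha}{D}{\dep} = \mtd{\alpha}{D}{\dep}$, so $T$ is minimal-depth.

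For the direction $(1) \Rightarrow (2)$, starting from a minimal-depth proof tree $T$ with $\support{T} = D'$ I would apply Lemma~\ref{lem:scount-reduction} to obtain a proof tree $T'$ of the same support and small subtree count, and then Lemma~\ref{lem:from-trees-to-dags} to obtain a proof DAG $G$ of the same support and polynomial size. The main obstacle is checking that neither step raises the depth above $\mtd{\alpha}{D}{\dep}$. For Lemma~\ref{lem:scount-reduction}: the frozen root-to-leaf paths are copied from $T$ verbatim, and every replaced sibling subtree $T[v]$ with $\lambda(v)=\beta$ is substituted by a tree produced by Lemma~\ref{lem:exists-unumbiguous-tree}, whose construction uses a smallest-depth subtree of $T$ rooted at a node labelled $\beta$, so the substitute has depth at most $\depth{T[v]}$. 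Hence $\depth{T'} \leq \depth{T}$, and combined with the trivial lower bound $\depth{T'} \geq \mtd{\alpha}{D}{\dep} = \depth{T}$ this forces equality. For Lemma~\ref{lem:from-trees-to-dags}: every root-to-leaf path in the constructed DAG $G$ corresponds to a descending chain of equivalence classes which in turn traces a path of the same length in $T'$, so $\depth{G} \leq \depth{T'}$; the lower bound $\depth{G} \geq \mgd{\alpha}{D}{\dep} = \depth{T'}$ again yields equality, so $G$ is minimal-depth.

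The hard part is simply this bookkeeping of depth-preservation, since the two reused lemmas were originally designed only to control subtree count and DAG size, and one must inspect their constructions to verify that they neither lengthen frozen paths nor create new long chains of equivalence classes when applied to a minimal-depth input. Once this is in place, Proposition~\ref{pro:characterization-md-trees} follows, with the polynomial $f$ being exactly the one provided by Lemma~\ref{lem:from-trees-to-dags} composed with the subtree-count bound from Lemma~\ref{lem:scount-reduction} evaluated on the polynomial depth bound $|\base{D,\dep}|$.
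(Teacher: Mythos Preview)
Your proposal is correct and follows essentially the same two-step strategy as the paper (reduce subtree count via Lemma~\ref{lem:scount-reduction}, then compress via Lemma~\ref{lem:from-trees-to-dags}, checking that both constructions preserve minimal depth; unravel for the converse). The only minor differences are presentational: you make the equality $\mtd{\alpha}{D}{\dep} = \mgd{\alpha}{D}{\dep}$ explicit and bound the depth via Proposition~\ref{pro:depth-ptime}, whereas the paper appeals to Lemma~\ref{lem:depth-reduction} for the polynomial depth bound and leaves the tree/DAG depth equality implicit in the phrase ``preserves minimality of depth.''
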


The direction (2) implies (1) is shown by ``unravelling'' the minimal-depth proof DAG $G$ into a minimal-depth proof tree $T$ such that $\support{G} = \support{T}$.
We use the same ``unravelling'' construction as in the proof of direction (2) implies (1) of Proposition~\ref{pro:characterization-all-trees}, which {\em preserves the minimality of the depth}.
We now proceed with the direction (1) implies (2). The underlying construction proceeds in two main steps captured by Lemmas~\ref{lem:scount-reduction-md} and~\ref{lem:from-trees-to-dags-md} given below.

\medskip 

$\bullet$ The \textbf{\textit{first step}} is to show that a minimal-depth proof tree $T$ of $\alpha$ w.r.t.~$D$ and $\dep$ with $\support{T} = D'$ can be converted into a minimal-depth proof tree $T'$ of $\alpha$ w.r.t.~$D$ and $\dep$ with $\support{T'} = D'$ that has ``small'' subtree count.

\begin{lemma}\label{lem:scount-reduction-md}
	For each Datalog program $\dep$, there is a polynomial $f$ such that, for every database $D$ over $\esch{\dep}$, fact $\alpha$ over $\sch{\dep}$, and $D' \subseteq D$, if there is a minimal-depth proof tree $T$ of $\alpha$ w.r.t.~$D$ and $\dep$ with $\support{T} = D'$, then there is also such a proof tree $T'$ with $\scount{T'} \leq f(|D|)$.
\end{lemma}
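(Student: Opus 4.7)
\textbf{Proof proposal for Lemma~\ref{lem:scount-reduction-md}.} The plan is to mimic the single-step reduction used for non-recursive proof trees (Lemma~\ref{lem:scount-reduction-nr}), namely to apply directly the construction underlying Lemma~\ref{lem:scount-reduction} to the given minimal-depth proof tree, after first checking that (a) minimal-depth proof trees already have polynomial depth, and (b) the construction does not inflate this depth.

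First, I would show that any minimal-depth proof tree $T$ of $\alpha$ w.r.t.~$D$ and $\dep$ has depth polynomial in $|D|$. Every proof tree is a proof DAG, and conversely the unravelling construction used in the proof of Proposition~\ref{pro:characterization-all-trees} turns a proof DAG into a proof tree of the same depth (each root-to-leaf path in the tree corresponds to such a path in the DAG). Thus $\mtd{\alpha}{D}{\dep} = \mgd{\alpha}{D}{\dep}$, and by Lemma~\ref{lem:rank-leq} this equals $\rank{\alpha}{D,\dep}$, which is bounded by $|\base{D,\dep}|$ since the immediate-consequence iteration stabilises within $|\base{D,\dep}|$ steps. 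For fixed $\dep$, this bound is polynomial in $|D|$.

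Second, I would feed $T$ into the construction underlying Lemma~\ref{lem:scount-reduction}. This yields a proof tree $T'$ of $\alpha$ w.r.t.~$D$ and $\dep$ with $\support{T'} = D'$ and $\scount{T'} \leq f(|D|)$ for some polynomial $f$. Third, I would verify that this construction preserves minimality of the depth. Recall that it freezes one path per fact of $D'$ (so these paths retain their original length) and replaces each sibling subtree $T[v]$ by a tree $T_v$ supplied by Lemma~\ref{lem:exists-unumbiguous-tree}. Inspecting the proof of the latter, $T_v$ lies in some set $Z_k$, where $k$ is the depth of the smallest-depth subtree of $T[v]$ rooted at a node labeled $\lambda(v)$; since $T[v]$ is itself such a subtree, $k \leq \depth{T[v]}$, and every tree in $Z_k$ has depth at most $k$ by its inductive definition. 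Hence $\depth{T_v} \leq \depth{T[v]}$, so $\depth{T'} \leq \depth{T} = \mtd{\alpha}{D}{\dep}$. Since $T'$ is itself a proof tree of $\alpha$, its depth cannot fall strictly below $\mtd{\alpha}{D}{\dep}$, so $T'$ is indeed minimal-depth.

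The main obstacle is the third step: the statements of Lemma~\ref{lem:scount-reduction} and Lemma~\ref{lem:exists-unumbiguous-tree} only guarantee properties about subtree count and ambiguity, respectively, with no explicit claim about depth. A brief revisit of their internal constructions is therefore needed to certify the bound $\depth{T_v} \leq \depth{T[v]}$, which is what allows the reduction to pass from arbitrary to minimal-depth proof trees without blowing up the depth.
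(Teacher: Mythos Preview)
Your proposal is correct and follows essentially the same approach as the paper: establish that a minimal-depth proof tree already has polynomial depth, apply the construction underlying Lemma~\ref{lem:scount-reduction}, and verify that this construction does not increase depth (hence preserves minimality). The only minor difference is in the first step: the paper bounds $\depth{T}$ more directly by invoking Lemma~\ref{lem:depth-reduction} (some proof tree has polynomial depth, so the \emph{minimal}-depth one does too), whereas you go through $\mtd{\alpha}{D}{\dep}=\mgd{\alpha}{D}{\dep}=\rank{\alpha}{D,\dep}\le |\base{D,\dep}|$; both routes are valid, and your more detailed justification of the depth bound for the replacement trees $T_v$ (via the inductive structure of the sets $Z_k$) fleshes out exactly what the paper leaves as ``it can be verified''.
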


\begin{proof}
	We first argue that the proof tree $T$ has ``small'' depth. In particular, by Lemma~\ref{lem:depth-reduction}, there exists a polynomial $f$ and a proof tree $T'$ of $\alpha$ w.r.t.~$D$ and $\dep$ with $\depth{T'} \leq f(|D|)$. Since, by hypothesis, $T$ is of minimal-depth, we conclude that $\depth{T} \leq f(|D|)$.
	The other crucial observation is that the construction underlying Lemma~\ref{lem:scount-reduction}, which converts a proof tree of ``small'' depth into a proof tree of ``small`` subtree count with the same support {\em preserves the minimality of the depth}. It can be verified that the proof tree $T'$ obtained by applying on $T$ the construction underlying Lemma~\ref{lem:scount-reduction} is such that $\depth{T'} \leq \depth{T}$. Hence, since $T$ is a minimal-depth proof tree, then so is $T'$.
	Consequently, we can apply the construction of Lemma~\ref{lem:scount-reduction} to the minimal-depth proof tree $T$ and get a minimal-depth proof tree $T'$ with $\support{T} = \support{T'}$ such that $\scount{T'} \leq f(|D|)$, where $f$ is the polynomial provided by Lemma~\ref{lem:scount-reduction}.
\end{proof}

\smallskip

$\bullet$ The \textbf{\textit{second step}} shows that a minimal-depth proof tree $T$ of $\alpha$ w.r.t.~$D$ and $\dep$ with $\support{T} = D'$ of ``small'' subtree count can be converted into a compact minimal-depth proof DAG $G$ of $\alpha$ w.r.t.~$D$ and $\dep$ with $\support{G} = D'$.

\begin{lemma}\label{lem:from-trees-to-dags-md}
	For each Datalog program $\dep$ and a polynomial $f$, there is a polynomial $g$ such that, for every database $D$ over $\esch{\dep}$, fact $\alpha$, and $D' \subseteq D$, if there is a minimal-depth proof tree $T$ of $\alpha$ w.r.t.~$D$ and $\dep$ with $\support{T} = D'$ and $\scount{T} \leq f(|D|)$, then there is a minimal-depth proof DAG $G = (V,E,\lambda)$ of $\alpha$ w.r.t.~$D$ and $\dep$ with $\support{G} = D'$ and $|V| \leq g(|D|)$.
\end{lemma}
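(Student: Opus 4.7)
The plan is to reuse verbatim the construction underlying Lemma~\ref{lem:from-trees-to-dags}, which converts a proof tree of small subtree count into a polynomially-sized proof DAG with the same support, and to verify the additional property that the resulting DAG inherits minimality of depth from the input tree. Concretely, given the minimal-depth proof tree $T$ of $\alpha$ w.r.t.~$D$ and $\dep$ with $\support{T} = D'$ and $\scount{T} \leq f(|D|)$, applying that construction yields a proof DAG $G=(V,E,\lambda)$ of $\alpha$ w.r.t.~$D$ and $\dep$ with $\support{G} = D'$ and $|V| \leq g(|D|)$, for the polynomial $g$ provided by Lemma~\ref{lem:from-trees-to-dags}. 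The only nontrivial task is to show that $G$ is minimal-depth.

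A preparatory step is the observation that $\mtd{\alpha}{D}{\dep} = \mgd{\alpha}{D}{\dep}$ for every fact $\alpha$: indeed, any proof tree is already a proof DAG, giving $\mgd{\alpha}{D}{\dep} \leq \mtd{\alpha}{D}{\dep}$; conversely, the unravelling procedure described in the discussion of Proposition~\ref{pro:characterization-all-trees} turns any proof DAG into a proof tree of the same depth, yielding the other inequality. Consequently $\depth{T}$, which equals $\mtd{\alpha}{D}{\dep}$ by hypothesis, coincides with $\mgd{\alpha}{D}{\dep}$.

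Next I would establish that the construction never increases depth, i.e.\ $\depth{G} \leq \depth{T}$. Inspecting the proof of Lemma~\ref{lem:from-trees-to-dags}, each node of $G$ has the form $v^{C_\beta}_j$ where $C_\beta \in \quot{T[\beta]}$, and an edge $(v^{C_\beta}_j, v^{C'_{\beta'}}_\ell)$ is present exactly when every representative of $C_\beta$ has, among the children of its root, a node whose subtree lies in $C'_{\beta'}$. Therefore every root-to-leaf path $v^{C_{\beta_0}}_{j_0} \to \cdots \to v^{C_{\beta_k}}_{j_k}$ in $G$ can be ``realised'' inductively inside $T$: starting from any occurrence of $C_{\beta_0}$, one can descend to a child that is an occurrence of $C_{\beta_1}$, then to a grandchild in $C_{\beta_2}$, and so on, producing a root-to-leaf path of length at least $k$ in $T$. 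Hence $\depth{G} \leq \depth{T}$.

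Combining the two observations gives $\depth{G} \leq \depth{T} = \mgd{\alpha}{D}{\dep}$, while $\depth{G} \geq \mgd{\alpha}{D}{\dep}$ holds by definition, so $\depth{G} = \mgd{\alpha}{D}{\dep}$ and $G$ is indeed a minimal-depth proof DAG. The size bound $|V| \leq g(|D|)$ is inherited directly from Lemma~\ref{lem:from-trees-to-dags}. I expect the main obstacle to be the bookkeeping in the depth-preservation argument, because the DAG identifies nodes that may originate from different subtrees of $T$, so one must argue carefully that any descending chain of equivalence classes in $G$ corresponds to a genuine descending path in $T$; this is ultimately a straightforward induction, but the notation is somewhat cumbersome.
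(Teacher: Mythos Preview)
Your proposal is correct and follows exactly the approach of the paper: apply the construction of Lemma~\ref{lem:from-trees-to-dags} and argue that it preserves minimality of depth. The paper's proof is a one-line assertion of this preservation, whereas you have spelled out the two ingredients (the equality $\mtd{\alpha}{D}{\dep} = \mgd{\alpha}{D}{\dep}$ and the fact that every root-to-leaf path in $G$ is realised by a path of the same length in $T$), which is precisely the justification the paper leaves implicit.
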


\begin{proof}
	We employ the construction underlying Lemma~\ref{lem:from-trees-to-dags}, which converts a proof tree of ``small'' subtree count into a non-recursive proof DAG of polynomial size with the same support, since it {\em preserves the minimality of the depth}. 
\end{proof}

It is now clear that the direction (1) implies (2) of Proposition~\ref{pro:characterization-md-trees} is an immediate consequence of Lemmas~\ref{lem:scount-reduction-md} and~\ref{lem:from-trees-to-dags-md}.

\medskip
\noindent
\textbf{Finalize the Proof.} We can now finalize the proof of the claim that $\mathsf{Why\text {-}Provenance_{MD}[\DAT]}$ is in \NP~in data complexity.
Fix a Datalog query $Q = (\dep,R)$. Given a database $D$ over $\esch{\dep}$, a tuple $\bar t \in \adom{D}^{\arity{R}}$, and a subset $D'$ of $D$, to decide whether $D' \in \mdwhy{\bar t}{D}{Q}$ we simply need to check for the existence of a minimal-depth proof tree $T$ of $R(\bar t)$ w.r.t.~$D$ and $\dep$ such that $\support{T} = D'$. By Proposition~\ref{pro:characterization-md-trees}, this is tantamount to the existence of a polynomially-sized minimal-depth proof DAG $G$ of $R(\bar t)$ w.r.t.~$D$ and $\dep$ with $\support{G} = D'$.
The existence of such a non-recursive proof DAG can be checked via a non-deterministic algorithm that runs in polynomial time in the size of the database as it was done for proving that $\mathsf{Why\text {-}Provenance[\DAT]}$ is in \NP~in data complexity (Theorem~\ref{the:recursive-complexity}). The only difference is that now we need to additionally check that the guessed DAG $G$ is of minimal-depth, i.e., that $\depth{G}$ coincides with $\mgd{R(\bar t)}{D}{\dep}$.
It remains to argue that the latter can be done in polynomial time. The fact that $\mgd{R(\bar t)}{D}{\dep}$ can be computed in polynomial time follows from Proposition~\ref{pro:depth-ptime}. Now, $\depth{G}$ can be computed by converting $G$ into an edge-weighted graph $G'$ by assigning weight $-1$ to each edge, and then running Dijkstra's polynomial-time algorithm for finding the smallest weight of a path between two nodes.\footnote{We recall that Dijkstra's algorithm is correct on graphs with \emph{arbitrary} integer edge labels, only when the input graph is a DAG. Indeed, for general graphs, computing the length of the longest path between two nodes is $\NP$-hard.}
Consequently, $\mathsf{Why\text {-}Provenance_{MD}}[Q]$ is in \NP, and thus, $\mathsf{Why\text {-}Provenance_{MD}[\DAT]}$ is in \NP~in data complexity.

\medskip

\noindent \underline{\textbf{Lower Bound}}
\smallskip

\noindent We proceed to establish that $\mathsf{Why\text {-}Provenance_{MD}[\LDAT]}$ is \NP-hard in data complexity. To this ends, we need to show that there exists a linear Datalog query $Q$ such that the problem $\mathsf{Why\text {-}Provenance_{MD}}[Q]$ is \NP-hard. The proof is via a reduction from $\mathsf{3SAT}$, which takes as input a Boolean formula $\varphi = C_1 \wedge \ldots \wedge C_m$ in 3CNF, where each clause has exactly 3 literals (a Boolean variable $v$ or its negation $\neg v$), and asks whether $\varphi$ is satisfiable.

\medskip
\noindent \textbf{The Linear Datalog Query.}
We start by defining the linear Datalog query $Q = (\dep,R)$. We actually adapt the query $Q$ used in the proof of the fact that $\mathsf{Why\text {-}Provenance[\LDAT]}$ is \NP-hard in data complexity (see Theorem~\ref{the:recursive-complexity}) in such a way that every proof tree has the same depth.
As usual, we use $\_$ if the name of a variable is not important, and semicolons in a tuple expression in order to separate terms with a different semantic meaning. The program $\dep$ follows:
\begin{eqnarray*}
\sigma_1 &:& R(x)\,\, \assign \,\,{\rm Var}(x;y,\_;z),{\rm Assign}(x,y,z), \\
\sigma_2 &:& R(x)\,\, \assign \,\, {\rm Var}(x;\_,y;z),{\rm Assign}(x,y,z),\\
\sigma_3 &:& {\rm Assign}(x,y,z)\,\, \assign \,\, {\rm NextC}(x;z,w;k,\ell),\\
&& \hspace{10mm}  C(x,y;\_,\_;\_,\_;z,w;k,\ell), {\rm Assign}(x,y,w),\\
\sigma_4 &:& {\rm Assign}(x,y,z) \,\, \assign \,\, {\rm NextC}(x;z,w;k,\ell),\\
&& \hspace{10mm} C(\_,\_;x,y;\_,\_;z,w;k,\ell),{\rm Assign}(x,y,w), 
\end{eqnarray*}
\begin{eqnarray*}
\sigma_5 &:& {\rm Assign}(x,y,z) \,\, \assign \,\, {\rm NextC}(x;z,w;k,\ell),\\
&& \hspace{10mm} C(\_,\_;\_,\_;x,y;z,w;k,\ell), {\rm Assign}(x,y,w), \\
\sigma' &:& {\rm Assign}(x,y,z) \,\, \assign \,\, {\rm NextC}(x;z,w;y,\_),\\
&& \hspace{45mm}{\rm Assign}(x,y,w), \\
\sigma'' &:& {\rm Assign}(x,y,z) \,\, \assign \,\, {\rm NextC}(x;z,w;\_,y),\\
&& \hspace{45mm}{\rm Assign}(x,y,w), \\
\sigma_6 &:& {\rm Assign}(x,z,w) \,\, \assign \,\, {\rm Next}(x,y;z,\_;w),R(y), \\
\sigma_7 &:& {\rm Assign}(x,z,w) \,\, \assign \,\, {\rm Next}(x,y;\_,z;w),P(y), \\
\sigma_8 &:& R(x) \,\, \assign \,\, {\rm Last}(x).
\end{eqnarray*}
The key difference compared to the Datalog program used in the proof of Theorem~\ref{the:recursive-complexity} is that now, roughly speaking, the rules $\sigma_1, \sigma_2$ also attach the id of the first clause to the variable assignment; the last position of the relation ${\rm Var}$ stores the id of the first clause. The rules $\sigma_3,\sigma_4,\sigma_5,\sigma',\sigma''$, where $\sigma_3$, $\sigma_4$ and $\sigma_5$ are adapted from the previous proof, whereas $\sigma'$, $\sigma''$ are new, force a subtree to always to perform $m$ ``steps'', when going through those rules; {\rm NextC} provides an ordering of the clauses of the formula.
It is easy to verify that $\dep$ is indeed a linear Datalog program.

\medskip
\noindent \textbf{From $\mathsf{3SAT}$ to  $\mathsf{Why\text {-}Provenance_{MD}}[Q]$.} We now establish that $\mathsf{Why\text {-}Provenance_{MD}}[Q]$ is \NP-hard by reducing from $\mathsf{3SAT}$.
Consider a 3CNF Boolean formula $\varphi = C_1 \wedge \cdots \wedge C_m$ with $n$ Boolean variables $v_1,\ldots,v_n$. For a literal $\ell$, we write $\lvar{\ell}$ for the variable occurring in $\ell$, and $\lsign{\ell}$ for the number $1$ (resp., $0$) if $\ell$ is a variable (resp., the negation of a variable).
We define $D_\varphi$ as the database over $\esch{\dep}$
\begin{eqnarray*}
&& \{{\rm Var}(v_i;0,1;1) \mid i \in [n]\}\\
&\cup& \{{\rm Next}(v_i,v_{i+1};0,1;m+1) \mid i \in [n-1]\}\\
&\cup& \{{\rm Next}(v_n,\bullet;0,1;m+1), {\rm Last}(\bullet)\}\\
&\cup& \{C(\lvar{\ell_1},\lsign{\ell_1};\lvar{\ell_2},\lsign{\ell_2};\lvar{\ell_3},\lsign{\ell_3};i,i+1;0,1) \mid \\
&& \hspace{20mm} i \in [m] \text{ with } C_i = (\ell_1 \vee \ell_2 \vee \ell_3)\}\\
&\cup& \{{\rm NextC}(v_i;j,j+1;0,1) \mid i \in [n]\ \text{ and } j \in [m]\}.
\end{eqnarray*}
We can show the next lemma, which essentially states that the above construction leads to a correct polynomial-time reduction from $\mathsf{3SAT}$ to $\mathsf{Why\text {-}Provenance}[Q]$:

\begin{lemma}\label{lem:reduction-from-3sat-md}
	The following hold:
	\begin{enumerate}
		\item $D_\varphi$ can be constructed in polynomial time in $\varphi$. 
		\item $\varphi$ is satisfiable iff $D_\varphi \in \why{(v_1)}{D_\varphi}{Q}$.
	\end{enumerate}
\end{lemma}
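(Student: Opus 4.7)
The plan is to adapt the construction from Lemma~\ref{lem:reduction-from-3sat}, whose design already forces proof trees with support $D_\varphi$ to correspond to satisfying assignments of $\varphi$. The essential new ingredient is to ensure that every proof tree of $R(v_1)$ w.r.t.\ $D_\varphi$ and $\Sigma$ has exactly the same depth, so that $\why{(v_1)}{D_\varphi}{Q}$ and $\mdwhy{(v_1)}{D_\varphi}{Q}$ coincide on this instance, and \NP-hardness of $\mathsf{Why\text {-}Provenance_{MD}}[Q]$ follows from the $\why{}$ version. Polynomiality of $D_\varphi$ is immediate, as the number of facts is of order $O(n\cdot m)$ and each has constant arity.

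For the $(\Rightarrow)$ direction, given a satisfying assignment $\mu$ of $\varphi$, I will build a proof tree $T$ that mimics the one from the proof of Lemma~\ref{lem:reduction-from-3sat}, but which for each variable $v_i$ traverses all $m$ clauses in order via the ${\rm NextC}$ relation before moving on. Underneath a node labelled ${\rm Assign}(v_i,\mu(v_i),j)$, whenever clause $C_j$ contains a literal $\ell$ with $\lvar{\ell}=v_i$ and $\lsign{\ell}=\mu(v_i)$, I apply one of $\sigma_3,\sigma_4,\sigma_5$ depending on the position of $\ell$ in $C_j$, thereby adding the corresponding $C$-fact of $D_\varphi$ to the support; otherwise I apply $\sigma'$ or $\sigma''$ (selected according to $\mu(v_i)\in\{0,1\}$ so that it unifies with one of the flag positions in ${\rm NextC}$) to advance to clause $j{+}1$ without using any $C$-fact. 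Once clause $m{+}1$ is reached, $\sigma_6$ or $\sigma_7$ moves control to $R(v_{i+1})$, and once the dummy last element $\bullet$ is reached, $\sigma_8$ closes the proof. Since $\mu$ satisfies every clause, each $C$-fact of $D_\varphi$ is introduced at some variable step, and every ${\rm Var}$, ${\rm Next}$, ${\rm NextC}$, and ${\rm Last}$ fact is touched by the traversal, so $\support{T}=D_\varphi$.

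For the $(\Leftarrow)$ direction, suppose $T$ is a proof tree of $R(v_1)$ w.r.t.\ $D_\varphi$ and $\Sigma$ with $\support{T}=D_\varphi$. Linearity of $\Sigma$ forces the non-leaf nodes of $T$ to form a single path, and inspection of the rule shapes shows that the only way to advance through the clause counter is via one of $\sigma_3,\sigma_4,\sigma_5,\sigma',\sigma''$, while the only way to transition between variables is via $\sigma_6$ or $\sigma_7$ (guarded by the value $m{+}1$ in ${\rm Next}$) followed by $\sigma_1$ or $\sigma_2$. Consequently, the labels along the path decompose into blocks, one per variable $v_i$, each fixing a unique value $\mu(v_i)\in\{0,1\}$ determined by whether $\sigma_1$ or $\sigma_2$ was applied below $R(v_i)$. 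The equality $\support{T}=D_\varphi$ then forces every $C$-fact to be introduced somewhere along the path, which by the body-atom structure of $\sigma_3,\sigma_4,\sigma_5$ can happen only when the corresponding literal is made true by $\mu$, so $\mu$ satisfies $\varphi$.

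The main obstacle will be the careful verification that every proof tree of $R(v_1)$ w.r.t.\ $D_\varphi$ and $\Sigma$ has the same depth, which is what licenses inferring the minimal-depth version of the statement from the $\why{}$ version. In particular, I need to confirm that the auxiliary rules $\sigma'$ and $\sigma''$ can fire in every configuration where $\sigma_3,\sigma_4,\sigma_5$ cannot, so that the ${\rm NextC}$ traversal is always extendable by exactly one step and never by a shorter or longer detour, and that the flag positions $k,\ell\in\{0,1\}$ in ${\rm NextC}$ and in the trailing coordinates of $C$ are precisely chosen so that no shortcut derivation of ${\rm Assign}$ is possible at any intermediate clause index.
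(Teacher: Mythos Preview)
Your proposal is correct and follows essentially the same approach as the paper: the paper likewise first isolates the claim that every proof tree of $R(v_1)$ w.r.t.\ $D_\varphi$ and $\Sigma$ has the same depth (computing it explicitly as $n\cdot(m+2)+1$), and then reduces item~(2) to a straightforward adaptation of the proof of Lemma~\ref{lem:reduction-from-3sat}, using $\sigma',\sigma''$ to ``skip'' clauses not satisfied by the current variable's chosen value while still consuming the corresponding ${\rm NextC}$ fact. Your identification of the same-depth verification as the crucial new ingredient, and of the role of the $k,\ell$ flag positions in ${\rm NextC}$ and $C$ in ruling out shortcuts, matches the paper's reasoning precisely.
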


\begin{proof}
	It is straightforward to see that $D_\varphi$ can be constructed in polynomial time in the size of $\varphi$. We proceed to establish item (2). But let us first state and prove an auxiliary technical lemma, which essentially states that all proof trees of $R(v_1)$ w.r.t.~$D_\varphi$ and $\dep$ have the same depth. 
	
	\begin{lemma}\label{lem:same-depth}
		For every proof tree $T$ of $R(v_1)$ w.r.t.~$D_\varphi$ and $\dep$, $\depth{T} = n \cdot (m+2) + 1$.
	\end{lemma}

	\begin{proof}
		A proof tree of $R(v_1)$ w.r.t.~$D_\varphi$ and $\dep$ has a root node $v$ labeled with $R(v_1)$, and thus, its children are necessarily labeled with facts obtained by means of either rule $\sigma_1$ or $\sigma_2$. From these children, only one, say $u$, is labeled with an intensional fact, which is of the form ${\rm Assign}(v_1,\cdot,1)$ (the constant $1$ is due to the variable $z$ in the body of rules $\sigma_1,\sigma_2$). Then, the children of $u$ are necessarily labeled with facts obtained via one of $\sigma_3,\sigma_4,\sigma_5,\sigma',\sigma''$ (this is because of the constant $1$ in the fact labeling $u$). Moreover, due to the presence of the atom ${\rm NextC}(x;z,w;k,\ell)$ in the body of each such rule, only one child of $u$, say $u'$, is labeled with an intensional fact of the form ${\rm Assign}(v_1,\cdot,2)$. Applying the same reasoning, from this node with label ${\rm Assign}(v_1,\cdot,2)$, one of the rules $\sigma_3,\sigma_4,\sigma_5,\sigma',\sigma''$ will be used, and the id on the third position will be increased again, until it will coincide with $m+1$. Up to this point, the longest path in the tree from the root is of length $m$+1. It is not difficult to see now that the only rule that can be applied is either $\sigma_6$ or $\sigma_7$ due to the variable $w$ appearing in the atom ${\rm Next}(x,y;z,\_;w)$. After one of such rules is applied the only node labeled with an intensional fact has as label the fact $R(v_2)$, if $n>1$, and $R(\bullet)$ otherwise, and the total depth is $m+1+1 = m+2$. Thus, either $\sigma_1$ will then be used again, or $\sigma_8$. By applying the above reasoning for all the variables of $\varphi$, the total depth of a proof tree of $R(v_1)$ w.r.t.~$D_\varphi$ and $\dep$ is precisely $n \cdot (m+2) + 1$, as needed.
	\end{proof}

	With Lemma~\ref{lem:same-depth} in place, to prove our claim it suffices to show that $\varphi$ is satisfiable iff there is a proof tree $T$ (regardless of its depth) of $R(v_1)$ w.r.t.~$D_\varphi$ and $\dep$ such that $\support{T} = D_\varphi$. To prove this last claim, it is not difficult to see how one can adapt the proof given for Theorem~\ref{the:recursive-complexity}. The main difference is that we additionally need to argue that when a proof tree of $R(v_1)$ uses the rules $\sigma_3,\sigma_4,\sigma_5,\sigma',\sigma''$, and thus, we have a node in the proof tree labeled with a fact of the form ${\rm Assign}(v_i,\mathsf{val},j)$, if the truth value $\mathsf{val}$ chosen for $v_i$ does not make the clause $C_j$ true, then the proof tree will follow $\sigma'$ or $\sigma''$, depending on the value of $\mathsf{val}$. Hence, in any case, through rules $\sigma_3,\sigma_4,\sigma_5,\sigma',\sigma''$, any proof tree will be able to touch all facts with predicate ${\rm NextC}$ in the database. Since these are the only new facts w.r.t.~the ones used in the proof of Theorem~\ref{the:recursive-complexity} (all other facts have been only slightly modified), if $\varphi$ is satisfiable, then we have a proof tree $T$ of $R(v_1)$ w.r.t.~$D_\varphi$ and $\dep$ with the property that $\support{T} = D_\varphi$, and vice versa.
\end{proof}

By Lemma~\ref{lem:reduction-from-3sat-md}, $\mathsf{Why\text {-}Provenance_{MD}}[Q]$ is \NP-hard. Thus, $\mathsf{Why\text {-}Provenance_{MD}[\LDAT]}$ is \NP-hard in data complexity.

\subsection{Non-Recursive Queries}

We now focus on non-recursive Datalog queries, and show the following about the data complexity of why-provenance relative to minimal-depth proof trees:

\begin{theorem}\label{the:non-recursive-complexity-md}
	$\mathsf{Why\text {-}Provenance_{MD}[\NRDAT]}$ is in $\ACZ$ in data complexity.
\end{theorem}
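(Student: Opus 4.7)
The proof goes through first-order rewritability, mirroring Theorem~\ref{the:non-recursive-complexity} (and its counterpart for non-recursive proof trees, Theorem~\ref{the:non-recursive-complexity-nr}). The key observation is that since $Q = (\dep,R)$ is non-recursive, the predicate graph of $\dep$ is acyclic, so every proof tree of any fact w.r.t.~any database and $\dep$ has depth at most a constant $d$ depending only on $Q$. Consequently, for every input the value $\mtd{R(\bar t)}{D}{\dep}$ lies in $\{0,1,\ldots,d\}$, and minimality can be handled by a case split of constant size.

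For each $k \in \{0,1,\ldots,d\}$ I would construct two first-order formulas. The formula $\psi_k(\bar x)$, to be evaluated on $D'$, expresses ``there is a proof tree of $R(\bar x)$ of depth exactly $k$ whose support coincides with the whole evaluating database''. It is built exactly as in the proof of Theorem~\ref{the:non-recursive-complexity}, but restricting the disjunction over $\cqeq{Q}$ to those CQs induced by $Q$-trees of depth exactly $k$; this subfamily is finite since $\cqeq{Q}$ itself is finite (Lemma~\ref{lem:finitely-many-cqs}). The formula $\chi_k(\bar x)$, to be evaluated on $D$, expresses ``there is a proof tree of $R(\bar x)$ w.r.t.~the evaluating database of depth strictly less than $k$''; it is just the disjunction of the induced CQs of all $Q$-trees of depth $<k$, without the conjuncts $\varphi_2,\varphi_3$ that forbid extra facts. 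Both formulas are of constant size.

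The FO rewriting is then evaluated over an expanded input structure encoding both $D$ and $D'$ (e.g., by marking the atoms of $D'$ with a fresh unary predicate, or by duplicating $\esch{\dep}$). Writing $\psi_k^{D'}$ and $\chi_k^{D}$ for the evaluations of $\psi_k$ over $D'$ and $\chi_k$ over $D$ respectively, the target sentence is
\[
\Phi(\bar x)\ =\ \bigvee_{k=0}^{d}\, \bigl(\psi_k^{D'}(\bar x)\ \wedge\ \neg\,\chi_k^{D}(\bar x)\bigr).
\]
Correctness rests on two observations. First, any proof tree of $R(\bar t)$ w.r.t.~$D$ with support $D'$ is also a proof tree w.r.t.~$D'$, and vice versa, which decouples the support requirement from the ambient $D$. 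Second, the condition that $\depth{T}$ is minimum w.r.t.~$D$ is captured exactly by the conjunction of ``some proof tree of depth $k$ exists'' (implicit in $\psi_k^{D'}$) and ``no proof tree w.r.t.~$D$ has depth below $k$'' (given by $\neg \chi_k^{D}$).

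The main obstacle, compared to Theorems~\ref{the:non-recursive-complexity} and~\ref{the:non-recursive-complexity-nr}, is that minimal-depth is a non-local property: it refers to the ambient database $D$ and not only to the candidate explanation $D'$. In the non-recursive setting this is manageable precisely because the depth is a priori bounded by the constant $d$, so only finitely many candidate minimum values need be tested, each by a fixed FO formula. Once this decoupling is in place, the remaining verification reduces to routine adaptations of the arguments already used for Theorem~\ref{the:non-recursive-complexity}. Since $\Phi$ has constant size in the input, its evaluation is in $\ACZ$ by~\cite{Vardi95}, yielding the stated data complexity.
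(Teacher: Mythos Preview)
Your approach is correct and shares the paper's high-level strategy (FO rewritability, exploiting that $\cqeq{Q}$ is finite for non-recursive $Q$), but the decomposition differs in one substantive point. The paper augments each $Q_{\varphi(\bar y)}$ with an extra conjunct $\varphi_4$ that, for every $\xi(\bar s)\in\cqeq{Q}$ with $\depth{\xi(\bar s),Q}<\depth{\varphi(\bar y),Q}$, negates the existence of a matching instantiation; the resulting $Q^+_{\mi{FO}}$ is then evaluated on $D'$ alone (Lemma~\ref{lem:fo-tree-equiv-md}). You instead make an explicit case split on the candidate minimum depth $k\le d$, separate the support check $\psi_k$ (on $D'$) from the minimality check $\chi_k$ (on $D$), and combine them over an expanded input structure carrying both $D$ and $D'$. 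Your separation is the more robust route: the minimality condition is relative to $D$, and a strictly shallower proof tree of $R(\bar t)$ w.r.t.~$D$ need not have its leaves inside $D'$, so a ``no shallower tree'' test evaluated solely on $D'$ does not, on its face, rule out such witnesses---your $\neg\chi_k^{D}$ does so directly. The price you pay is packaging $D$ and $D'$ into a single input structure, which is standard and harmless for the $\ACZ$ bound.
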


This is shown via FO rewritability as done for Theorem~\ref{the:non-recursive-complexity}. Let us stress, however, that the target FO query cannot be obtained by simply refining the space of proof trees underlying the definition of $\cq{Q}$, for a Datalog query $Q$, as done in the case of non-recursive proof trees (see Theorem~\ref{the:non-recursive-complexity-nr}). The key reason is that the notion of minimal-depth is not over all proof trees, but over the proof trees of a certain fact. This dependency on the
fact (and indirectly, on the database) does not allow us to simply consider a refined family of proof trees as we did
for non-recursive proof-trees, and a slightly more involved construction is needed. Actually, we are going to adapt the FO query underlying Theorem~\ref{the:non-recursive-complexity}.

\medskip

\noindent \textbf{First-Order Rewriting.} Consider a non-recursive Datalog query $Q = (\dep,R)$. Recall that in the proof of Theorem~\ref{the:non-recursive-complexity}, for a CQ $\varphi(\bar y) \in \cqeq{Q}$, we defined an FO query $Q_{\varphi(\bar y)} = \psi_{\varphi(\bar y)}(x_1,\ldots,x_{\arity{R}})$, where $x_1,\ldots,x_{\arity{R}}$ are distinct variables that do not occur in any of the CQs of $\cqeq{Q}$, with the following property: for every database $D$ and tuple $\bar t \in \adom{D}^{\arity{R}}$, $\bar t \in Q_{\varphi(\bar y)}(D)$ iff $\bar t$ is an answer to $\varphi(\bar y)$ over $D$, and, in addition, {\em all} the atoms of $D$ are used in order to entail the sentence $\varphi[\bar y/\bar t]$, i.e., there are no other facts in $D$ besides the ones that have been used as witnesses for the atoms occurring in $\varphi[\bar y/\bar t]$.
We are going to extend $Q_{\varphi(\bar y)}$ into $Q^+_{\varphi(\bar y)} = \psi^+_{\varphi(\bar y)}(x_1,\ldots,x_{\arity{R}})$ that, in addition, performs the minimal depth check.
Recall that, with $\varphi$ being of the form $\exists \bar z \, (R_1(\bar w_1) \wedge \cdots \wedge R_n(\bar w_n))$, the formula $\psi_{\varphi(\bar y)}$, with free variables $\bar x = (x_1,\ldots,x_{\arity{R}})$, is of the form
\[
\exists \bar y \exists \bar z \left(\varphi_1\ \wedge\ \varphi_2\ \wedge\ \varphi_3\right),
\]
where $\varphi_1$ is defined as
\[
\bigwedge\limits_{i \in [n]}\, R_i(\bar w_i)\ \wedge\ (\bar x = \bar y)\ \wedge\ \bigwedge\limits_{\substack{u,v \in \var{\varphi}, \\ u \neq v}} \neg (u = v)
\]
$\varphi_2$ is defined as
\[
\bigwedge\limits_{P \in \{R_1,\ldots,R_n\}} \neg \left(\exists \bar u_{P} \left(P(\bar u_P)\ \wedge\ \bigwedge\limits_{\substack{i \in [n], \\ R_i = P}}\, \neg(\bar w_i = \bar u_P)\right)\right)
\]
and $\varphi_3$ is defined as
\[
\bigwedge\limits_{P \in \esch{\dep} \setminus \{R_1,\ldots,R_n\}} \neg \left(\exists \bar u_P \, P(\bar u_P)\right).
\]
Now, the formula $\psi^+_{\varphi(\bar y)}$ is
\[
\exists \bar y \exists \bar z \left(\varphi_1\ \wedge\ \varphi_2\ \wedge\ \varphi_3\ \wedge\ \varphi_4\right),
\]
where the additional conjunct $\varphi_4$ is defined as follows. For a CQ $\chi(\bar s) \in \cqeq{Q}$, we write $\depth{\chi(\bar s),Q}$ for the integer
\[
\min \{\depth{T} \mid T \text{ is a } Q\text{-tree with } \chi(\bar s) \eqtree \cq{T}\}
\]
that is, the smallest depth among all $Q$-trees whose induced CQ is isomorphic to $\chi(\bar s)$. Now, the formula $\varphi_4$ is
\begin{multline*}
	\bigwedge\limits_{\substack{\xi(\bar s) \in \cqeq{Q} \\ \text{with } \xi = \exists \bar u (P_1(\bar v_1),\ldots,P_m(\bar v_m)), \\ \depth{\xi(\bar s),Q} < \depth{\varphi(\bar y),Q}}} \neg \exists \bar s \exists \bar u \bigg(
	\bigwedge\limits_{i \in [m]}\, P_i(\bar v_i)\ \wedge\\
	(\bar x = \bar s)\ \wedge\ \bigwedge\limits_{\substack{u,v \in \var{\varphi}, \\ u \neq v}} \neg (u = v)\bigg)
\end{multline*}
which states that there is no other CQ $\xi(\bar s) \in \cqeq{Q}$ whose depth $d$ is smaller than the one of $\varphi(\bar y)$, and it witnesses the existence of a proof tree of depth $d$ of a fact $R(\bar t)$ w.r.t.~$D$ and $\dep$, where $\bar t$ is the tuple witnessed by $(x_1,\ldots,x_{\arity{R}})$.

With the FO query $Q^+_{\varphi(\bar y)}$ for each CQ $\varphi(\bar y) \in \cqeq{Q}$ in place, it should be clear that the desired FO query $Q^+_{\mi{FO}}$ is defined as $\Phi^+(x_1\ldots,x_{\arity{R}})$, where
\[
\Phi^+\ =\ \bigvee_{\varphi(\bar y) \in \cqeq{Q}} \psi^+_{\varphi(\bar y)}.
\]
We proceed to show the correctness of the construction.

\begin{lemma}\label{lem:fo-tree-equiv-md}
	Given a non-recursive Datalog query $Q=(\dep,R)$, a database $D$ over $\esch{\dep}$, $\bar t \in \adom{D}^{\arity{R}}$, and $D' \subseteq D$, $D' \in \mdwhy{\bar t}{D}{Q}$ iff $\bar t \in Q^+_{\mi{FO}}(D')$.
\end{lemma}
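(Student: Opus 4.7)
The plan is to establish the biconditional in two directions, mirroring the strategy of Lemma~\ref{lem:fo-tree-equiv} and handling the additional conjunct $\varphi_4$ that encodes the minimal-depth requirement.

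For the forward direction, I would assume $D' \in \mdwhy{\bar t}{D}{Q}$, meaning there is a proof tree $T$ of $R(\bar t)$ w.r.t.~$D$ and $\dep$ with $\support{T}=D'$ and $\depth{T} = \mtd{R(\bar t)}{D}{\dep}$. Taking $\varphi(\bar y) \in \cqeq{Q}$ as the representative of the equivalence class of $\cq{T}$, the argument of Lemma~\ref{lem:fo-tree-equiv} immediately yields satisfaction of $\varphi_1 \wedge \varphi_2 \wedge \varphi_3$ by $D'$ with $\bar x = \bar t$. For $\varphi_4$, I would argue by contradiction: if some $\xi(\bar s) \in \cqeq{Q}$ with $\depth{\xi(\bar s),Q} < \depth{\varphi(\bar y),Q}$ admitted a match in $D'$ with $\bar s = \bar t$ and distinct variables taking distinct values, then the image of this match would serve as the support of a proof tree of $R(\bar t)$ w.r.t.~$D$ and $\dep$ with induced CQ isomorphic to $\xi(\bar s)$. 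By selecting a minimum-depth $Q$-tree for $\xi(\bar s)$ and transporting it via constant renaming, one would obtain a proof tree of $R(\bar t)$ w.r.t.~$D$ of depth $\depth{\xi(\bar s),Q} < \depth{T}$, contradicting the minimality of $T$.

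For the converse direction, I would assume $\bar t \in Q^+_{\mi{FO}}(D')$, so there is a $\varphi(\bar y) \in \cqeq{Q}$ with $D' \models \psi^+_{\varphi(\bar y)}[\bar x/\bar t]$. The conjuncts $\varphi_1 \wedge \varphi_2 \wedge \varphi_3$ give, via Lemma~\ref{lem:fo-tree-equiv}, a proof tree of $R(\bar t)$ w.r.t.~$D$ and $\dep$ with support $D'$, which I can choose to have depth exactly $\depth{\varphi(\bar y),Q}$ by transplanting, through constant renaming, a minimum-depth $Q$-tree whose induced CQ is isomorphic to $\varphi(\bar y)$. To prove this tree is minimal-depth w.r.t.~$D$, I would take any competing proof tree $T^*$ of $R(\bar t)$ w.r.t.~$D$ and $\dep$ whose induced CQ lies in some class $[\xi(\bar s)] \in \cqeq{Q}$, and must argue that $\xi(\bar s)$ is matchable in $D'$ with $\bar s = \bar t$ and distinct variables distinct; then $\varphi_4$ forces $\depth{\xi(\bar s),Q} \geq \depth{\varphi(\bar y),Q}$, which in turn gives $\depth{T^*} \geq \depth{\varphi(\bar y),Q}$ as required.

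The main obstacle is the last step of the converse direction: showing that the support of an arbitrary competing proof tree $T^*$ of $R(\bar t)$ w.r.t.~$D$ always induces a homomorphism witnessing $\xi(\bar s)$ inside the fixed subset $D' \subseteq D$. This step has to lean essentially on the constraints $\varphi_2$ and $\varphi_3$ satisfied by $D'$, together with the structural description of $\cqeq{Q}$, in order to guarantee that $\support{T^*}$ is, up to isomorphism, embeddable into $D'$ so that the match transfers. I expect this to be the most delicate part of the argument, and the place where the non-recursiveness of $Q$ and the fact that $\cqeq{Q}$ is finite (cf.~Lemma~\ref{lem:finitely-many-cqs}) are used most heavily.
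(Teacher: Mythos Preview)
Your forward direction matches the paper's argument and is fine.

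For the converse direction you have correctly located the crux, and in fact you have been more careful than the paper. The paper's proof of $(\Leftarrow)$ takes a competing proof tree $T'$ of $R(\bar t)$ w.r.t.~$D$ with $\depth{T'} < \depth{T}$, notes that its induced CQ $\xi(\bar s)$ satisfies $\depth{\xi(\bar s),Q} < \depth{\varphi(\bar y),Q}$, and then simply asserts that this ``allows us to conclude that $\bar t \not\in Q^+_{\varphi(\bar y)}(D')$.'' But falsifying the conjunct of $\varphi_4$ indexed by $\xi(\bar s)$ \emph{over $D'$} would require a match of $\xi$ inside $D'$ sending $\bar s$ to $\bar t$; the tree $T'$ only supplies such a match in $D$, since $\support{T'} \subseteq D$ need not be contained in $D'$. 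This is exactly the step you flagged as the main obstacle, and the paper does not actually discharge it.

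The obstacle is not merely delicate---it cannot be overcome, because the equivalence as stated fails. Take $\dep = \{\,A(x)\ \assign\ P(x),\ R(x)\ \assign\ A(x),\ R(x)\ \assign\ S(x)\,\}$, the database $D = \{P(a), S(a)\}$, and $D' = \{P(a)\}$. Then $\mtd{R(a)}{D}{\dep} = 1$ (realised via $S(a)$), so $\mdwhy{(a)}{D}{Q} = \{\{S(a)\}\}$ and $D' \notin \mdwhy{(a)}{D}{Q}$. On the other hand, for $\varphi(y) = P(y)$ (the induced CQ of the depth-$2$ tree) the sentence $\psi^+_{\varphi(y)}[x/a]$ holds in $D'$: the conjuncts $\varphi_1,\varphi_2,\varphi_3$ are satisfied exactly as in Lemma~\ref{lem:fo-tree-equiv}, and the single conjunct of $\varphi_4$ (for $\xi(s)=S(s)$, the only CQ of smaller depth) reduces to $\neg S(a)$, which holds in $D'$. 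Hence $(a) \in Q^+_{\mi{FO}}(D')$. The root cause is that $Q^+_{\mi{FO}}$ is evaluated on $D'$ alone and therefore cannot detect shallower proof trees of $R(\bar t)$ whose support meets $D \setminus D'$; any correct rewriting for the minimal-depth variant must be given access to $D$, not just $D'$.
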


\begin{proof}
	We start with the $(\Rightarrow)$ direction. By hypothesis, there is a minimal-depth proof tree $T$ of $R(\bar t)$ w.r.t.~$D$ and $\dep$ such that $\support{T} = D'$.
	By Lemma~\ref{lem:fo-tree-equiv}, we get that the CQ $\varphi(\bar y) \in \cqeq{Q}$ induced by $T$ is such that $\bar t \in Q_{\varphi(\bar y)}(D')$. Thus, to prove that $\bar t \in Q^+_{\varphi(\bar y)}(D')$, it suffices to show that $\bar t \in Q'(D')$, where $Q' = \varphi_4(x_1,\ldots,x_{\arity{R}})$ with $\varphi_4$ being
	\begin{multline*}
	\bigwedge\limits_{\substack{\xi(\bar s) \in \cqeq{Q} \\ \text{with } \xi = \exists \bar u (P_1(\bar v_1),\ldots,P_m(\bar v_m)), \\ \depth{\xi(\bar s),Q} < \depth{\varphi(\bar y),Q}}} \neg \exists \bar s \exists \bar u \bigg(
	\bigwedge\limits_{i \in [m]}\, P_i(\bar v_i)\ \wedge\\
	(\bar x = \bar s)\ \wedge\ \bigwedge\limits_{\substack{u,v \in \var{\varphi}, \\ u \neq v}} \neg (u = v)\bigg).
	\end{multline*}
	Towards a contradiction, assume that $\bar t \not\in Q'(D')$. This in turn implies that there exists a CQ $\xi(\bar s) \in \cqeq{Q}$ of the form $\exists \bar u (P_1(\bar v_1),\ldots,P_m(\bar v_m))$ with $\depth{\xi(\bar s),Q} < \depth{\varphi(\bar y),Q}$ such that the sentence
	\[
	\exists \bar s \exists \bar u \bigg(
	\bigwedge\limits_{i \in [m]}\, P_i(\bar v_i)\ \wedge\\
	(\bar t = \bar s)\ \wedge\ \bigwedge\limits_{\substack{u,v \in \var{\varphi}, \\ u \neq v}} \neg (u = v)\bigg)
	\]
	is satisfied by $D'$. This implies that there is an assignment $h$ to the variables in $\xi$ with $h(\bar s) = \bar t$, $h(u) \neq h(v)$, for each $u,v \in \var{\xi}$ with $u \neq v$, and, for each $i \in [m]$, $P_i(h(\bar v_i)) \in D'$.
	Since $\xi(\bar s)$ is induced by a $Q$-tree $T'$ with depth $d=\depth{\xi(\bar s),Q}$, $T'$ is a proof tree of $R(\bar t)$ w.r.t.~$D$ and $\dep$ of depth $d$ such that 
	\[
	\support{T'} = \{P_1(h(\bar v_1)),\ldots,P_m(h(\bar v_m))\}\ \subseteq\ D'.
	\]
	However, since
	\[
	\depth{T'}\ =\ \depth{\xi(\bar s),Q}\ <\ \depth{\varphi(\bar y),Q}
	\]
	and
	\[
	\depth{\varphi(\bar y),Q}\ \leq\ \depth{T}, 
	\]
	we conclude that $\depth{T'} < \depth{T}$, which contradicts the fact that $T$ is a minimal-depth proof tree.
	
	We now proceed with direction $(\Leftarrow)$. By hypothesis, $\bar t \in Q^+_{\mi{FO}}(D')$. Therefore, there exists a CQ $\varphi(\bar y) \in \cqeq{Q}$ such that $\bar t \in Q^+_{\varphi(\bar y)}(D')$. It is clear that $\bar t \in Q_{\mi{FO}}(D')$, and form the proof of Lemma~\ref{lem:fo-tree-equiv} we get that $\varphi(\bar y)$ is induced by a $Q$-tree $T$ that is also a proof tree of $R(\bar t)$ w.r.t.~$D$ and $\dep$ with $\support{T} = D'$. We assume w.l.o.g. that $\depth{T} = \depth{\varphi(\bar y),Q}$, i.e., $T$ is the proof tree of the smallest depth among those that induce $\varphi(\bar y)$. It remains to show that $T$ is a minimal-depth proof tree. Towards a contradiction, assume that $T$ is not a minimal-depth proof tree. Thus, there exists another proof tree $T'$ for $R(\bar t)$ w.r.t.~$D$ and $\dep$ with $\depth{T'} < \depth{T}$. Let $\xi(\bar s) \in \cqeq{Q}$ be the CQ induced by $T'$. It is clear that
	\[
	\depth{\xi(\bar s),Q}\ \leq\ \depth{T'}\ <\ \depth{\varphi(\bar y),Q}.
	\]
	This allows us to conclude that $\bar t \not\in Q^+_{\varphi(\bar y)}(D')$, which is a contradiction. Consequently, $T$ is a minimal-depth proof tree of $R(\bar t)$ w.r.t.~$D$ and $\dep$. Since $\support{T} = D'$, we get that $D' \in \mdwhy{R(\bar t)}{D}{Q}$, as needed.
\end{proof}
\newcommand{\nodes}[1]{\mathsf{nodes}(#1)}
\newcommand{\edges}[1]{\mathsf{edges}(#1)}
\newcommand{\atomtotuple}[1]{\langle #1 \rangle}
\def\curnode{\mathsf{CurNode}}
\def\hedge{\mathsf{HEdge}}

\section{Unambiguous Proof Trees}\label{appsec:unambiguous-trees}
In this section, we provide proofs for all claims of Section~\ref{sec:unambiguous-trees}, and provide further details on our experimental evaluation.

\subsection{Proof of Theorem~\ref{the:complexity-unambiguous-proof-trees}}
We start by proving Theorem~\ref{the:complexity-unambiguous-proof-trees}, which we recall here for convenience:

\begin{manualtheorem}{\ref{the:complexity-unambiguous-proof-trees}}
	\theunambiguouscomplexity
\end{manualtheorem}
We prove item~(1) and item~(2) of Theorem~\ref{the:complexity-unambiguous-proof-trees} separately. We start by focusing on item~(1).

\medskip
\noindent
\underline{\textbf{Proof of Item~(1)}}
\smallskip

\noindent Our main task is to show that $\mathsf{Why\text {-}Provenance_{UN}[\DAT]}$ is in \NP. The \NP-hardness of $\mathsf{Why\text {-}Provenance_{UN}[\LDAT]}$ follows from the $\NP$-hardness of $\mathsf{Why\text {-}Provenance_{NR}[\LDAT]}$, which we have already shown in Section~\ref{appsec:refined-trees}. The latter follows from the observation that, in the case of linear Datalog programs, non-recursive proof trees and unambiguous proof trees coincide.
We now show the \NP~upper bound. 

This result relies on a characterization of the existence of an unambiguous proof tree of a fact $\alpha$ w.r.t.~a database $D$ and a Datalog program $\dep$ with $\support{T} = D' \subseteq D$ via the existence of a so-called {\em unambiguous proof DAG} $G$ of $\alpha$ w.r.t.~$D$ and $\dep$ with $\support{G} = D'$ of polynomial size. This in turn allows us to devise a guess-and-check algorithm that runs in polynomial time We proceed to formalize the above  high-level description.

For a rooted DAG $G=(V,E,\lambda)$ and a node $v \in V$, we use $G[v]$ to denote the subDAG of $G$ rooted at $v$. Moreover, two rooted DAGs $G=(V,E,\lambda)$ and $G'=(V',E',\lambda')$ are \emph{isomorphic}, denoted $G \eqtree G'$, if there is a bijection $h : V \ra V'$ such that , for each node $v \in V$, $\lambda(v) = \lambda(h(v))$, and for each two nodes $u,v \in V$,  $(u,v) \in E$ iff $(h(u),h(v)) \in E'$. With the above definitions  in place, we can now introduce the key notion of unambiguous proof DAG.

\begin{definition}[\textbf{Unambiguous Proof DAG}]\label{def:u-proof-dag}
	Consider a Datalog program $\dep$, a database $D$ over $\esch{\dep}$, and a fact $\alpha$ over $\sch{\dep}$. An \emph{unambiguous proof DAG of $\alpha$ w.r.t.\ $D$ and $\dep$} is a proof DAG $G=(V,E,\lambda)$ of $\alpha$ w.r.t.\ $D$ and $\dep$ such that, for all $v,u \in V$, $\lambda(u)=\lambda(v)$ implies $G[u] \eqtree G[v]$.\hfill\markfull
\end{definition}

We are now ready to present our characterization.

\begin{proposition}\label{pro:characterization-u-trees}
	For a Datalog program $\dep$, there is a polynomial $f$ such that, for every database $D$ over $\esch{\dep}$, fact $\alpha$ over $\sch{\dep}$, and $D' \subseteq D$, the following are equivalent:
	\begin{enumerate}
		\item There exists an unambiguous proof tree $T$ of $\alpha$ w.r.t.\ $D$ and $\dep$ such that $\support{T} = D'$.
		\item There is an unambiguous proof DAG $G=(V,E,\lambda)$ of $\alpha$ w.r.t.\ $D$ and $\dep$ with $\support{G}=D'$ and $|V| \le f(|D|)$.
	\end{enumerate}
\end{proposition}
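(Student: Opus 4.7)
The plan is to follow the template used for Proposition~\ref{pro:characterization-all-trees}, but to exploit two simple structural facts about unambiguous proof trees that make the intermediate depth- and subtree-count-reduction steps unnecessary. First I would observe that every unambiguous proof tree $T$ of $\alpha$ w.r.t.\ $D$ and $\dep$ has (i)~subtree count exactly $1$, by the very definition of unambiguity, and (ii)~depth at most $|\base{D,\dep}|$, hence polynomial in $|D|$. For (ii): if some root-to-leaf path contained two nodes $v, u$ with $v$ a proper ancestor of $u$ and $\lambda(v) = \lambda(u)$, then $T[u]$ would be a proper subtree of $T[v]$, having strictly fewer nodes, contradicting $T[v] \eqtree T[u]$. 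Hence all labels along every root-to-leaf path are pairwise distinct, and the path length is bounded by the number of distinct possible labels.

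For the easy direction $(2) \Rightarrow (1)$, I would reuse verbatim the unravelling construction from the proof of Proposition~\ref{pro:characterization-all-trees}, which produces a proof tree $T$ of $\alpha$ w.r.t.\ $D$ and $\dep$ with $\support{T} = \support{G}$. The extra obligation is to check that unravelling an \emph{unambiguous} proof DAG yields an \emph{unambiguous} proof tree: any two nodes $v_1, v_2$ of $T$ with $\lambda(v_1) = \lambda(v_2)$ descend from DAG nodes $u_1, u_2$ with the same label, so unambiguity of $G$ gives $G[u_1] \eqtree G[u_2]$, and isomorphic subDAGs clearly unravel to isomorphic subtrees.

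For the direction $(1) \Rightarrow (2)$, I would apply directly the construction underlying Lemma~\ref{lem:from-trees-to-dags} to $T$, skipping the intermediate Lemmas~\ref{lem:depth-reduction} and~\ref{lem:scount-reduction}, since $T$ already has subtree count $1$ (and polynomially bounded depth) by the observation above. Thus the resulting DAG $G=(V,E,\lambda)$ has at most $|\base{D,\dep}| \cdot 1 \cdot b$ nodes, where $b$ bounds the number of atoms in the body of any rule of $\dep$, which is polynomial in $|D|$; and the same argument as in Lemma~\ref{lem:from-trees-to-dags} gives $\support{G} = \support{T} = D'$.

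The main obstacle will be verifying that the constructed DAG is genuinely \emph{unambiguous}. The construction creates, for each equivalence class $C_\beta$ of subtrees of $T$ with root label $\beta$, up to $b$ ``duplicate'' nodes $v^{C_\beta}_1,\ldots,v^{C_\beta}_k$ that all carry the label $\beta$ and share exactly the same outgoing edges. Because $T$ is unambiguous, there is precisely one equivalence class $C_\beta$ per label $\beta$ occurring in $T$, so every $G$-node carrying label $\beta$ is of the form $v^{C_\beta}_j$ and has the same set of out-neighbours as every other such node. A bottom-up induction over $G$ (which is finite and acyclic) then shows that any two nodes of $G$ with the same label have isomorphic subDAGs: at the leaves this is immediate, and for an internal label $\beta$ the inductive hypothesis applied to each out-neighbour yields the isomorphism of the whole subDAGs rooted at the $v^{C_\beta}_j$'s. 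This establishes unambiguity of $G$ and completes the proof.
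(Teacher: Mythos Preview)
Your proposal is correct and follows essentially the same approach as the paper: for $(1)\Rightarrow(2)$ you observe that an unambiguous proof tree has subtree count~$1$ and apply the construction of Lemma~\ref{lem:from-trees-to-dags} directly, and for $(2)\Rightarrow(1)$ you reuse the unravelling from Proposition~\ref{pro:characterization-all-trees}, checking in both cases that unambiguity is preserved. Your write-up supplies more detail than the paper does (the depth bound via non-recursiveness, and the explicit bottom-up verification that the Lemma~\ref{lem:from-trees-to-dags} DAG is unambiguous), but the underlying argument is the same.
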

\begin{proof}
	We first prove (1) implies (2). Let $T$ be an unambiguous proof tree of $\alpha$ w.r.t.~$D$ and $\dep$ with $\support{T} = D'$. 
	By definition, the subtree count of $T$ is ``small''; in fact, for every label $\alpha$ of $T$, $|\quot{T[\alpha]}|=1$. We then employ the construction underlying Lemma~\ref{lem:from-trees-to-dags}, which converts a proof tree of ``small'' subtree count into proof DAG of polynomial size with the same support, since it {\em preserves unambiguity}.
	
	For (2) implies (1), we employ the ``unravelling'' construction used to prove that (2) implies (1) in Proposition~\ref{pro:characterization-all-trees} since it also {\em preserves unambiguity}. 
\end{proof}

\noindent
\textbf{Finalize the Proof.}
With Proposition~\ref{pro:characterization-u-trees} in place, proving item~(1) of Theorem~\ref{the:complexity-unambiguous-proof-trees} is straightforward. Indeed, we can employ a guess-and-check algorithm similar in spirit to the one employed to prove the $\NP$ upper bound of Theorem~\ref{the:recursive-complexity}. The key difference is that here we also need to verify that the guessed DAG $G$ is unambiguous. This can be easily done by guessing, together with the graph $G$, for every pair $u,v$ of nodes of $G$ with the same label, a bijection $h_{(u,v)}$ from the nodes of $G[u]$ to the nodes of $G[v]$. The number of nodes of $G$ is polynomial w.r.t.\ $|D|$, by Proposition~\ref{pro:characterization-u-trees}, and thus the number of bijections to guess is polynomial w.r.t.\ $|D|$. With the above bijections in place, it is enough to verify that each bijection $h_{(u,v)}$ witnesses that $G[u] \eqtree G[v]$. The latter check can be easily performed in polynomial time.

\medskip
\noindent
\underline{\textbf{Proof of Item~(2)}}
\smallskip

\noindent 
This is shown via first-order rewritability as done for Theorem~\ref{the:non-recursive-complexity}. In fact, the construction of the target FO query is exactly the same as in the proof of Theorem~\ref{the:non-recursive-complexity} with the key difference that, for a Datalog query $Q$, the set of CQs $\cq{Q}$ is defined by considering only unambiguous proof trees, i.e., is the set $\{\cq{T} \mid T \text{ is a {\em unambiguous} $Q$-tree}\}$.


\subsection{Proof of Proposition~\ref{pro:why-provenance-sat}}
The goal is to prove Propostion~\ref{pro:why-provenance-sat}. But first we need to introduce some auxiliary notions and results, which will then allow us to formally define the Boolean formula $\phi_{(\bar t,D,Q)}$. We will then proceed with the proof of Propostion~\ref{pro:why-provenance-sat}

\medskip
\noindent
\textbf{A More Refined Characterization.} The Boolean formula in question relines on a more refined characterization than the one provided by Proposition~\ref{pro:characterization-u-trees}. For this, we need to define a new kind of graph that witnesses the existence of an unambiguous proof tree.

\begin{definition}[\textbf{Compressed DAG}]\label{def:compressed-dag}
Consider a Datalog program $\dep$, a database $D$ over $\esch{\dep}$, and a fact $\alpha$ over $\sch{\dep}$. A \emph{compressed DAG} of $\alpha$ w.r.t.\ $D$ and $\dep$ is a rooted DAG $G=(V,E)$, with $V \subseteq \base{D,\dep}$, such that:
\begin{enumerate}
	\item The root of $G$ is $\alpha$.
	\item If $\beta \in V$ is a leaf node, then $\beta \in D$.
	\item If $\beta \in V$ has $n \geq 1$ outgoing edges $(\beta,\gamma_1),\ldots,(\beta,\gamma_n)$, then there is a rule $R_0(\bar x_0)\ \assign\ R_1(\bar x_1),\ldots,R_m(\bar x_m) \in \dep$ and a function $h : \bigcup_{i \in [m]} \bar x_i \ra \ins{C}$ such that $\beta = R_0(h(\bar x_0))$, and $\{\gamma_i\}_{i \in [n]} = \{R_i(h(\bar x_i)) \mid i \in [m]\}$. \hfill\markfull
\end{enumerate}
\end{definition}

A compressed DAG can be seen as a proof DAG-like structure where no more than one node is labeled with the same fact.
The above definition allows us to refine the characterization given in Proposition~\ref{pro:characterization-u-trees} as follows; for a non-labeled DAG $G=(V,E)$, with a slight abuse of notation, we denote $\support{G} = \{v \in V \mid v \text{ is a leaf of } G\}$.


\begin{proposition}\label{pro:characterization-u-trees-improved}
	For a Datalog program $\dep$, database $D$ over $\esch{\dep}$, fact $\alpha$ over $\sch{\dep}$, and database $D' \subseteq D$, the following are equivalent:
	\begin{enumerate}
		\item There exists an unambiguous proof tree $T$ of $\alpha$ w.r.t.\ $D$ and $\dep$ such that $\support{T} = D'$.
		\item There exists a compressed DAG $G$ of $\alpha$ w.r.t.\ $D$ and $\dep$, such that $\support{G}=D'$.
	\end{enumerate}
\end{proposition}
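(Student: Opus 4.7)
The plan is to establish both directions by direct constructions that pass between unambiguous proof trees and the compressed DAGs of Definition~\ref{def:compressed-dag}, exploiting the fact that in an unambiguous proof tree the labeling of a node essentially determines what must sit below it, while conversely a compressed DAG can be viewed as an unambiguous proof DAG under the identity labeling.

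For direction (1)\,$\Rightarrow$\,(2), I would take an unambiguous proof tree $T=(V,E,\lambda)$ of $\alpha$ w.r.t.\ $D$ and $\dep$ with $\support{T}=D'$, and form the quotient $G=(V_G,E_G)$ by setting $V_G=\lambda(V)$ and $E_G=\{(\lambda(v),\lambda(u))\mid (v,u)\in E\}$. Property (1) of Definition~\ref{def:compressed-dag} is immediate since the root of $T$ is labeled $\alpha$; property (2) follows because unambiguity makes a fact $\beta\in V_G$ a leaf of $G$ iff every node of $T$ with $\lambda(v)=\beta$ is a leaf, so $\support{G}$ equals $\support{T}=D'\subseteq D$. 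Property (3) is where unambiguity is essential: for any $\beta\in V_G$ with outgoing edges, pick any $v\in V$ with $\lambda(v)=\beta$; the proof-tree property supplies a rule $\sigma$ and a homomorphism $h$ witnessing the local derivation at $v$, and unambiguity forces the multisets of child-labels at all other nodes carrying label $\beta$ to coincide, so the set of outgoing edges of $\beta$ in $G$ is exactly $\{R_i(h(\bar x_i))\mid i\in[m]\}$ and $\sigma,h$ witness the property at $\beta$.

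The main obstacle is verifying that $G$ is a \emph{rooted} DAG, and the cleanest route I see is a depth argument driven by unambiguity. Since $\lambda(v)=\lambda(u)$ forces $T[v]\eqtree T[u]$, all subtrees of $T$ rooted at nodes with label $\beta$ share a common depth $d_\beta$. Any edge $(\beta,\gamma)\in E_G$ originates from a node $v$ labeled $\beta$ with a child $u$ labeled $\gamma$, so $d_\gamma<d_\beta$; hence along every directed path of $G$ the depths decrease strictly, ruling out cycles. For the unique-root requirement, I observe that if some non-root node of $T$ carried label $\alpha$, then by unambiguity its proper subtree would be isomorphic to the whole of $T$, contradicting finiteness of $T$; thus $\alpha$ appears only at the root of $T$ and has no incoming edges in $G$, while every other $\beta\in V_G$ inherits an incoming edge from a tree-edge entering a node labeled $\beta$.

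For direction (2)\,$\Rightarrow$\,(1), I would reuse the standard unravelling construction already used for Proposition~\ref{pro:characterization-all-trees}: view $G$ as a proof DAG under the identity labeling $\lambda(\beta)=\beta$ and unravel it (duplicating sub-DAGs at each re-entry) into a proof tree $T$ of $\alpha$ w.r.t.\ $D$ and $\dep$ with $\support{T}=\support{G}=D'$. Unambiguity of $T$ is then automatic: any two nodes of $T$ carrying the same label $\beta$ are copies of the unique node $\beta$ of $G$, so their subtrees in $T$ are both obtained by unravelling $G[\beta]$ and are therefore isomorphic, yielding the required $T[v]\eqtree T[u]$.
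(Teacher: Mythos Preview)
Your direction (1)\,$\Rightarrow$\,(2) is correct and in fact more direct than the paper's route: the paper first passes through Proposition~\ref{pro:characterization-u-trees} to obtain an unambiguous proof \emph{DAG} and then collapses that to a compressed DAG, whereas you collapse the unambiguous proof tree in one step via the quotient $V_G=\lambda(V)$, $E_G=\{(\lambda(v),\lambda(u))\mid (v,u)\in E\}$. Your depth argument for acyclicity and your finiteness argument for the unique root are clean and avoid the extra indirection.

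There is, however, a genuine (if easily repaired) gap in your (2)\,$\Rightarrow$\,(1). A compressed DAG is \emph{not} a proof DAG under the identity labeling, so you cannot simply invoke the unravelling of Proposition~\ref{pro:characterization-all-trees}. Compare Definition~\ref{def:proof-dag}(3), which requires the number of outgoing edges to equal the number of body atoms of the witnessing rule, with Definition~\ref{def:compressed-dag}(3), which only requires the \emph{set} of out-neighbours to equal $\{R_i(h(\bar x_i))\mid i\in[m]\}$. When a rule instantiation makes two body atoms collide (e.g.\ the recursive rule of Example~\ref{exa:proof-tree} with $y=z=a$), the compressed DAG has strictly fewer out-edges at that node than any witnessing rule has body atoms, and the BFS-copy unravelling of Proposition~\ref{pro:characterization-all-trees} would produce a node with the wrong number of children, i.e.\ not a proof tree. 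The fix---and this is exactly what the paper does---is to unravel \emph{along the rule body}: for each non-leaf $\beta$ fix once and for all a witnessing pair $(\sigma,h)$ with $\sigma=R_0(\bar x_0)\assign R_1(\bar x_1),\dots,R_m(\bar x_m)$, and give every node labeled $\beta$ in the tree $m$ children labeled $R_1(h(\bar x_1)),\dots,R_m(h(\bar x_m))$ (with repetitions). Your unambiguity argument then goes through verbatim, since the fixed choice of $(\sigma,h)$ per fact makes the subtree below any node depend only on its label.
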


\begin{proof}
	We first prove that (1) implies (2). Due to Proposition~\ref{pro:characterization-u-trees}, it suffices to show that if there exists an unambiguous proof DAG $G' = (V',E',\lambda')$ of $\alpha$ w.r.t.~$D$ and $\dep$ such that $\support{G'}=D'$, then there exists a compressed DAG $G=(V,E)$ of $\alpha$ w.r.t.\ $D$ and $\dep$ with $\support{G}=D'$.
	Assume that $G'=(V',E',\lambda')$ is an unambiguous proof DAG of $\alpha$ w.r.t.\ $D$ and $\dep$ such that $\support{G'}=D'$. Since $G'$ is unambiguous, for every two non-leaf nodes $u,v \in V'$ with $\beta = \lambda'(u)=\lambda'(v)$, we must have that $S= \{\lambda'(u_1),\ldots,\lambda'(u_n)\} = \{\lambda'(v_1),\ldots,\lambda'(v_m)\}$, where $u_1,\ldots,u_n$ and $v_1,\ldots,v_m$ are the children of $u$ and $v$ in $G'$, respectively. So, for each fact $\beta$ labeling a non-leaf node in $G'$, let us call the above (unique) set $S$ the \emph{justification of $\beta$ in $G'$}. Hence, constructing a compressed DAG $G=(V,E)$ for $\alpha$ w.r.t.\ $D$ and $\dep$, with $\support{G}=D'$ is straightforward. That is, the root of $G$ is $\alpha$, and if a node $\beta \in V$, letting $S=\{\gamma_1,\ldots,\gamma_n\}$ be its justification in $G'$, $G$ has nodes $\gamma_1,\ldots,\gamma_n$, and edges $(\beta,\gamma_1),\ldots,(\beta,\gamma_n)$.
	
	For proving (2) implies (1), we use an ``unravelling'' construction, similar to the one employed in the proof of Proposition~\ref{pro:characterization-all-trees} to convert a proof DAG to a proof tree. In particular, consider a compressed DAG $G=(V,E)$ of $\alpha$ w.r.t.~$D$ and $\dep$ with $\support{G}=D'$. By definition of $G$, for each non-leaf node $\beta$ of $G$, its children $\gamma_1,\ldots,\gamma_n$ in $G$ are such that there exists a rule $R_0(\bar x_0)\ \assign\ R_1(\bar x_1),\ldots,R_m(\bar x_m) \in \dep$ and a function $h : \bigcup_{i \in [m]} \bar x_i \ra \ins{C}$ such that $\beta = R_0(h(\bar x_0))$, and $\{\gamma_i\}_{i \in [n]} = \{R_i(h(\bar x_i)) \mid i \in [m]\}$; we call $(\sigma,h)$ the \emph{trigger of $\beta$ in $G$}, for some arbitrarily chosen pair $(\sigma,h)$ as described above.
	We unravel $G$ into an unambiguous proof tree $T=(V',E',\lambda')$ of $\alpha$ w.r.t.~$D$ and $\dep$ as follows. We add a node $v$ to $T$ with label $\lambda'(v) = \alpha$. Then, if $v$ is a node of $T$ with some label $\lambda'(v) = \beta$, letting $(\sigma,h)$ be the trigger of $\beta$ in $G$, where $\sigma = R_0(\bar x_0)\ \assign\ R_1(\bar x_1),\ldots,R_m(\bar x_m)$, we add $m$ fresh new nodes $u_1,\ldots,u_m$ to $T$, where $u_i$ has label $\lambda'(u_i)=R_i(h(\bar x_i))$, for $i \in [m]$, and we add edges $(v,u_1),\ldots,(v,u_m)$ to $T$.
	The fact that $T$ is a proof tree of $\alpha$ w.r.t.\ $D$ and $\dep$ follows by construction. To see that $T$ is unambiguous, observe that by the definition of compressed DAG, and by the construction of $T$, for every two non-leaf nodes $u,v$ of $T$ with the same label $\beta$, $u$ and $v$ have the same number of children $u_1,\ldots,u_n$, and $v_1,\ldots,v_n$, with $\lambda'(u_i)=\lambda'(v_i)$, for $i \in [n]$. Clearly, $\support{T}=D'$.
\end{proof}

With the above characterization in place, we are now ready to discuss how we construct our Boolean formula.

\medskip
\noindent
\textbf{Graph of Rule Instances and Downward Closure.} For our purposes, a \emph{(directed) hypergraph} $\mathcal{H}$ is a pair $(V,E)$, where $V$ is the set of \emph{nodes} of $\mathcal{H}$, and $E$ is the set of its \emph{hyperedges}, i.e., pairs of the form $(\alpha,T)$, where $\alpha \in V$, and $\emptyset \subsetneq T \subseteq V$. For two nodes $u,v$ of $\mathcal{H}$, we say that \emph{$u$ reaches $v$ in $\mathcal{H}$}, if either $u=v$, or there exists a sequence of hyperedges of the form $(u_1,T_1),\ldots,(u_n,T_n)$, with $u_1=u$, $v \in T_n$, and $u_i \in T_{i-1}$ for $1 < i \le n$. For $u \in V$, we write $\downof{\mathcal{H}}{u}$ for the hypergraph $(V',E')$ obtained from $\mathcal{H}$, where $V'$ contains $u$ and all nodes reachable from $u$, and the hyperedges are all the hyperedges of $\mathcal{H}$ mentioning a node of $V'$.
We can now introduce the notion of graph of rule instances.

\begin{definition}[\textbf{Graph of Rule Instances}]
Consider a Datalog program $\dep$, and a database $D$ over $\esch{\dep}$. 
The \emph{graph of rule instances (GRI) of $D$ and $\dep$} is the hypergraph $\gri{D}{\dep}=(V,E)$, with $V \subseteq \base{D,\dep}$, such that

\begin{enumerate}
	\item For each $\alpha \in D$, $\alpha \in V$.
	\item If there exists a rule $R_0(\bar x_0)\ \assign\ R_1(\bar x_1),\ldots,R_n(\bar x_n)$ in $\dep$ and a function $h : \bigcup_{i \in [n]} \bar x_i \ra \ins{C}$ such that $\alpha_i = R_i(h(\bar x_i)) \in V$, for $i \in [n]$, then $\alpha_0 = R_0(h(\bar x_0)) \in V$, and $(\alpha_0,\{\alpha_1,\ldots,\alpha_n\}) \in E$.
\end{enumerate}
\end{definition}

Roughly, $\gri{D}{\dep}$ is a structure that ``contains'' all possible compressed DAGs of $\alpha$ w.r.t.~$D$ and $\dep$.
Since we are interested in finding only compressed DAGs of a specific fact $\alpha$, we do not need to consider $\gri{D}{\dep}$ in its entirety, but we only need the sub-hypergraph of $\gri{D}{\dep}$ containing $\alpha$, and all nodes reachable from it.
Formally, for a Datalog program $\dep$, a database $D$ over $\esch{\dep}$, and a node (i.e., a fact) $\alpha$ of $\gri{D}{\dep}$, the \emph{downward closure} of $\alpha$ w.r.t.~$D$ and $\dep$ is the hypergraph $\downc{D}{\dep}{\alpha} = \downof{\gri{D}{\dep}}{\alpha}$.
In other words, the downward closure keeps from $\gri{D}{\dep}$ only the part that is relevant to derive the fact $\alpha$. It is easy to verify that $\downc{D}{\dep}{\alpha}$ ``contains'' all compressed DAGs of $\alpha$ w.r.t.~$D$ and $\dep$, and the next technical result follows:

\begin{lemma}\label{lem:dag-in-down}
	Consider a Datalog program $\dep$, a database $D$ over $\esch{\dep}$, a fact $\alpha$ over $\sch{\dep}$, and a compressed DAG $G=(V,E)$ of $\alpha$ w.r.t.\ $D$ and $\dep$. Then, for every node $\beta \in V$ with outgoing edges $(\beta,\gamma_1),\ldots,(\beta,\gamma_n)$ in $G$, we have that $(\beta,\{\gamma_1,\ldots,\gamma_n\})$ is a hyperedge of $\downc{D}{\dep}{\alpha}$.
\end{lemma}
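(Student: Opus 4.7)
The plan is to prove the lemma in two movements: first establish that every node of the compressed DAG $G$ appears as a node of $\gri{D}{\dep}$, and then use this together with property (3) of Definition~\ref{def:compressed-dag} to witness the required hyperedge; finally we argue that this hyperedge survives when we pass from $\gri{D}{\dep}$ to its downward closure $\downc{D}{\dep}{\alpha}$.

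First I would show, by bottom-up induction on the reverse topological order of $G$ (which is well-defined since $G$ is a DAG), that every node of $V$ is also a node of $\gri{D}{\dep}$. For the base case, leaves of $G$ belong to $D$ by property (2) of Definition~\ref{def:compressed-dag}, and so they belong to $\gri{D}{\dep}$ by property (1) of the definition of the GRI. For the inductive step, consider an internal node $\beta \in V$ with outgoing edges $(\beta,\gamma_1),\ldots,(\beta,\gamma_n)$. Each $\gamma_i$ is strictly lower in the reverse topological order, and hence already shown to be a node of $\gri{D}{\dep}$. By property (3) of Definition~\ref{def:compressed-dag}, there is a rule $\sigma = R_0(\bar x_0)\ \assign\ R_1(\bar x_1),\ldots,R_m(\bar x_m)$ in $\dep$ and a function $h$ with $\beta = R_0(h(\bar x_0))$ and $\{\gamma_i\}_{i\in[n]} = \{R_i(h(\bar x_i)) \mid i \in [m]\}$. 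Since every $R_i(h(\bar x_i))$ for $i \in [m]$ coincides with some $\gamma_j$ and thus already belongs to $\gri{D}{\dep}$, property (2) of the GRI definition forces $\beta$ itself to be a node of $\gri{D}{\dep}$, and moreover it yields the hyperedge $(\beta,\{\gamma_1,\ldots,\gamma_n\})$ in $\gri{D}{\dep}$.

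It remains to transfer this hyperedge to $\downc{D}{\dep}{\alpha}$. Since $G$ is rooted at $\alpha$, every node $\beta \in V$ is reachable from $\alpha$ in $G$; by following in $\gri{D}{\dep}$ exactly the same hyperedges witnessed above for ancestors of $\beta$ in $G$, we see that $\beta$ is reachable from $\alpha$ in $\gri{D}{\dep}$, and similarly each $\gamma_i$ is reachable from $\alpha$. Hence $\beta,\gamma_1,\ldots,\gamma_n$ all belong to the vertex set of $\downc{D}{\dep}{\alpha}$, and by the definition of $\downof{\cdot}{\cdot}$ the hyperedge $(\beta,\{\gamma_1,\ldots,\gamma_n\})$ is preserved.

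The only mildly delicate point, and the one I would be most careful about, is verifying that $\{\gamma_1,\ldots,\gamma_n\}$ matches exactly the set $\{R_i(h(\bar x_i)) \mid i \in [m]\}$ used in property (2) of the GRI definition, rather than a multiset version of it; but this is handled by the fact that Definition~\ref{def:compressed-dag} itself specifies the children of $\beta$ as a \emph{set} (the $\gamma_i$ being the distinct targets of outgoing edges), which is consistent with the set-based condition in the GRI definition. All remaining steps are routine bookkeeping.
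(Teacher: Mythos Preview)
Your proof is correct and fills in precisely the routine verification that the paper omits (the paper states the lemma without proof, merely remarking that it is ``easy to verify'' that $\downc{D}{\dep}{\alpha}$ contains all compressed DAGs of $\alpha$). The bottom-up induction showing that every node of $G$ lies in $\gri{D}{\dep}$ and produces the required hyperedge, followed by the reachability argument to pass to the downward closure, is exactly the intended argument.
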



%


\smallskip
\noindent
\textbf{The Boolean Formula.} We are now ready to introduce the desired Boolean formula. For a Datalog query $Q = (\dep,R)$, a database $D$ over $\esch{\dep}$, and a tuple $\bar t \in \adom{D}^{\arity{R}}$, we construct in polynomial time in $|D|$ the formula $\phi_{(\bar t,D,Q)}$ such that the why-provenance of $\bar t$ w.r.t.~$D$ and $Q$ relative to unambiguous proof trees can be computed from the truth assignments that make $\phi_{(\bar t,D,Q)}$ true. 

Let $\downc{D}{Q}{R(\bar t)}=(V,E)$. The set of Boolean variables of $\phi_{(\bar t,D,Q)}$ is composed of four disjoint sets $V_N$, $V_H$, $V_E$, and $V_C$ of variables. Each variable in $V_N$ corresponds to a node of $\downc{D}{Q}{R(\bar t)}$, i.e., $V_N = \{x_\alpha \mid \alpha \in V\}$, each variable in $V_H$ corresponds to a hyperedge of $\downc{D}{Q}{R(\bar t)}$, i.e., $V_H = \{y_e \mid e \in E\}$, and each variable in $V_E$ corresponds to a "standard edge" that can be extracted from a hyperedge of $\downc{D}{Q}{R(\bar t)}$, i.e., $V_E = \{z_{(\alpha,\beta)} \mid (\alpha,T) \in E \text{ with } \beta \in T \}$. The set $V_C$ will be discussed later. 
Roughly, the variables in $V_N$ and $V_E$ that will be true via a satisfying assignment of $\phi_{(\bar t,D,Q)}$ will induce the nodes and the edges of a compressed DAG $G$ for $R(\bar t)$ w.r.t.\ $D$ and $Q$, which, by Proposition~\ref{pro:characterization-u-trees-improved}, will imply that $\support{G} \in \unwhy{\bar t}{D}{Q}$.


The formula $\phi_{(\bar t,D,Q)}$ is of the form
\[
\phi_{\mi{graph}} \wedge \phi_{\mi{root}} \wedge \phi_{\mi{proof}} \wedge \phi_{\mi{acyclic}}.
\]
We proceed to discuss each of the above formulas. In the following, we use $A \Rightarrow B$ as a shorthand for $\neg A \vee B$.
The first formula $\phi_{\mi{graph}}$ is in charge of guaranteeing consistency between the truth assignments of the variables in $V_N$ and the variables in $V_E$, i.e., if an edge between two nodes is stated to be part of $G$, then the two nodes must belong to G as well:
$$ \phi_{\mi{graph}} = \bigwedge\limits_{z_{(\alpha,\beta)} \in V_E} (z_{(\alpha,\beta)} \Rightarrow x_\alpha) \wedge (z_{(\alpha,\beta)} \Rightarrow x_\beta).$$
The second formula $\phi_{\mi{root}}$ guarantees that the atom $R(\bar t)$ is indeed a node of $G$, it is the root of $G$, and no other atom that is a node of $G$ can be the root (i.e., it must always have at least one incoming edge):
\begin{multline*}
	\phi_{\mi{root}} = x_{R(\bar t)} \wedge
	\left(\bigwedge\limits_{z_{(\alpha,R(\bar t))} \in V_E} \neg z_{(\alpha,R(\bar t))} \right)\wedge \\
	\bigwedge\limits_{\substack{x_\alpha \in V_N \\ \text{ with } \alpha \neq R(\bar t)}} \left( x_\alpha \Rightarrow \bigvee\limits_{z_{(\beta,\alpha)} \in V_E} z_{(\beta,\alpha)}\right).
\end{multline*}
We now move to the next formula $\phi_{\mi{proof}}$. Roughly, this formula is in charge of ensuring that, whenever an intensional atom $\alpha$ is a node of $G$, then it must have the correct children in $G$. That is, its children are the ones coming from some hyperedge of $\downc{D}{Q}{R(\bar t)}$, and no other nodes are its children (this is needed to guarantee that $G$ is a \emph{compressed} DAG). This is achieved with two sub-formulas. The first part is in charge of choosing some hyperedge $(\alpha,T)$ of $\downc{D}{Q}{R(\bar t)}$, for each intensional atom $\alpha$, while the second guarantees that for each selected hyperedge $(\alpha,T)$ (one per intensional atom $\alpha$), \emph{all and only} the nodes in $T$ are children of $\alpha$ in $G$:
\begin{multline*}
	\phi_{\mi{proof}} = 
	\bigwedge\limits_{\substack{x_\alpha \in V_N \text{ with } \\ \alpha \text{ intensional }}} 
	\left( x_\alpha \Rightarrow \bigvee\limits_{y_{(\alpha,T)} \in V_H} y_{(\alpha,T)} \right) \wedge \\
	\bigwedge\limits_{\substack{y_{e} \in V_H \\ \text{with } e=(\alpha,T)}} \left( \bigwedge\limits_{z_{(\alpha,\beta)} \in V_E} y_{e} \Rightarrow 
	\ell_{e,\beta}
	\right),
\end{multline*}
where, for a hyperedge $e=(\alpha,T)$, $\ell_{e,\beta}$ denotes $z_{(\alpha,\beta)}$ if $\beta \in T$, and $\neg z_{(\alpha,\beta)}$ otherwise.

\medskip
\noindent
\textbf{Remark.} Although we are interested in choosing \emph{exactly one} hyperedge $(\alpha,T)$ for each intensional atom $\alpha$, the above formula uses a simple disjunction rather an exclusive or. This is fine as any truth assignment that makes two variables $y_{(\alpha,T_1)}$ and $y_{(\alpha,T_2)}$ true cannot be a satisfying assignment, since the second subformula in $\phi_{\mi{proof}}$, e.g., requires that the variables $z_{(\alpha,\beta)}$ with $\beta \in T_1$ are true, while all others must be false. Hence, since $T_1 \neq T_2$, when considering the hyperedge $(\alpha,T_2)$, this subformula will not be satisfied.

%

\medskip
\noindent The remaining formula to discuss is $\phi_{\mi{acyclic}}$. This last formula is in charge of checking that $G$, i.e., the graph whose edges correspond to the true variables in $V_E$, is acyclic. Checking acyclicity of a graph encoded via Boolean variables in a Boolean formula is a well-studied problem in the SAT literature, and thus different encodings exist. For the sake of our construction, it is enough to use the simplest (yet, not very efficient in practice) encoding, which just encodes the transitive closure of the graph. However, for our experimental evaluation, we will implement this formula using a more efficient encoding based on so-called vertex elimination, which reduces by orders of magnitude the size of the formula $\phi_{\mi{acyclic}}$~\cite{RankoohR22}.

To encode the transitive closure, we now need to employ the set of Boolean variables $V_C$, having a variable of the form $t_{(\alpha,\beta)}$, for every two nodes $\alpha,\beta$ of $\downc{D}{Q}{R(\bar t)}$. Intuitively, $t_{(\alpha,\beta)}$ denotes whether a path exists from $\alpha$ to $\beta$ in $G$.
With these variables in place, writing $\phi_{\mi{acyclic}}$ is straightforward: it just encodes the transitive closure of the underlying graph, and then checks whether no cycle exists:
\begin{multline*}
	\phi_{\mi{acyclic}} = \left(\bigwedge\limits_{z_{(\alpha,\beta)} \in V_E} z_{(\alpha,\beta)} \Rightarrow t_{(\alpha,\beta)}\right) \wedge \\
	\left(\bigwedge\limits_{\substack{z_{(\alpha,\beta)} \in V_E, t_{(\beta,\gamma)} \in V_C}} z_{(\alpha,\beta)} \wedge t_{(\beta,\gamma)} \Rightarrow t_{(\alpha,\gamma)}\right) \wedge \\
	\left( \bigwedge\limits_{t_{(\alpha,\alpha)} \in V_C} \neg t_{(\alpha,\alpha)} \right).
\end{multline*}

This completes the construction of our Boolean formula. One can easily verify that the formula is in CNF, and that can be constructed in polynomial time. 
We proceed to show its correctness, i.e., Proposition~\ref{pro:why-provenance-sat}.
For a truth assignment $\tau$ from the variables of $\phi_{(\bar t,D,Q)}$ to $\{0,1\}$, we write $\db{\tau}$ for the database $\{\alpha \in D \mid x_\alpha \in V_N \text{ and } \tau(x_\alpha)  =1\}$, i.e., the database collecting all facts having a corresponding variable in $\phi_{(\bar t,D,Q)}$ that is true w.r.t.~$\tau$. Finally, we let $\sem{\phi_{(\bar t,D,Q)}}$ as
\[
\left\{\db{\tau} \mid \tau \text{ is a satisfying assignment of } \phi_{(\bar t,D,Q)}\right\}.
\]

We are now ready to prove Proposition~\ref{pro:why-provenance-sat}, which we report here for convenience:

\begin{manualproposition}{\ref{pro:why-provenance-sat}}
	\prowhyprovenancesat
\end{manualproposition}

\begin{proof}
Due to Proposition~\ref{pro:characterization-u-trees-improved}, it suffices to show that:

\begin{lemma}\label{lem:sat-solutions}
	For every database $S \subseteq D$, the following are equivalent:
	\begin{enumerate}
		\item There exists a compressed DAG $G$ of $R(\bar t)$ w.r.t.~$D$ and $\dep$ with $\support{G}=S$.
		\item There exists a satisfying truth assignment $\tau$ of $\phi_{(\bar t,D,Q)}$ such that $\db{\tau}=S$.
	\end{enumerate}
\end{lemma}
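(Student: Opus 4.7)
The plan is to prove the two directions of Lemma~\ref{lem:sat-solutions} by explicitly translating between compressed DAGs and satisfying assignments, and then verifying that each of the four conjuncts of $\phi_{(\bar t,D,Q)}$ corresponds to one of the defining properties of a compressed DAG (plus acyclicity).

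For the direction (1) implies (2), I would start from a compressed DAG $G = (V,E)$ of $R(\bar t)$ w.r.t.~$D$ and $\dep$ with $\support{G}=S$, and define a truth assignment $\tau$ as follows: $\tau(x_\alpha)=1$ iff $\alpha \in V$; $\tau(z_{(\alpha,\beta)})=1$ iff $(\alpha,\beta) \in E$; for each intensional fact $\alpha \in V$, letting $\{\gamma_1,\ldots,\gamma_n\}$ be the set of children of $\alpha$ in $G$, set $\tau(y_{(\alpha,T)})=1$ iff $T=\{\gamma_1,\ldots,\gamma_n\}$ (by Lemma~\ref{lem:dag-in-down}, the pair $(\alpha,\{\gamma_1,\ldots,\gamma_n\})$ is indeed a hyperedge of $\downc{D}{Q}{R(\bar t)}$, so such a variable $y_{(\alpha,T)}$ exists); and finally $\tau(t_{(\alpha,\beta)})=1$ iff there exists a directed path from $\alpha$ to $\beta$ in $G$. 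Satisfaction of $\phi_{\mi{graph}}$ is immediate since edges of $G$ only connect nodes of $G$; satisfaction of $\phi_{\mi{root}}$ follows from the fact that $R(\bar t)$ is the unique root of $G$; satisfaction of $\phi_{\mi{proof}}$ follows from the fact that each intensional node of $G$ has exactly one selected hyperedge whose target set equals its children set, together with Lemma~\ref{lem:dag-in-down}; and satisfaction of $\phi_{\mi{acyclic}}$ is a direct consequence of $G$ being acyclic combined with the definition of $\tau(t_{(\alpha,\beta)})$. It is then immediate that $\db{\tau} = \support{G} = S$, since leaves of $G$ are exactly the facts of $S \subseteq D$.

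For the direction (2) implies (1), I would start from a satisfying assignment $\tau$ of $\phi_{(\bar t,D,Q)}$ and build a candidate graph $G=(V,E)$ by letting $V = \{\alpha \mid \tau(x_\alpha)=1\}$ and $E = \{(\alpha,\beta) \mid \tau(z_{(\alpha,\beta)})=1\}$. Clause $\phi_{\mi{graph}}$ guarantees that every edge connects two nodes of $V$, so $G$ is a well-formed directed graph. Clause $\phi_{\mi{root}}$ guarantees that $R(\bar t) \in V$, that $R(\bar t)$ has no incoming edge, and that every other node of $V$ has at least one incoming edge, so $R(\bar t)$ is the unique root. Clause $\phi_{\mi{acyclic}}$ guarantees that $G$ is acyclic (the transitive closure constraints force $t_{(\alpha,\beta)}$ to be true whenever $\alpha$ reaches $\beta$, and the final conjunct forbids self-reachability). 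The core of the argument is the handling of $\phi_{\mi{proof}}$: for every intensional $\alpha \in V$, the first sub-conjunct forces at least one $y_{(\alpha,T)}$ to be true, and the second sub-conjunct then forces the outgoing edges of $\alpha$ in $G$ to be \emph{exactly} those going into the elements of $T$; since $(\alpha,T)$ is a hyperedge of the downward closure, there is a rule of $\dep$ and a homomorphism witnessing it, which is precisely condition~(3) of Definition~\ref{def:compressed-dag}. Extensional facts in $V$ either have no outgoing edges (they are leaves) or are forbidden from being roots, so they lie in $D$, yielding condition~(2).

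The main obstacle is the "only-if" direction inside $\phi_{\mi{proof}}$: one must rule out spurious truth assignments where, for a single intensional $\alpha$, two different $y_{(\alpha,T_1)}$ and $y_{(\alpha,T_2)}$ are true, or where outgoing edges of $\alpha$ do not correspond to any single hyperedge. This is exactly handled by the second sub-formula, which encodes a literal $\ell_{e,\beta}$ for every potential neighbour $\beta$ and forces negation for $\beta \notin T$; as noted in the Remark, choosing two distinct hyperedges for the same $\alpha$ is inconsistent with these negative literals, so at most one hyperedge per intensional node can be selected, and the outgoing edges of $\alpha$ in $G$ form precisely its target set. Once this is settled, one concludes that $G$ is a compressed DAG of $R(\bar t)$ w.r.t.~$D$ and $\dep$, and that $\support{G} = \db{\tau} \cap D = \db{\tau} = S$, completing the proof.
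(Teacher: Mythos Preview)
Your proposal is correct and follows essentially the same approach as the paper: in both directions you translate between the compressed DAG and the truth assignment in the natural way, invoke Lemma~\ref{lem:dag-in-down} to ensure the relevant variables exist, and then verify each conjunct $\phi_{\mi{graph}}, \phi_{\mi{root}}, \phi_{\mi{proof}}, \phi_{\mi{acyclic}}$ against the corresponding structural property. The only cosmetic difference is that the paper packages the use of Lemma~\ref{lem:dag-in-down} as a separate ``Observation~1'' and is slightly more explicit in the case analysis for $\phi_{\mi{proof}}$, whereas you handle the uniqueness-of-hyperedge issue by appealing directly to the Remark; the content is the same.
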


\begin{proof}
	We start with the implication $(1) \Rightarrow (2)$. Assume that $G=(V,E)$ is a compressed DAG for $R(\bar t)$ w.r.t.~$D$ and $\dep$ with $\support{G}=S$. We construct the following truth assignment $\tau$ of $\phi_{(\bar t,D,Q)}$:
	\begin{itemize} 
		\item For each $x_\alpha \in V_N$, $\tau(x_\alpha)=1$ if $\alpha$ is a node of $G$, otherwise $\tau(x_\alpha)=0$. 
		\item For each $z_{(\alpha,\beta)} \in V_E$, $\tau(z_{(\alpha,\beta)})=1$ if there is an edge $(\alpha,\beta)$ in $G$, otherwise $\tau(z_{(\alpha,\beta)})=0$. 
		\item For each $y_{(\alpha,T)} \in V_H$, $\tau(y_{(\alpha,T)}) = 1$ if $\alpha$ is a node in $G$ with outgoing edges $(\alpha,\beta_1),\ldots,(\alpha,\beta_n)$ such that $T=\{\beta_1,\ldots,\beta_n\}$, otherwise $\tau(y_{(\alpha,T)})=0$.
		\item For each $t_{(\alpha,\beta)} \in V_C$, $\tau(t_{(\alpha,\beta)}) = 1$ if there is a path from $\alpha$ to $\beta$ in $G$, otherwise $\tau(t_{(\alpha,\beta)})=0$.
	\end{itemize}
	We now claim that $\tau$ makes $\phi_{(\bar t,D,Q)}$ true and $\db{\tau}=S$.
	
	\medskip
	\noindent
	\textbf{Observation 1.} By Lemma~\ref{lem:dag-in-down}, every node $\alpha$ of $G$ has a variable $x_\alpha$ in $\phi_{(\bar t,D,Q)}$. Similarly, if a node $\alpha$ has outgoing edges $(\alpha,\beta_1),\ldots,(\alpha,\beta_n)$ in $G$, then $z_{(\alpha,\beta_i)}$ for $i \in [n]$, and $y_{(\alpha,T)}$, with $T=\{\beta_1,\ldots,\beta_n\}$, are all variables of $\phi_{(\bar t,D,Q)}$.
	
	\medskip
	
	From Observation 1, each fact $\alpha \in \support{G}=S$ has a corresponding variable $x_\alpha$ in $\phi_{(\bar t,D,Q)}$. Moreover, by construction of $\tau$, $\tau(x_\alpha)=1$ for each $\alpha \in S$. Furthermore, for all facts $\beta$ of $D$ not in $S$ it means there is no node in $G$ labeled with $\beta$, and thus $\tau(x_\beta)=0$. Hence $\db{\tau}=S$. 
	
	We now show that $\tau$ makes $\phi_{(\bar t,D,Q)}$ true. We proceed by considering its subformulas separately.
	\begin{description}
	\item[\underline{$\phi_\mi{graph}$}.] It is clear that $\tau$ makes this formula true since when $\tau(z_{(\alpha,\beta)}) = 1$, it means that $\alpha$ and $\beta$ are nodes of $G$, and thus, by construction of $\tau$, $\tau(x_\alpha) = \tau(x_\beta)=1$.
	
	\item[\underline{$\phi_\mi{root}$}.] Since $R(\bar t)$ is the root of $G$, by Observation 1 and by construction of $\tau$, $\tau(x_{R(\bar t)}) = 1$. To see why the formula
	\[
	\left(\bigwedge\limits_{z_{(\alpha,R(\bar t))} \in V_E} \neg z_{(\alpha,R(\bar t))} \right)
	\]
	is true, since $R(\bar t)$ is the root, it does not have any incoming edges, and thus, by construction of $\tau$, $\tau(z_{(\alpha,R(\bar t))})=0$, for each $z_{(\alpha,R(\bar t))} \in V_E$. 
	Finally, regarding the formula
	$$\bigwedge\limits_{\substack{x_\alpha \in V_N \\ \text{ with } \alpha \neq R(\bar t)}} \left( x_\alpha \Rightarrow \bigvee\limits_{z_{(\beta,\alpha)} \in V_E} z_{(\beta,\alpha)}\right),$$
	if $\tau(x_\alpha)=1$, for some $\alpha \neq R(\bar t)$, by construction of $\tau$, $\alpha$ is a node of $G$. By Observation 1, for every edge $(\beta,\alpha)$ in $G$, we have that $z_{(\beta,\alpha)}$ is a variable of $\phi_{(\bar t,D,Q)}$, and $\tau(z_{(\beta,\alpha)}) = 1$, hence the disjunction is true.
	
	\item[\underline{$\phi_\mi{proof}$}.] We start by considering the formula
	\[
	\bigwedge\limits_{\substack{x_\alpha \in V_N \text{ with } \\ \alpha \text{ intensional }}} 
	\left( x_\alpha \Rightarrow \bigvee\limits_{y_{(\alpha,T)} \in V_H} y_{(\alpha,T)} \right).
	\]
	If $\tau(x_\alpha)=1$, for some intensional fact $\alpha$, it means that $\alpha$ is a node of $G$. Let $(\alpha,\beta_1),\ldots,(\alpha,\beta_n)$ be the outgoind edges of $\alpha$ in $G$. By Observation 1, $y_{(\alpha,T)}$, with $T=\{\beta_1,\ldots,\beta_n\}$, is a variable of $\phi_{(\bar t,D,Q)}$, and by construction of $\tau$, $\tau(y_{(\alpha,T)}) = 1$, hence the disjunction is true.
	We now consider the formula
	\[
	\bigwedge\limits_{\substack{y_{e} \in V_H \\ \text{with } e=(\alpha,T)}} \left( \bigwedge\limits_{z_{(\alpha,\beta)} \in V_E} y_{e} \Rightarrow 
	\ell_{e,\beta}
	\right).
	\]
	If $\tau(y_e)=1$, for $e=(\alpha,T)$, then $\alpha$ is a node of $G$ with outgoing edges $(\alpha,\beta_1),\ldots,(\alpha,\beta_n)$, where $T=\{\beta_1,\ldots,\beta_n\}$. By Observation 1, $z_{(\alpha,\beta_i)} \in V_E$, for $i \in [n]$, and by construction of $\tau$, $\tau(z_{(\alpha,\beta_i)})=1$, for $i \in [n]$. Hence, all implications of the form $y_e \Rightarrow \ell_{e,\beta}$, where $\beta \in \{\beta_1,\ldots,\beta_n\}$ are true.
	Regarding the other implications, i.e., when $\beta \not \in \{\beta_1,\ldots,\beta_n\}$, since $\alpha$ has no other outgoing edges in $G$, by construction of $\tau$, $\tau(z_{(\alpha,\beta)})=0$, for all other facts $\beta \not \in \{\beta_1,\ldots,\beta_n\}$. Hence, the whole formula is satisfied.
	
	\item[\underline{$\phi_{\mi{acyclic}}$}.] The fact that the formula is true follows from the acyclicity of $G$, Observation 1, and the construction of $\tau$.
	\end{description}

	We now proceed with $(2) \Rightarrow (1)$. By hypothesis, there is a truth assignment $\tau$ that makes $\phi_{(\bar t,D,Q)}$ true and $\db{\tau} = S$. We define the auxiliary sets
	\begin{eqnarray*}
		\nodes{\tau} &=& \{\alpha \mid x_\alpha \in V_N \text{ and } \tau(x_{\alpha}) = 1\}\\
		\edges{\tau} &=& \{(\alpha,\beta) \mid z_{(\alpha,\beta)} \in V_E \text{ and } \tau(z_{(\alpha,\beta)})=1\}. 
	\end{eqnarray*}
	
	
	\noindent Since $\tau$ makes $\phi_\mi{graph}$ true, every fact occurring in $\edges{\tau}$ also occurrs in $\nodes{\tau}$; hence, $G=(\nodes{\tau},\edges{\tau})$ is a well-defined directed graph. Moreover, since $\db{\tau}=S$, all the nodes without outgoing edges in $G$ are precisely the ones in $S$. Finally, since $\tau$ makes $\phi_{\mi{acyclic}}$ true, $G$ is acyclic, and since $\tau$ satisfies $\phi_\mi{root}$, $R(\bar t)$ is the only node of $G$ without incoming edges. Thus, $G$ is a DAG, its root is $R(\bar t)$, and its leaves is exactly the set $S$. It remains to argue that $G$ is a compressed DAG of $R(\bar t)$ w.r.t.~$D$ and $\dep$.
	
	Consider a node $\alpha$ of $G$ which is an intensional fact. It is clear that $\tau(x_\alpha)=1$, and thus, the dijunction in $\phi_\mi{proof}$ must be true. Hence, $\tau(y_{(\alpha,T)})=1$, for some hyperedge $(\alpha,T)$ of $\downc{D}{Q}{R(\bar t)}$. In particular, since $\tau$ makes
	\[
	\bigwedge\limits_{\substack{y_{e} \in V_H \\ \text{with } e=(\alpha,T)}} \left( \bigwedge\limits_{z_{(\alpha,\beta)} \in V_E} y_{e} \Rightarrow 
	\ell_{e,\beta}
	\right)\]
	true in $\phi_{\mi{proof}}$, and $\tau(y_{(\alpha,T)})=1$, we must have that $\tau$ assigns $1$ to all variables of the form $z_{(\alpha,\beta)}$ with $\beta \in T$, and $0$ to all other variables of the form $z_{(\alpha,\beta)}$ with $\beta \not \in T$. This means that there is no other variable of the form $y_{(\alpha,T')}$ with $T'=T$ that is assigned $1$ by $\tau$. Hence, 
	for each node $\alpha$ of $G$ which is intensional, $\alpha$ has outgoing edges $(\alpha,\beta_1),\ldots,(\alpha,\beta_n)$, and these are such that there exists a hyperedge of $\downc{D}{Q}{R(\bar t)}$ of the form $(\alpha,T)$, with $T=\{\beta_1,\ldots,\beta_n\}$. The latter, by definition of downward closure, implies that for each node $\alpha$ of $G$ which is intensional, $\alpha$ has outgoing edges $(\alpha,\beta_1),\ldots,(\alpha,\beta_n)$, and these are such that there exists a rule $\sigma \in \dep$ of the form $R_0(\bar x_0)\ \assign\ R_1(\bar x_1),\ldots,R_m(\bar x_m)$, with $m \ge n$, and a function $h : \bigcup_{i \in [n]} \bar x_i \rightarrow \ins{C}$, such that $R_0(h(\bar x_0))=\alpha$ and $\{R_1(h(\bar x_1)),\ldots,R_m(h(\bar x_m))\} = \{\beta_1,\ldots,\beta_n\}$. Hence, $G$ is a compressed DAG of $R(\bar t)$ w.r.t.~$D$ and $\dep$, as needed.
\end{proof}

Proposition~\ref{pro:why-provenance-sat} immediately follows from Lemma~\ref{lem:sat-solutions}.
\end{proof}

\subsection{Implementation Details}
In this section, we expand on the discussion of our implementation presented in the main body of the paper. In what follows, fix a Datalog query $Q = (\dep,R)$, a database $D$ over $\esch{\dep}$, and a tuple $\bar t \in \adom{D}^{\arity{R}}$.

\medskip
\noindent \textbf{Constructing the Downward Closure.}
Recall that the construction of $\phi_{(\bar t,D,Q)}$ relies on the downward closure of $R(\bar t)$ w.r.t.~$D$ and $\dep$. It turns out that the hyperedges of the downward closure can be computed by executing a slightly modified Datalog query $Q_{\downarrow}$ over a slightly modified database $D_{\downarrow}$. In other words, the answers to $Q_{\downarrow}$ over $D_{\downarrow}$ coincide with the hyperedges of the downward closure.
For this, we are going to adopt a slight modification of an existing approach presented in~\cite{ElKM22}. In that paper, the authors were studying the problem of computing the why-provenance of Datalog queries w.r.t.\ \emph{standard} proof trees. However, except for the construction of $\downc{D}{Q}{R(\bar t)}$, their approach to compute the supports for general trees is fundamentally different from ours, since they employ existential rule-based engines rather than SAT solvers; it is not clear how their approach could be adapted for our purposes, as we require checking whether the underlying trees are unambiguous. Moreover, the approach of~\cite{ElKM22} computes the \emph{whole set} of supports all at once, while our approach based on SAT solvers allows to \emph{enumerate} supports, and thus allows the incremental construction of the why-provenance.
Nonetheless, the construction of $\downc{D}{Q}{R(\bar t)}$ is common to both approaches, and thus we borrow the techniques of~\cite{ElKM22} for this task, which we briefly recall in the following.

The main idea is to employ an existing Datalog engine to compute the answers of a query $Q_{\downarrow}$ obtained from $Q$ over a slight modification $D_{\downarrow}$ of $D$; the answers in $Q_{\downarrow}(D_{\downarrow})$ will coincide with all the hyperedges of $\downc{D}{Q}{R(\bar t)}$. 
To this end, the rules of $Q_{\downarrow}$ contain all the rules in $\dep$, which will be in charge of deriving all nodes of $\gri{D}{\dep}$, plus an additional set of rules that will be in charge of using such nodes to construct all the hyperedges of $\downc{D}{Q}{R(\bar t)}$. Formally, let $\omega$ be the maximum arity of predicates in $\dep$, and $b$ the maximum number of atoms in the body of a rule of $\dep$. We define two new predicates: 
\begin{itemize}
	\item $\curnode$ of arity $\omega+1$ that stores the current node of $\downc{D}{Q}{R(\bar t)}$ being processed during the evaluation of $Q_{\downarrow}$.
	\item $\hedge$ of arity $(\omega+1)+b\times(\omega+1)$, which stores the hyperedges being constructed during the evaluation of $Q_{\downarrow}$.
\end{itemize}
Furthermore, for an atom $\alpha = P(\bar u)$, we denote $\atomtotuple{\alpha}$ as the tuple of length $\omega+1$ of the form $c_P,\bar u,\star,\ldots,\star$, where $c_P$ and $\star$ are constants not in $D$. Intuitively, $\atomtotuple{\alpha}$ encodes the atom $\alpha$ as a tuple of fixed length. Finally, for an atom $\alpha$ and a sequence of atoms $\beta_1,\ldots,\beta_n$, with $n \le b$, we use $\tup{\alpha,\beta_1,\ldots,\beta_n}$ to denote the tuple of length $(\omega+1)+b \times (\omega+1)$ of the form $\tup{\alpha},\tup{\beta_1},\ldots,\tup{\beta_n},\star,\ldots,\star$.

We define the Datalog query $Q_{\downarrow} = (\dep',\hedge)$, where $\dep' = \dep \cup \dep''$, where, for each rule $\sigma \in \dep$ of the form $R_0(\bar x_0)\ \assign\ R_1(\bar x_1),\ldots,R_n(\bar x_n)$, $\dep''$ contains the rules
\begin{multline*}
\sigma_1\ =\ \hedge(\atomtotuple{R_0(\bar x_0),R_1(\bar x_1),\ldots,R_n(\bar x_n)})\ \assign\ \\ \curnode(\atomtotuple{R_0(\bar x_0)}), R_1(\bar x_1),\ldots,R_n(\bar x_n)
\end{multline*}
and, for each $i \in [n]$,
\begin{multline*}
\sigma^{(i)}_2\ =\ \curnode(\atomtotuple{R_i(\bar x_i)})\ \assign\ \\ \curnode(\atomtotuple{R_0(\bar x_0)}),R_1(\bar x_1),\ldots,R_n(\bar x_n).
\end{multline*}
Essentially, the rule $\sigma_1$ will construct a hyperedge $(\alpha,T)$ of $\downc{D}{Q}{R(\bar t)}$ whenever it is known that $\alpha$ is a node of $\downc{D}{Q}{R(\bar t)}$, and all the atoms in $T$ are used in $\gri{D}{\dep}$ to generate $\alpha$.
The rules of the form $\sigma^{(i)}_2$ are marking the new nodes as being part of $\downc{D}{Q}{\bar t}$.
%
Note that the rules of $\dep''$ contain constants, whereas, according to our definition, rules are constant-free. Nevertheless, all existing Datalog engines support rules with constants, and for the sake of keeping the discussion simple, we slightly abuse our definition of rules in this section. It is easy to adapt the above set of rules to a set of rules without constants, by adding some auxiliary facts to the database.


The database $D_{\downarrow}$ is $D \cup \{\curnode(\tup{R(\bar t)})\}$, which simply states that $R(\bar t)$ must be a node of $\downc{D}{Q}{R(\bar t)}$.
One can easily see that each tuple in $Q_{\downarrow}(D_{\downarrow})$ encodes a hyperedge of $\downc{D}{Q}{R(\bar t)}$, and thus, we can construct $\downc{D}{Q}{R(\bar t)}$ by simply asking $Q_{\downarrow}$ over $D_{\downarrow}$.

Let us note that the main differences between our definition of $Q_{\downarrow}$ and $D_{\downarrow}$ w.r.t.\ the ones of~\cite{ElKM22} is that we encode nodes and hyperedges of $\downc{D}{Q}{R(\bar t)}$ as tuples, which allows us to employ the same Datalog engine that is used to answer the original query $Q$, rather than using external engines supporting more expressive languages such as existential rules.

\medskip
\noindent \textbf{Constructing the Formula.} 
Regarding the construction of the formula $\phi_{(\bar t,D,Q)}$, as already discussed before, for efficiency reasons, we consider a different encoding of the subformula $\phi_\mi{acyclic}$. Rather than using the transitive closure, we employ the technique of vertex elimination~\cite{RankoohR22}. The advantage of this approach is that it requires a number of Boolean variables for the encoding of $\phi_\mi{acyclic}$ which is of the order of $O(n \times \delta)$, where $n$ is the number of nodes of the underlying graph, and $\delta$ is the so-called \emph{elimination width} of the graph, which, roughly, is related to how connected the underlying graph is. Hence, we can avoid the costly construction of quadratically many variables whenever the elimination width is low.

\medskip
\noindent \textbf{Incrementally Constructing the Why-Provenance.}
Recall that we are interested in the incremental computation of the why-provenance, which is more useful in practice than computing the whole set at once. To this end, we need a way to enumerate all the members of the why-provenance without repetitions. This is achieved by adapting a standard technique from the SAT literature for enumerating the satisfying assignments of a Boolean formula, called {\em blocking clause}.
We initially collect in a set $S$ all the facts of $D$ occurring in the downward closure of $R(\bar t)$ w.r.t.~$D$ and $\dep$. Then, after asking the SAT solver for an arbitrary satisfying assignment $\tau$ of $\phi_{(\bar t,D,Q)}$, we output the database $\db{\tau}$, and then construct the ``blocking'' clause
$
\vee_{\alpha \in S} \ell_\alpha,
$
where $\ell_\alpha = \neg x_\alpha$ if $\alpha \in \db{\tau}$, and $\ell_\alpha = x_\alpha$ otherwise. We then add this clause to the formula, which expresses that no other satisfying assignment $\tau'$ should give rise to the same member of the why-provenance.
This will exclude the previously computed explanations from the computation. We keep adding such blocking clauses each time we get a new member of the why-provenance until the formula is unsatisfiable.

{\footnotesize 
	\begingroup
	\setlength{\tabcolsep}{5pt} 
	\renewcommand{\arraystretch}{1.6} 
	\begin{figure*}[t]
		\centering
		\begin{tabular}{cc}
			\multicolumn{2}{c}{\includegraphics[width=135mm]{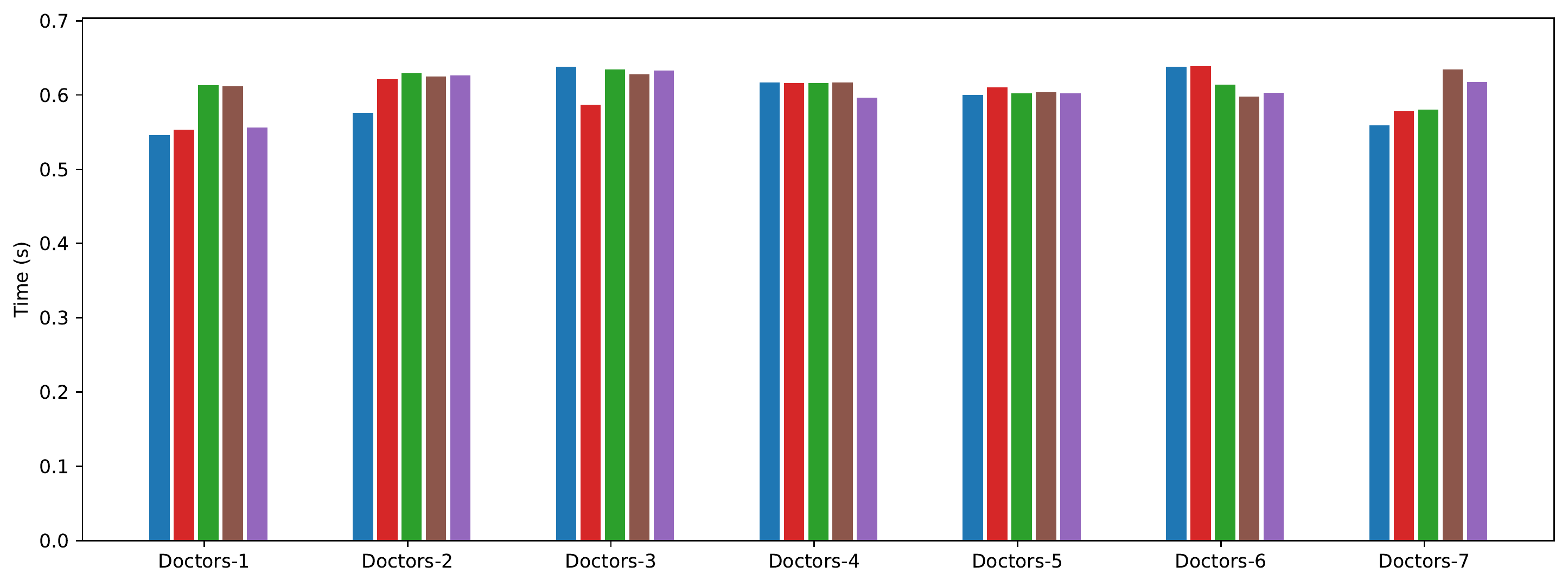}} \\
			\multicolumn{2}{c}{(a) $\mathsf{Doctors}$} \\[6pt]
			\includegraphics[width=65mm, height=53.2mm]{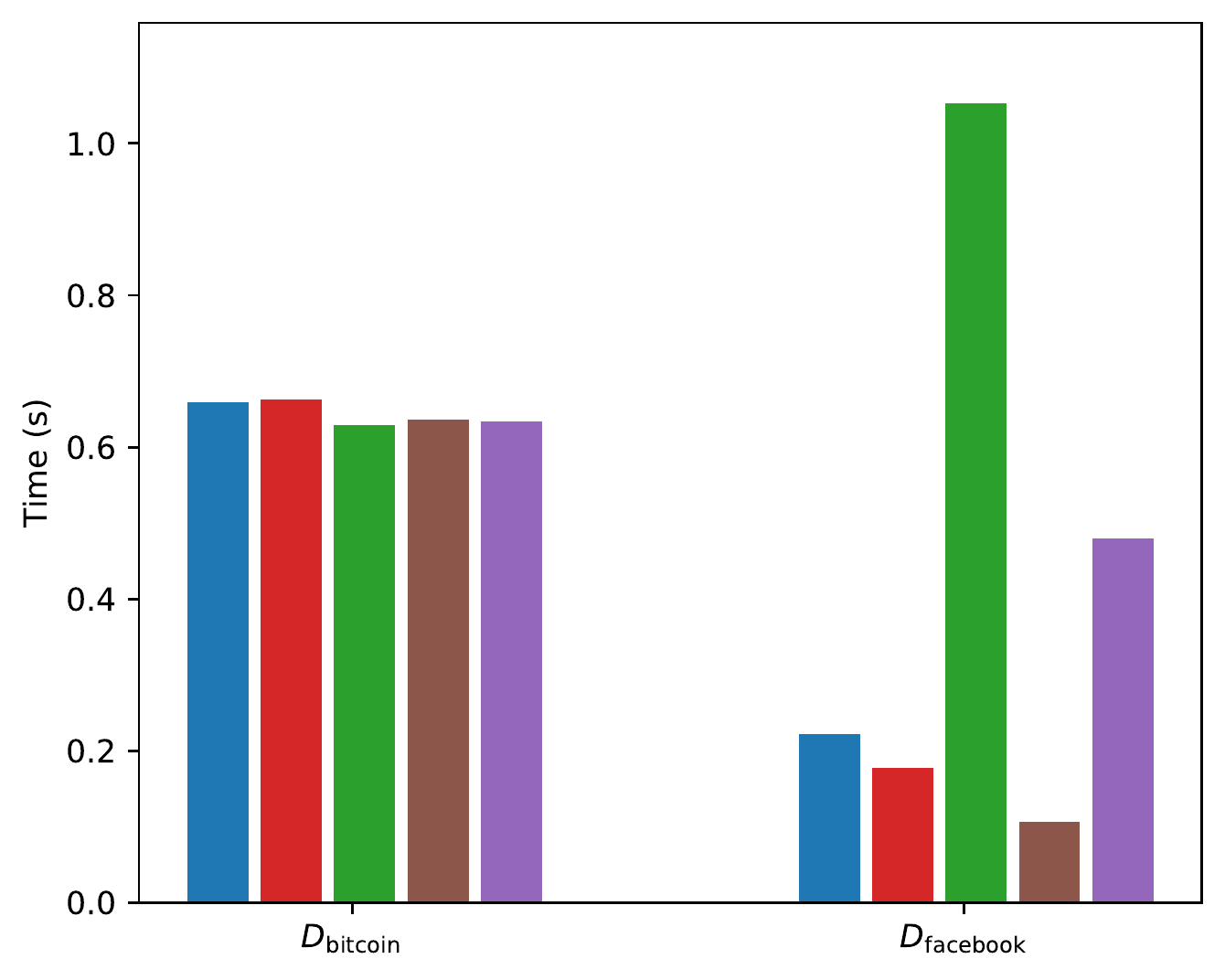} &   \includegraphics[width=65mm]{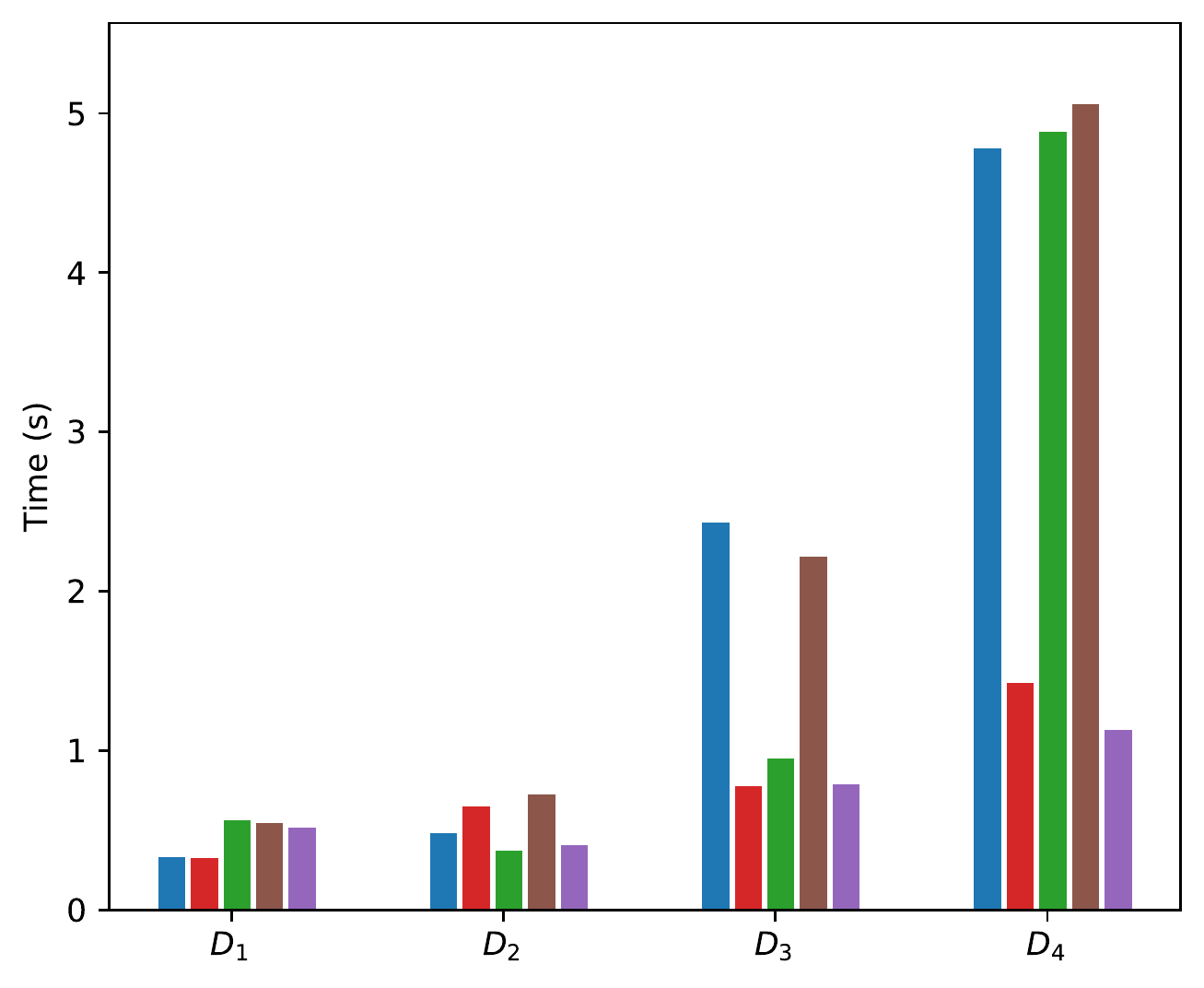} \\
			(b) $\mathsf{TransClosure}$ & (c) $\mathsf{Galen}$ \\[6pt]
			\includegraphics[width=65mm]{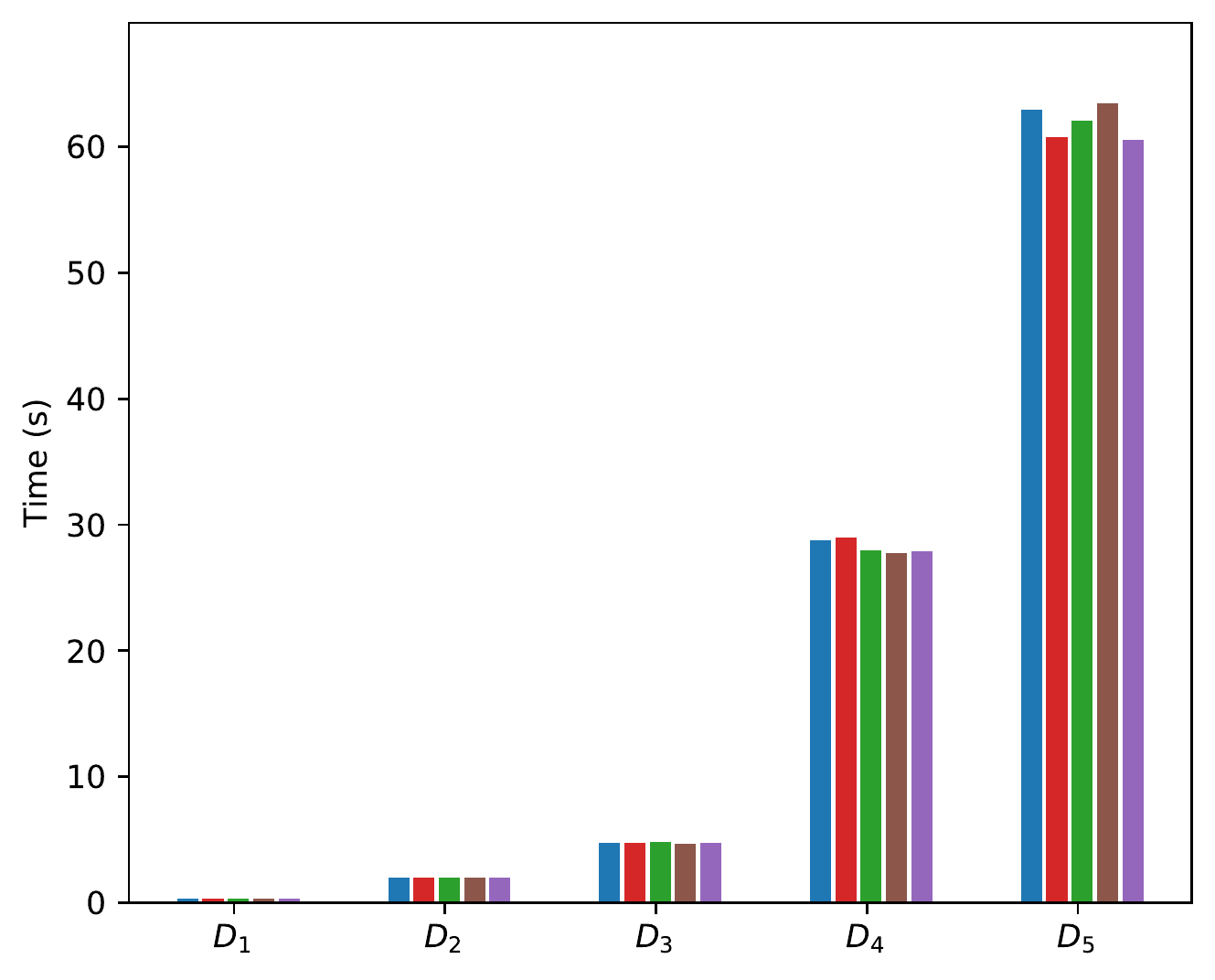} &   \includegraphics[width=65mm]{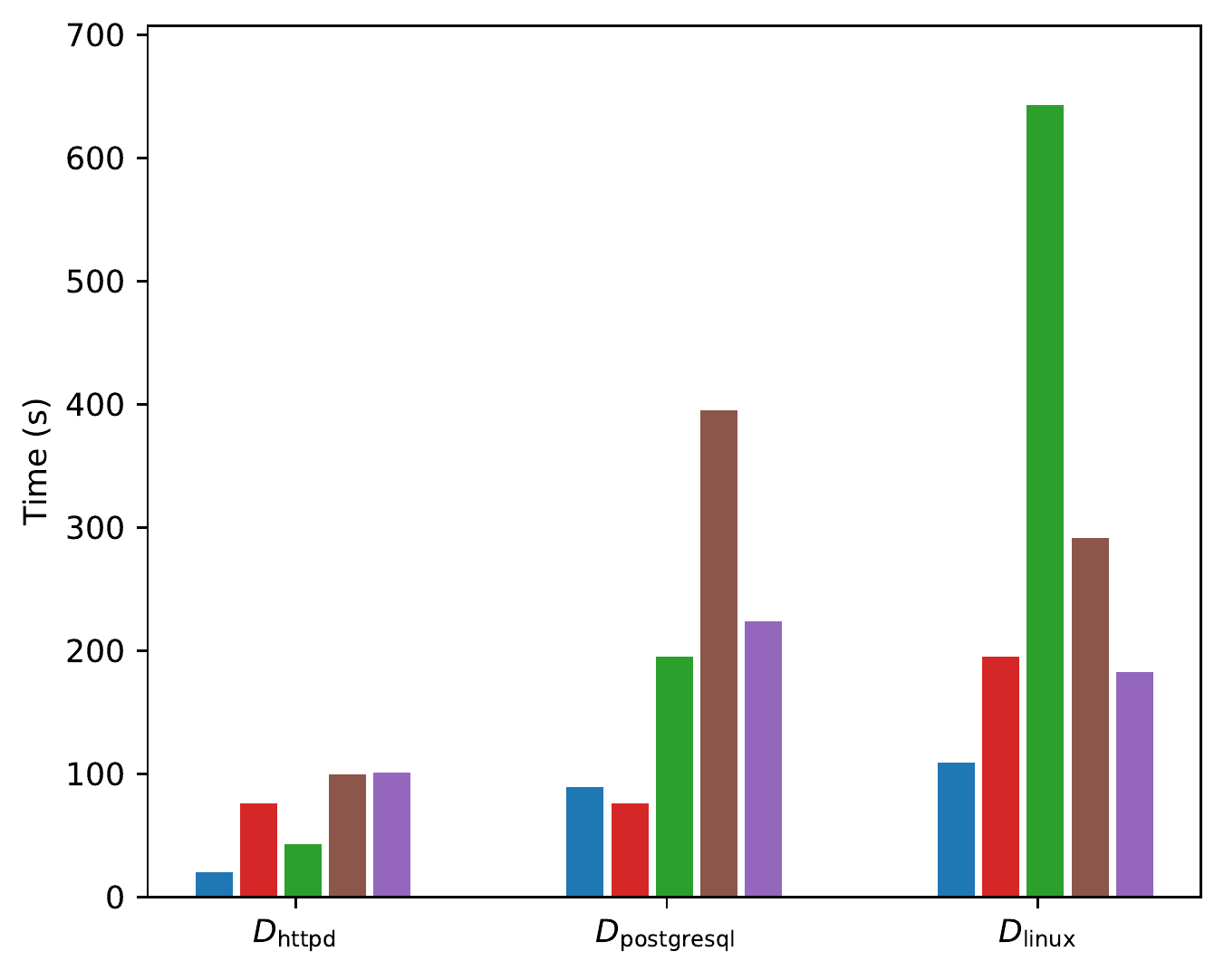} \\
			(d) $\mathsf{Andersen}$ & (e) $\mathsf{CSDA}$
		\end{tabular}
		\caption{Building the downward closure and the Boolean formula (all scenarios).}
		\label{fig:all-task1}
	\end{figure*}
	\endgroup
}

\subsection{Experimental Evaluation}
In this section, we provide further details on the performance of our SAT-based approach, by presenting the results of the experimental evaluation over all the scenarios we considered in the paper; we report again the $\mathsf{Andersen}$ scenario for the sake of completeness. Recall that in our experimental analysis we consider two main tasks separately: (1) construct the downward closure and the Boolean formula, and (2) incrementally compute the why-provenance using the SAT solver.

Concerning task 1, we report in Figure~\ref{fig:all-task1} one plot for each scenario we consider, where in each plot we report the total running time for each database of that scenario. Furthermore, for each plot, and each database considered therein, we have five bars, that correspond to the five randomly chosen tuples. Each such bar shows the time for building the downward closure plus the time for constructing the Boolean formula. To ease the presentation, we grouped the $\mathsf{Doctors}$-based scenarios in one plot (recall that all such scenarios share a single database).

We can see that in most of the scenarios, the running time is in the order of some seconds. This is especially true for the $\mathsf{TransClosure}$ and $\mathsf{Doctors}$-based scenarios, having the simplest queries, while for the $\mathsf{Galen}$ scenario, where the query is more complex, as it involves non-linear recursion, the time is slightly higher for the largest database $D_{4}$. The $\mathsf{Andersen}$ and $\mathsf{CSDA}$ scenarios are the most challenging, since they both contain very large databases. Moreover, although the databases in $\mathsf{Andersen}$ are smaller than those of $\mathsf{CSDA}$, the complexity of its query, which involves non-linear recursion, makes the running time of $\mathsf{Andersen}$ over its largest database (6.8M facts) comparable to $\mathsf{CSDA}$ with the larger database $D_\mathsf{httpd}$ (10M facts). Of course, for the much larger databases $D_\mathsf{postgresql}$ and $D_\mathsf{linux}$, the running time is much higher, going up to 6-7 minutes for some tuples.
Considering the size of the databases at hand, we believe the running times for these last scenarios are quite reasonable.
As already discussed in the main body of the paper, we observed that most of the time is spent in building the downward closure.

Concerning task 2, that is, the incremental computation of the why-provenance, we present in Figure~\ref{fig:all-task2} one plot for each scenario we consider, where in each plot of scenario $s$ we report, for each database of $s$, the times required to build an explanation, that is, the time between the current member of the why-provenance and the next one (this time is also known as the delay).
Each plot collects the delays of constructing the members of the why-provenance (up to a limit of 10K members or 5 minutes timeout) for each of the five randomly chosen tuples. We use box plots, where the bottom and the top borders of the box represent the first and third quartile, i.e., the delay under which 25\% and 75\% of all delays occur, respectively, and the orange line represents the median delay. Moreover, the bottom and the top whisker represent the minimum and maximum delay, respectively. All times are expressed in milliseconds and we use logarithmic scale. 
As we did for task 1, we grouped the $\mathsf{Doctors}$-based scenarios in one plot.

{\footnotesize 
	\begingroup
	\setlength{\tabcolsep}{5pt} 
	\renewcommand{\arraystretch}{1.6} 
	\begin{figure*}[t]
		\centering
		\begin{tabular}{cc}
			\multicolumn{2}{c}{\includegraphics[width=135mm]{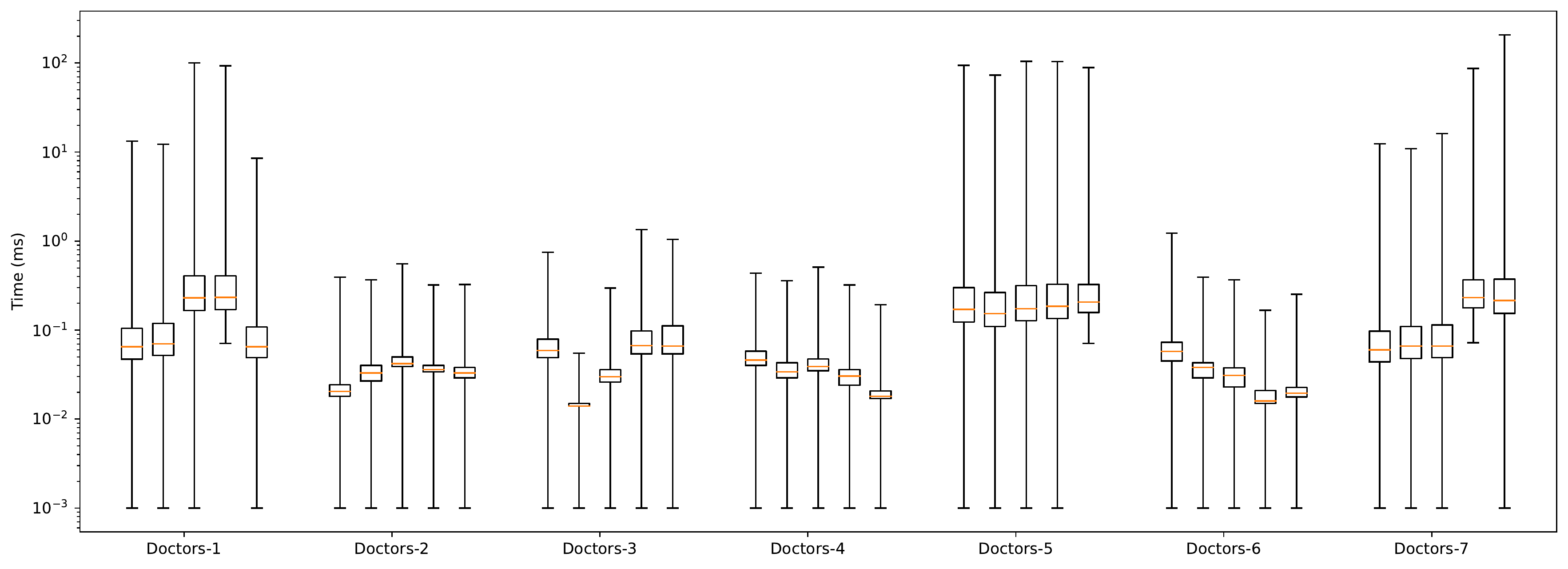}} \\
			\multicolumn{2}{c}{(a) $\mathsf{Doctors}$} \\[6pt]
			\includegraphics[width=65mm, height=53.2mm]{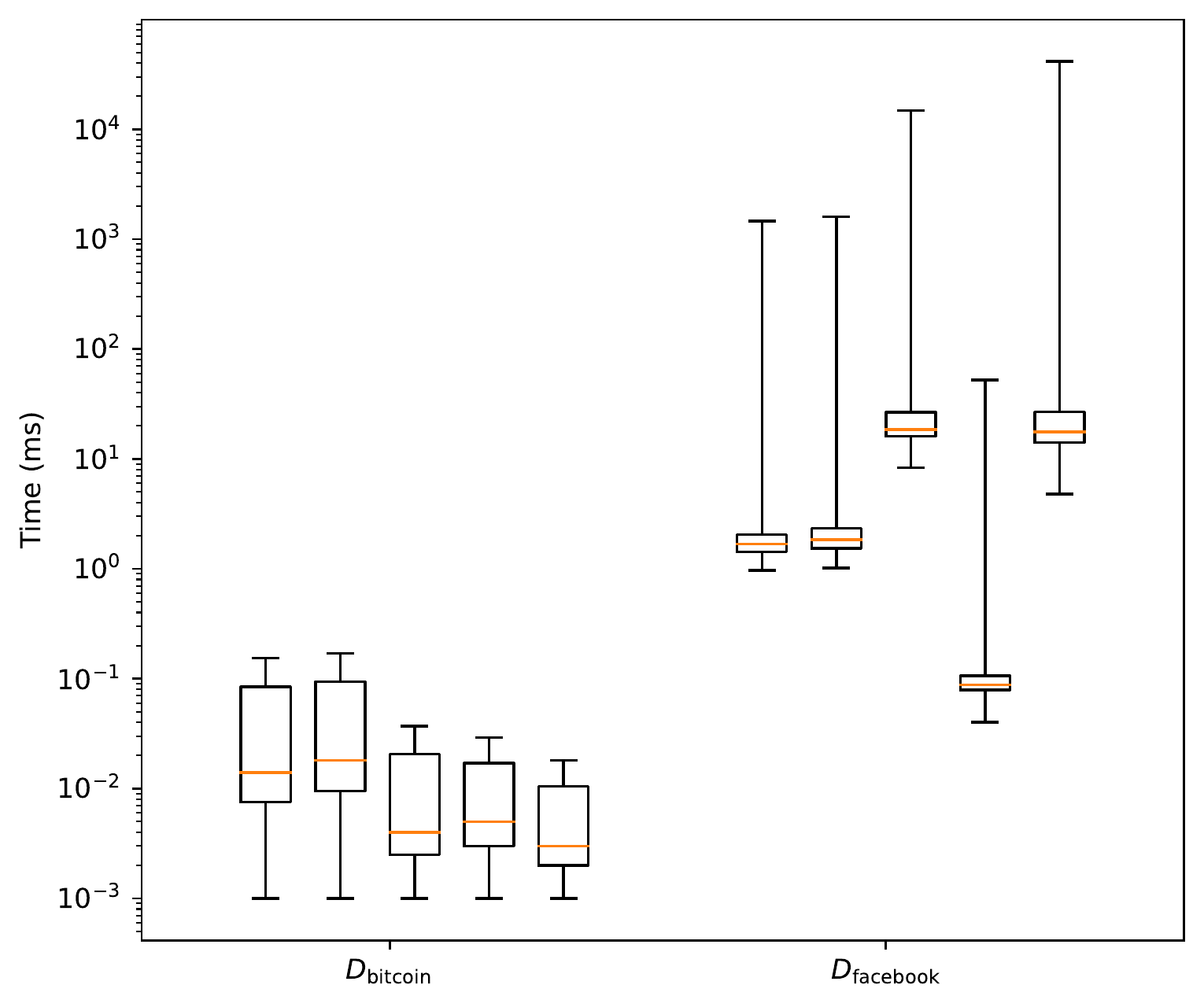} &   \includegraphics[width=65mm]{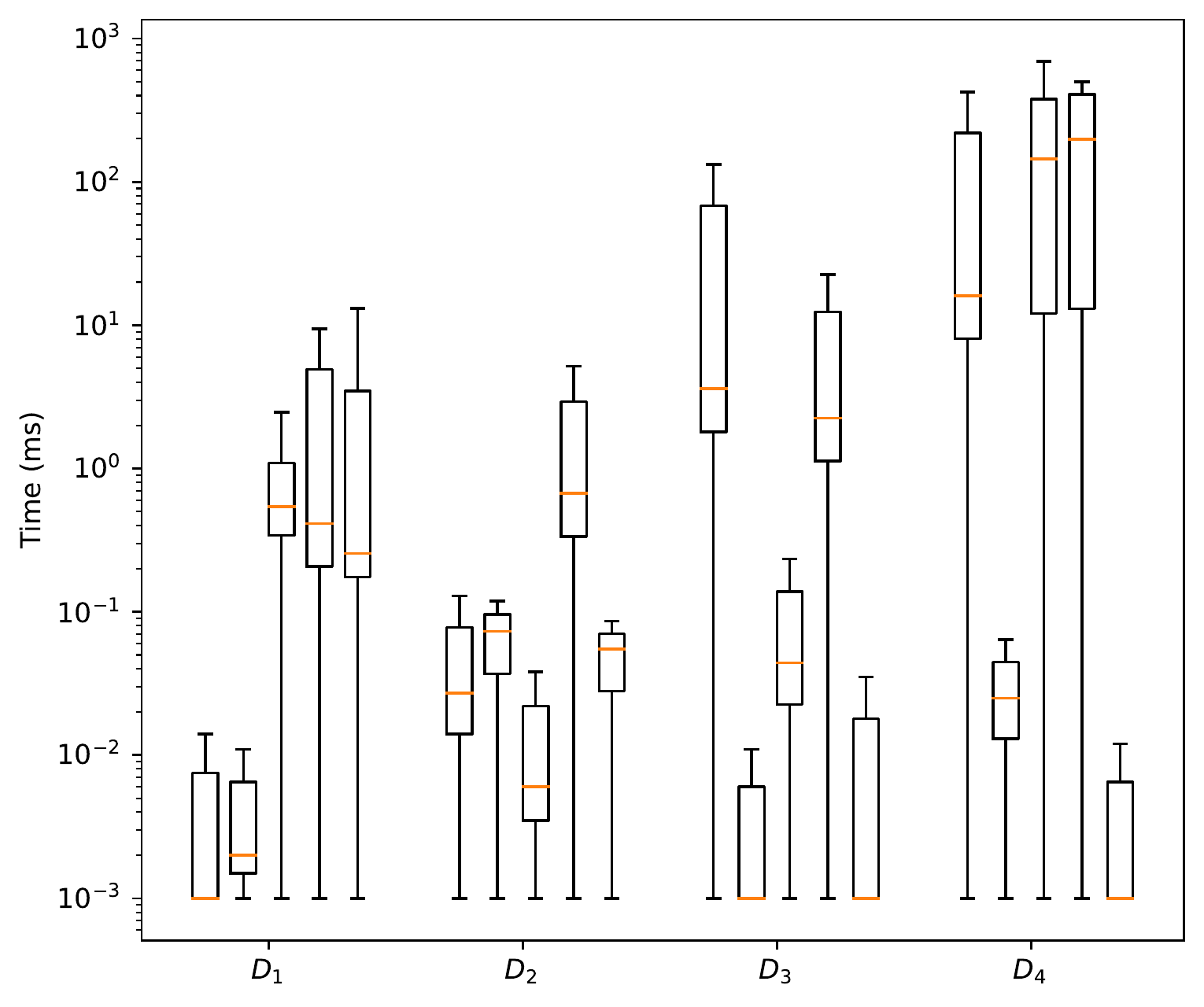} \\
			(b) $\mathsf{TransClosure}$ & (c) $\mathsf{Galen}$ \\[6pt]
			\includegraphics[width=65mm]{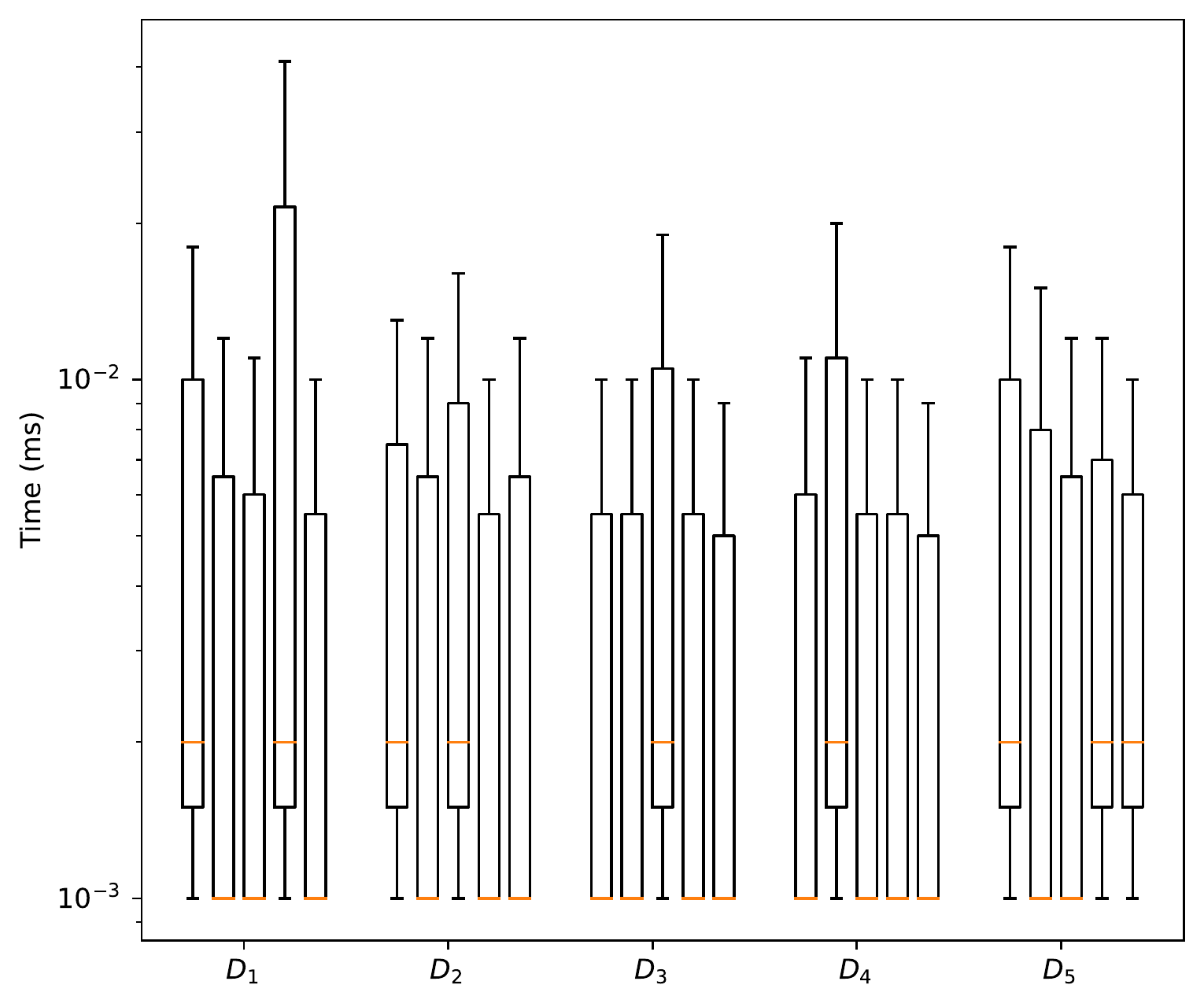} &   \includegraphics[width=65mm]{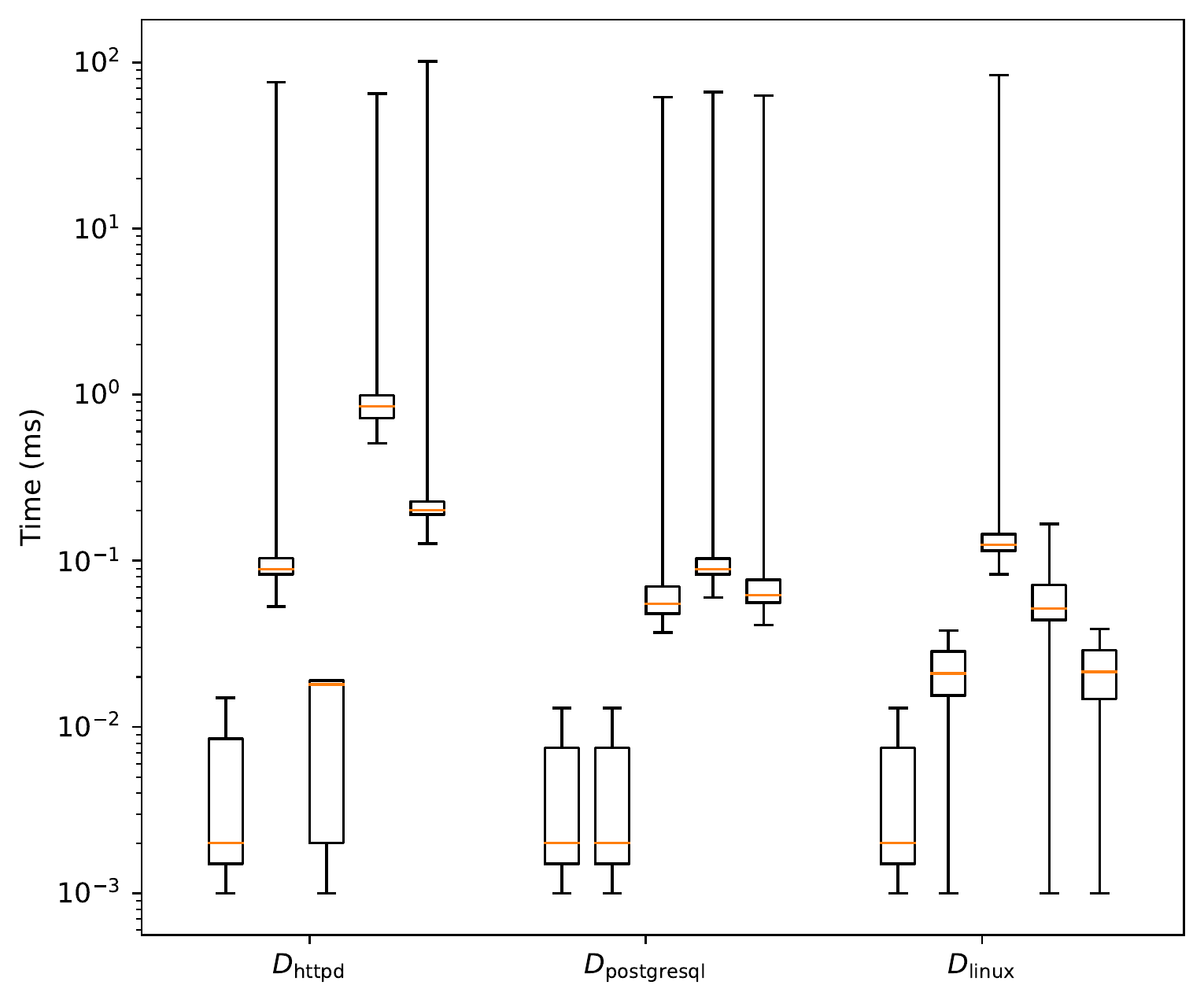} \\
			(d) $\mathsf{Andersen}$ & (e) $\mathsf{CSDA}$
		\end{tabular}
		\caption{Incremental computation of the why-provenance (all scenarios).}
		\label{fig:all-task2}
	\end{figure*}
	\endgroup
}

As observed in the main body of the paper, most of the delays are even lower than 1 millisecond, with the median in the order of microseconds. Therefore, once we have the Boolean formula in place, incrementally computing the members of the why-provenance is extremely fast. The worst case occurs in the $\mathsf{TransClosure}$ scenario, when considering the Facebook database, where we also have the only two cases where the construction of the supports exceeds the 5 minutes mark before being able to construct 10K supports, i.e., for the third and fifth tuple. Overall, in the $\mathsf{TransClosure}$ scenario w.r.t.~the Facebook database, the average delay is higher, and some supports require up to 10 seconds to be constructed. We believe that this has to do with the fact that the database $D_\text{facebook}$ encodes a graph which is highly connected, and thus the CNF formula, and in particular the formula $\phi_{\mi{acyclic}}$ encoding the acyclicity check, becomes quite large, and thus is much more demanding for the SAT solver. This was somehow expected since the vertex-elimination technique for checking acyclicity performs better the less connected the underlying graph is. We have confirmed this by running some other experiments with databases taken from~\cite{FanMK22}, which contain highly connected, synthetic graphs. In this case, although constructing the downward closure is very efficient (in the order of seconds), the construction of the formula $\phi_\mi{acyclic}$ goes out of memory. Hence, we expect that in applications where highly connected input graphs are common, a different approach for checking acyclicity in a CNF formula would be required. Nonetheless, we can safely conclude that in most cases, computing the members of the why-provenance can be done very efficiently.

\subsection{Comparative Evaluation}
As mentioned in Section~\ref{sec:conclusions}, we have performed a preliminary comparison with~\cite{ElKM22}. We conclude this section by discussing  the details of this comparative evaluation.
Let us first clarify that our implementation deals with a different problem. For a Datalog query $Q = (\dep,R)$, a database $D$ over $\esch{\dep}$, and a tuple $\bar t \in \adom{D}^{\arity{R}}$, the approach from~\cite{ElKM22} has been designed and evaluated for building the whole set $\why{\bar t}{D}{Q}$, whereas our approach has been designed and evaluated for incrementally computing $\unwhy{\bar t}{D}{Q}$. However, there is a setting where a reasonable comparison can be performed, which will provide some insights for the two approaches.
This is when the Datalog query $Q$ is both linear and non-recursive in which case the sets $\why{\bar t}{D}{Q}$ and $\unwhy{\bar t}{D}{Q}$ coincide since a proof tree of $R(\bar t)$ w.r.t.~$D$ and $\dep$ is trivially unambiguous.
Therefore, towards a fair comparison, we are going to consider the scenarios $\mathsf{Doctors}\text{-}i$, for $i \in [7]$, which consist of a Datalog query that is linear and non-recursive, and consider the end-to-end runtime of our approach (not the delays) without, of course, setting a limit on the number of members of why-provenance to build, or on the total runtime.

The comparison is shown in Figure~\ref{fig:comparison-time}. For each scenario, we present the runtime for all five randomly chosen tuples for our approach (in blue) and the approach of~\cite{ElKM22} (in red); if a bar is missing for a certain tuple, then the execution ran out of memory.
We observe that for the simple scenarios the two approaches are comparable in the order of a second.
Now, concerning the demanding scenarios, i.e., $\mathsf{Doctors}\text{-}i$ for $i \in \{1,5,7\}$, we observe that our approach is, in general, faster. Observe also that for some of the most demanding cases, the approach of~\cite{ElKM22} runs out of memory.
We believe that the latter is due to the use of the rule engine VLog, which is intended for materialization-based reasoning with existential rules, whereas our approach relies on a Datalog engine (in particular, DLV), and thus, exploiting all the optimizations that are typically employed for evaluating a Datalog query. For example, the technique of {\em magic-set rewriting}, implemented by DLV, can greatly reduce the memory usage by building much fewer facts during the evaluation of the rules; see, e.g.,~\cite{LAAC+19}.

{\footnotesize 
	\begingroup
	\setlength{\tabcolsep}{5pt} 
	\renewcommand{\arraystretch}{1.6} 
\begin{figure*}[t]
	\centering
	\includegraphics[width=135mm]{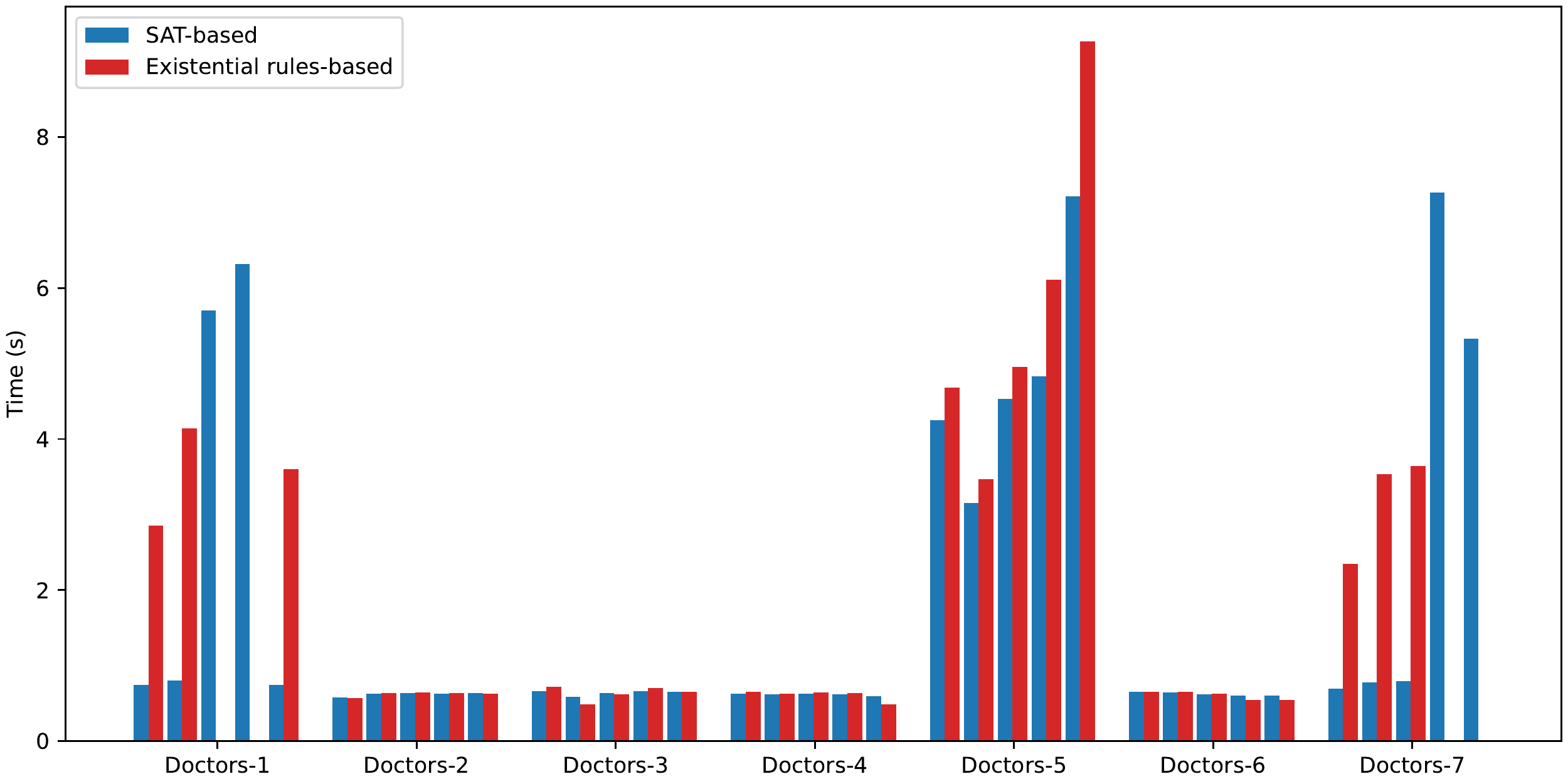}
	\caption{Comparison of the end-to-end computation of the why-provenance.}
	\label{fig:comparison-time}
\end{figure*}
\endgroup
}

\end{document}